\newcommand{\Deth}{\De_\varth}
\renewcommand{\Re}{\mathrm{Re}}
\renewcommand{\Im}{\mathrm{Im}}
\title{Mode stability results for the Teukolsky equations \\ on Kerr-anti-de Sitter spacetimes}
\author[1]{Olivier Graf\thanks{ograf@uni-muenster.de}}
\author[1,2]{Gustav Holzegel\thanks{gholzegel@uni-muenster.de}}
\affil[1]{\small Westf\"alische Wilhelms-Universit\"at M\"unster,
Mathematisches~Institut, Einsteinstrasse~62~48149~M\"unster,~Bundesrepublik~Deutschland \vskip.2pc \ }
\affil[2]{\small Imperial College London,
Department of Mathematics,
South~Kensington~Campus,~London~SW7~2AZ,~United~Kingdom}
\begin{document}
\maketitle
\begin{abstract}
  We prove that there are no non-stationary (with respect to the Hawking vector field), real mode solutions to the Teukolsky equations on all $(3+1)$-dimensional subextremal Kerr-anti-de Sitter spacetimes. We further prove that stationary solutions do not exist if the black hole parameters satisfy the Hawking-Reall bound and $\left|a\sqrt{-\Lambda}\right|<\frac{\sqrt{3}}{20}$. We conclude with the statement of mode stability which preludes boundedness and decay estimates for general solutions which will be proven in a separate paper. Our boundary conditions are the standard ones which follow from fixing the conformal class of the metric at infinity and lead to a coupling of the two Teukolsky equations. The proof relies on combining the Teukolsky-Starobinsky identities with the coupled boundary conditions. In the stationary case the proof exploits elliptic estimates which fail if the Hawking-Reall bound is violated. This is consistent with the superradiant instabilities expected in that regime.  
\end{abstract}

\section{Introduction and main results}\label{sec:intro}
The \emph{Kerr-anti-de Sitter spacetimes} (Kerr-adS) are solutions to the \emph{Einstein equations}\footnote{In this article we set the speed of light $c$ and the gravitational constant $G$ to $1$.}
\begin{align}\label{eq:EEcosmo}
  \RRRic(g) -\half \mathrm{R}(g) + \Lambda g = 0,
\end{align}
with negative cosmological constant $\La=:-3k^2<0$, given by
\begin{align*}
  \begin{aligned}
    g_{\mathrm{KadS}} & := -\frac{\Delta}{\Xi^2\Si}\le(\d t - a \sin^2\varth\d\varphi\ri)^2 + \frac{\Si}{\Delta}\d r^2 + \frac{\Si}{\Delta_\varth}\d\varth^2 + \frac{\Delta_\varth}{\Xi^2\Si}\sin^2\varth\le(a\d t - (r^2+a^2)\d\varphi\ri)^2,
  \end{aligned}
\end{align*}
where
\begin{align*}
  \Si & := r^2+a^2\cos^2\varth, & \Delta & := (r^2+a^2)\le(1+k^2r^2\ri) - 2Mr, \\
  \Delta_\varth & := 1 - a^2k^2\cos^2\varth, &  \Xi & := 1-a^2k^2.
\end{align*}
The following parameters define Kerr-adS spacetimes containing a non-degenerate subextremal black hole region.
\begin{definition}\label{def:admissibleKadSparameters}
  The parameters $(M,a,k)$ belong to the set of \emph{admissible subextremal Kerr-adS black hole parameters} if\,\footnote{Up to making the change of variable $\varphi\to-\varphi$, we can always assume that $a\geq0$. The constant $k$ can be chosen to be positive by definition.} 
  \begin{align*}
    M,k & >0, & a & \geq 0, & ak & < 1,
  \end{align*}
  and if the set $\{r\geq 0~:~\De(r)=0\}$ has two distinct elements.\footnote{This last requirement generalises the well-known sub-extremality condition $|a|<M$ in the asymptotically flat $k=0$ case.} The maximum element $r_+(M,a,k)>0$ is called the \emph{black hole radius}.
\end{definition}

We consider the metric $g_{\mathrm{KadS}}$ with parameters as above on the manifold $\MM_{\mathrm{KadS}} :=  \RRR_t\times(r_+,+\infty)_r\times\SSS^2_{\varth,\varphi}$. It is well-known that $(\MM_{\mathrm{KadS}},g_{\mathrm{KadS}})$ can be smoothly extended to a larger manifold containing a black hole region and with $\MM_{\mathrm{KadS}}$ embedding isometrically as its \emph{domain of outer communications}. In particular, we can attach two null boundaries $\mathcal{H}^{-}$ and $\HH^+$ to $\MM_{\mathrm{KadS}}$, such that $\mathcal{H}^-\cup\mathcal{H}^+=\{r=r_+\}$, and which are known as the past and future \emph{event horizon}. See Figure \ref{fig:integrationtransport} below.\\

%which corresponds to the \emph{domain of outer communications} of the maximally extended Kerr-AdS metric 

%In this article, we consider the \emph{domain of outer communications} of a Kerr-adS black hole $\MM_{\mathrm{KadS}} :=  \RRR_t\times(r_+,+\infty)_r\times\SSS^2_{\varth,\varphi}$ -- to which we can smoothly attach an \emph{event horizon} null boundary $r=r_+$.\\

Unlike their $\Lambda>0$ and $\Lambda=0$ counterparts, whose exterior stability properties have been intensely studied over the past two decades~\cite{Vas13,Dya16,Hin.Vas18,Sch16,Mav21,Mav21a,Fan21,Fan22} and~\cite{Daf.Hol.Rod19,And.Bac.Blu.Ma19, Haf.Hin.Vas21,Daf.Hol.Rod.Tay21, Kla.Sze20,Kla.Sze21}, the stability properties of Kerr-adS spacetimes have remained more elusive. A first distinct feature of the $\Lambda<0$ case is that stability is to be understood in the context of an \emph{initial boundary value problem} for the Einstein equations, for which \emph{boundary conditions} have to be imposed at the conformal infinity of the spacetime (see~\cite{Fri95,Enc.Kam19}). It is expected that stability will strongly depend on the choice of boundary conditions (see~\cite{Hol.Luk.Smu.War20}). A second distinct feature is that the nature of the (in)stability properties of the Kerr-adS black hole will crucially be tied to conditions on its parameters (the mass $M$, the specific angular momentum $a$ and the cosmological scale $k$).\\

This rich dynamical behaviour can already be illustrated in the context of the so-called ``toy-stability problem'', which is the problem of understanding the asymptotic properties of solutions to the conformal covariant wave equation on Kerr-adS\footnote{Around the Kerr-adS solutions, the Einstein equations~\eqref{eq:EEcosmo} are approximated by conformal wave-type equations called \emph{Teukolsky equations}. A first model problem to understand the stability of the Kerr-adS solutions is to understand the conformal wave equation~\eqref{eq:confwave} which is a simplification of the Teukolsky equations. See Section~\ref{sec:introTeuk} for further discussions.}
\begin{align}
  \label{eq:confwave}
  \Box_{g_{\mathrm{KadS}}} \psi + 2k^2\psi =0.
\end{align}
 % In the anti-de Sitter limiting case $M=a=0$, there exists periodic-in-time solutions with Dirichlet boundary conditions~\cite{?}, while all solutions with (optimally) dissipative boundary conditions have a strong inverse polynomial decay~\cite{Hol.Luk.Smu.War20}.\footnote{The full non-linear (in)stability of the anti-de Sitter space for Einstein equations is a very active open question. See~\cite{Biz.Ros11} for numerical simulations revealing a turbulent instability mechanism,~\cite{Mos18,Mos20} for recent breakthrough results establishing the instability of anti-de Sitter space in the context of the spherically symmetric Einstein-Vlasov and Einstein-scalar field system, and~\cite{Cha.Smu24} for the construction of periodic solutions to non-linear wave equations on anti-de Sitter space.}
For Kerr-adS spacetimes satisfying the so-called \emph{Hawking-Reall bound}~\cite{Haw.Rea99}
\begin{align}
  \label{est:HRboundoriginal}
  a \leq k r_+^2,
\end{align}
all solutions with Dirichlet boundary conditions decay inverse logarithmically in time~\cite{Hol.Smu13} and this decay rate is sharp for general solutions \cite{Hol.Smu14} (see also~\cite{Gan14}). This \emph{weak decay} is due to a stable trapping mechanism caused by the reflecting Dirichlet boundary conditions. On the other hand, if the Hawking-Reall bound is violated, \cite{Dol17} established the existence of periodic and exponentially growing solutions. Geometrically, the Hawking-Reall bound~\eqref{est:HRboundoriginal} guarantees the existence of a globally causal Killing vector field, the absence of which allows for so-called \emph{superradiant instabilities}.\footnote{Superradiant instabilities are absent in the asymptotically flat case but appear for instance for the \emph{massive} wave equation on Kerr even with $\Lambda=0$, the mass acting as a confinement mechanism just as the reflecting Dirichlet boundary conditions do in the Kerr-adS case (see~\cite{Shl14}). For the full linearised Einstein equations on Kerr-adS, the existence of superradiant instabilities have been obtained by a virial-type argument in~\cite{Gre.Hol.Ish.Wal16} using the so-called \emph{canonical energy}.} Note that if the Dirichlet conditions are replaced by optimally dissipative boundary conditions, strong inverse polynomial decay is expected in the Hawking-Reall regime (see~\cite{Hol.Luk.Smu.War20} in the anti-de Sitter case $M=a=0$). The exponential growth or the (too) slow decay of solutions with Dirichlet conditions suggested the following instability conjecture.\footnote{The (in)stability of the anti-de Sitter limiting space $M=a=0$ with Dirichlet-type conditions is still an open question. See~\cite{Biz.Ros11} for numerical simulations revealing a turbulent instability mechanism in the context of the Einstein-scalar field system,~\cite{Mos20,Mos23} for recent breakthrough results establishing the instability of anti-de Sitter space in the context of the spherically symmetric Einstein-Vlasov system, and~\cite{Cha.Smu24} for the construction of periodic solutions to non-linear wave equations on anti-de Sitter space.}
\begin{conjecture*}[\cite{Hol.Smu13}]
  The Kerr-anti-de Sitter spaces are (non-linearly!) unstable solutions to the initial boundary value problem for Einstein equations~\eqref{eq:EEcosmo} with Dirichlet-type boundary conditions.  
\end{conjecture*}
% Finally, if the Dirichlet boundary conditions are replaced by (optimally) dissipative conditions, one should expect strong inverse polynomial decay from the results of \cite{Hol.Luk.Smu.War20}.\todo{True even for superradiant KadS?}
%\todo{say something about non-linear here}

The present paper initiates a program to address the above conjecture and extend the results for the toy-stability problem to the actual Einstein equations near the Kerr-adS family. Our first objective is to establish the slow logarithmic decay rate for the linear problem to provide further support to the above conjecture: \\

\emph{Prove that Kerr-adS spacetimes satisfying the Hawking-Reall bound~\eqref{est:HRboundoriginal} are linearly stable solutions of the Einstein equations~\eqref{eq:EEcosmo} -- \emph{i.e.} perturbations decay inverse logarithmic in time (and generally not better) to a linearised Kerr-adS spacetime -- if Dirichlet-type conditions are imposed on the linear perturbation of the metric.}\\

The main theorems of the present paper (see Theorem~\ref{thm:main} and its generalisation Theorem~\ref{thm:fullmode}) are a prelude to the above objective: we prove mode stability results for the so-called \emph{Teukolsky equations} on Kerr-adS spacetimes. The rest of the introduction (Section~\ref{sec:intro}) is organised as follows.
\begin{itemize}
\item In Section~\ref{sec:introTeuk}, we introduce the Teukolsky equations and discuss their relevancy for the study of the asymptotic stability of Kerr-adS spacetimes. We set up the initial boundary value problem associated the Teukolsky equations on Kerr-adS spacetimes (see Theorem~\ref{thm:wellposedness}).
\item In Section~\ref{sec:intromodestab}, we introduce the notion of \emph{mode} solutions to the Teukolsky equations and discuss their relations to linear stability. We provide a full version of the main theorem of this paper (Theorem~\ref{thm:main}), which ensures that there does not exist mode solutions to the Teukolsky equations with \emph{real} frequencies, except for a potential set of specific frequencies which depends on the Kerr-adS black hole parameters. The proof of Theorem~\ref{thm:main} constitutes the core of this paper (Sections~\ref{sec:radang} to~\ref{sec:stat}).
\item In Section~\ref{sec:introfullmode}, we use the real mode stability of Theorem~\ref{thm:main} to prove a \emph{full} mode stability result (see Theorem~\ref{thm:fullmode}). This relies on a continuity argument together with the full mode stability of the Teukolsky equations on specific Schwarzschild-anti-de Sitter spacetimes -- \emph{i.e.} when $a=0$ --, see Theorem~\ref{thm:boundedSadS}. The proof of that result uses the \emph{Chandrasekhar transformation theory} between solutions of the Teukolsky equations and solutions of the so-called \emph{Regge-Wheeler} equations.
\item In Section~\ref{sec:overview} we provide an overview of the proof of Theorem~\ref{thm:main} and highlight the differences with the proof of the corresponding mode stability results in the asymptotically flat, Kerr case.
\item Finally, in the last Sections~\ref{sec:related},~\ref{sec:finalcomments}, we discuss related results in the mathematical and physics literature and how we will use the results of this paper for our program. 
\end{itemize}
The bulk of the paper (Sections~\ref{sec:radang} to~\ref{sec:stat}) is devoted to the proof of Theorem~\ref{thm:main} and is organised as follows.
\begin{itemize}
\item Section~\ref{sec:radang} is dedicated to the separability properties of the Teukolsky equations into ODEs.
\item In Section~\ref{sec:TStrafos} we introduce the so-called \emph{Teukolsky-Starobinksy transformations} which are symmetries of the respective separated Teukolsky ODEs.
\item In Section~\ref{sec:TSconslaws}, we use the Teukolsky-Starobinsky transformations to derive \emph{conservation laws}. These conservation laws rule out the existence of \emph{non-stationary} mode solutions to the Teukolsky equations.
\item In Section~\ref{sec:stat}, we further rule out the existence of stationary mode solutions using elliptic estimates for the ODEs.
\item Finally, in Appendix~\ref{sec:Robin}, we provide an alternative proof of the non-existence of non-stationary modes obtained in Section~\ref{sec:TSconslaws}. This uses the derivation of Robin boundary conditions for mode solutions to the the Teukolsky initial boundary value problem.
\end{itemize}

%{\bf Structure of the paper.} Before further discussions, let us give a brief outline of the structure of this paper. 

\subsection{The Teukolsky equations}\label{sec:introTeuk}
Unlike the toy-stability problem, the problem of linear stability for the Einstein equations immediately forces one to address the question of gauge, \emph{i.e.}~which coordinate system the Einstein equations should be linearised in. However, it is also well-known -- in complete analogy with the asymptotically flat case -- that there is a large \emph{gauge independent aspect} of linear stability that can be addressed. From the work of Teukolsky~\cite{Teu72} we know that two of the linearised null curvature components\footnote{The quantities $\al^{[+2]}, \al^{[-2]}$ are obtained from null components of the Riemann tensor defined with respect to the so-called \emph{algebraically special null frame}. They correspond to the respective horizontal tensors $\al,\alb$ in the Christodoulou-Klainerman framework and to the respective quantities $\Psi_0$ and $(r-ia\cos\varth)^4\Psi_4$ in the Newman-Penrose  formalism, where $\Psi_0,\Psi_4$ are defined with respect to the so-called \emph{Kinnersley tetrad}.} $\al^{[+2]}$ and $\al^{[-2]}$ are gauge invariant \emph{to linear order} and satisfy the following decoupled wave equations (see~\cite{Kha83} for their expression in the general $\La\neq0$ case)
\begin{subequations}\label{Teusys}
  \begin{align}\label{eq:Teukoriginal}
    \begin{aligned}
      0 & = \Box_{g_{\mathrm{KadS}}} \al^{[\pm2]} \pm 2\frac{\pr_r\De}{\Si}\pr_r\al^{[\pm2]} \pm 2\Xi\frac{\pr_r\De}{\Si\De}\le(a\pr_\varphi+(r^2+a^2)\pr_t\ri)\al^{[\pm2]} \\
        & \quad \mp 8\frac{\Xi r}{\Si}\pr_t\al^{[\pm2]} + \frac{\le(6k^2r^2 + (1\pm1) \pr^2_r\De\ri)}{\Si}\al^{[\pm2]} \\
        &  \quad \pm 4i\frac{\Xi\cot\varth}{\Si\De_\varth}(1-a^2k^2\cos(2\varth))\le(a\sin\varth\pr_t+\frac{\pr_\varphi}{\sin\varth}\ri)\al^{[\pm2]} \\
        & \quad \mp 8i\frac{\Xi}{\Si} a \cos\varth\pr_t\al^{[\pm2]} -\frac1\Si\le(2\De_\varth+2a^2k^2+4\frac{\Xi^2\cot^2\varth}{\De_\varth}\ri)\al^{[\pm2]}.
    \end{aligned}
  \end{align}
  The Teukolsky quantities $\al^{[\pm2]}$ are \emph{spin-weighted complex functions}, for which we give the following compact \emph{ad hoc} definition (see~\cite[Section 2.2.1]{Daf.Hol.Rod19a}). We refer the reader to~\cite{Daf.Hol.Rod19a,Mil24} for more geometric definitions. 
\begin{definition}\label{def:smoothspin}
  Let $s\in\RRR$. A \emph{smooth spin-$s$-weighted complex function on $\MM_{\mathrm{KadS}}$} is a function \\ $\al:\RRR_t\times(r_+,+\infty)_r\times(0,\pi)_\varth\times (2\pi\mathbb{T}^1)_\varphi \to \CCC$ such that $\al$ is smooth  on $\RRR_t\times(r_+,+\infty)_r\times (0,\pi)_\varth\times (2\pi\mathbb{T}^1)_\varphi$ and such that $e^{is\varphi}(\widetilde{Z}_1)^{k_1}(\widetilde{Z}_2)^{k_2}(\widetilde{Z}_3)^{k_3} \al$ and $e^{-is\varphi}(\widetilde{Z}_1)^{k_1}(\widetilde{Z}_2)^{k_2}(\widetilde{Z}_3)^{k_3} \al$ extend continuously at $\varth=0$ and $\varth=\pi$ respectively for all $k_1,k_2,k_3\in\mathbb{N}$, where
  \begin{align*}
    \widetilde{Z}_1 & := -\sin\varphi\pr_\varth + \cos\varphi\le(-is\csc\varth-\cot\varth\pr_\varphi\ri), & \widetilde{Z}_3 & := \pr_\varphi,\\
    \widetilde{Z}_2 & := -\cos\varphi\pr_\varth - \sin\varphi\le(-is\csc\varth-\cot\varth\pr_\varphi\ri).
  \end{align*}
\end{definition}

Another question to be addressed is what ``Dirichlet conditions'' should mean for the system of linearised Einstein equations. The most natural and fully geometric ``Dirichlet conditions'' for the Einstein equations in the context of the initial boundary value problem of~\cite{Fri95,Enc.Kam19} is to impose that the conformal class of the induced metric on the boundary at infinity coincides with that of the anti-de Sitter metric. This directly translates into the following boundary conditions for the null curvature components $\alpha^{[\pm 2]}$ %In summary,  this leads to studying the following linear system of wave equations on Kerr-adS for the dynamical variables $\al^{[+2]}$ and $\al^{[-2]}$:
\begin{align}
  \label{eq:defbdycond}
  \widetilde{\al}^{[+2]} - \big(\widetilde{\al}^{[-2]}\big)^\ast  & \xrightarrow{r\to+\infty} 0, & r^2\pr_r\widetilde{\al}^{[+2]} + \big(r^2\pr_r\widetilde{\al}^{[-2]}\big)^\ast & \xrightarrow{r\to+\infty} 0,
\end{align}
where $^\ast$ denotes the complex conjugation, and where $\widetilde{\al}^{[\pm2]}$ are the following renormalisations
%\end{subequations}
\begin{align*}%\label{eq:renormalisedalpha}
  \widetilde{\al}^{[+2]} & := \De^2(r^2+a^2)^{-3/2}\al^{[+2]}, & \widetilde{\al}^{[-2]} & := (r^2+a^2)^{-3/2}\al^{[-2]}. 
\end{align*}
We call~\eqref{eq:defbdycond} the \emph{conformal anti-de Sitter boundary conditions} for the Teukolsky quantities, and we emphasise that they \emph{couple} the dynamical variables $\al^{[+2]}$ and $\al^{[-2]}$ at the boundary. This situation differs from the asymptotically flat case where each spin-$\pm2$ quantity can be studied independently. We refer the reader to~\cite[Section 6]{Hol.Luk.Smu.War20} for the details on the derivation of~\eqref{eq:defbdycond}. Moreover, we point out that the boundary conditions~\eqref{eq:defbdycond} are also the most natural ones in the context of metric perturbation using Hertz maps~\cite{Dia.Rea.San09,Dia.San13}, which is expected given the duality between the two approaches~\cite{Wal78}.\\%\footnote{In the body of this article, we show that the conditions~\eqref{eq:defbdycond} yield the same Robin boundary conditions as the ones of~\cite{Dia.San13} for the separated radial Teukolsky quantities. See Proposition~\ref{prop:RobinBC}.}\\

One usually restricts to study metric perturbations which are \emph{regular at the event horizon}. For the Teukolsky quantities $\al^{[\pm2]}$ this imposes the following regularity conditions.
\begin{definition}\label{def:reghor}
  Let $\al^{[\pm 2]}$ be two smooth spin-$\pm2$-weighted quantities. Let $r^\ast$ be the \emph{tortoise coordinate} given by
  \begin{align*}
    \frac{\d r^\ast}{\d r} & = \frac{r^2+a^2}{\De}, & r^\ast(r=+\infty) & = \frac{\pi}{2},
  \end{align*}
  and let $t^\ast$ be the regular time function given by\,\footnote{Any other regular time function is equally acceptable. See also the presentation in~\cite[Section 2]{Hol.Smu13}.}
  \begin{align*}
    t^\ast & := t + r^\ast - \frac{1}{k}\arctan(k r).
  \end{align*}
  We say that $\al^{[\pm 2]}$ are \emph{regular at the event horizon in the future of $t_0^\ast \in \RRR $} if, for all $t^{\ast}_1\geq t^\ast_0$, we have\footnote{The regularity assumptions~\eqref{eq:defreghor} can be deduced from the fact that the vector fields $L,\De^{-1}\Lb$ are regular at the horizon and that regular metric perturbations give rise to regular spacetime curvature tensors.}
  \begin{align}
    \label{eq:defreghor}
    L^p\le(\De^{-1}\Lb\ri)^q\le(\widetilde{\al}^{[+2]}\ri)\bigg|_{t^\ast=t^\ast_1} & = O_{r\to r_+}(1), & L^p\le(\De^{-1}\Lb\ri)^q\le(\De^{-2}\widetilde{\al}^{[-2]}\ri)\bigg|_{t^\ast=t^{\ast}_1} & = O_{r\to r_+}(1),
  \end{align}
  for all $p,q\in\mathbb{N}$, % where $\widetilde{\al}^{[\pm2]}$ are the renormalisations~\eqref{eq:renormalisedalpha},
  and where the pair of null vector fields $L,\Lb$ is defined by
  \begin{align*}
    L & := \Xi\pr_t + \frac{a\Xi}{r^2+a^2}\pr_\varphi + \pr_{r^\ast}, & \Lb & := \Xi\pr_t + \frac{a\Xi}{r^2+a^2}\pr_\varphi - \pr_{r^\ast}.  
  \end{align*}
\end{definition}
\end{subequations}
% \begin{remark}
%    The regularity assumptions~\eqref{eq:defreghor} for the Teukolsky quantities $\al^{[\pm2]}$ thus follow from that, in the Newman-Penrose formalism, $\al^{[+2]}$ and $\al^{[-2]}$ are obtained as contractions of the spacetime curvature tensor with (renormalisations of) the null vector fields $(\De^{-1} L,\De^{-1} L)$ and $(\Lb,\Lb)$ respectively. See~\cite{New.Pen62} and~\cite{Teu72} for definitions of the Teukolsky quantities from metric perturbations.
% \end{remark}

Under these conditions, the Teukolsky equations give rise to a well-posed initial boundary value problem.
\begin{theorem}\label{thm:wellposedness}
  Let $t_0^\ast\in\RRR$ and let $\al^{[\pm2]}_0,\al^{[\pm2]}_1$ be (smooth) initial data quantities on $\Si_{t_0^\ast}:=\{t^\ast=t^\ast_0\}$, compatible with the boundary conditions at infinity~\eqref{eq:defbdycond} and with the regularity conditions at the horizon~\eqref{eq:defreghor}. Then, there exist two (smooth) regular quantities $\al^{[\pm2]}$ defined on the future $J^+(\Si_{t_0^\ast})$ of $\Si_{t_0^\ast}$ in $\MM_{\mathrm{KadS}}$ such that $\al^{[\pm2]}|_{\Si_{t_0^\ast}}=\al_0^{[\pm2]}$ and $\pr_t\al^{[\pm2]}|_{\Si_{t_0^\ast}} = \al_1^{[\pm2]}$, which solve the Teukolsky equations~\eqref{eq:Teukoriginal} and which satisfy the conformal anti-de Sitter boundary conditions at infinity~\eqref{eq:defbdycond} and the regularity conditions at the horizon in the future of $t_0^\ast$~\eqref{eq:defreghor}.\footnote{See Figure~\ref{fig:integrationtransport} for a Penrose diagram depicting the setting of Theorem~\ref{thm:wellposedness}.}
\end{theorem}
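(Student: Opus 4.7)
The plan is to treat \eqref{eq:Teukoriginal} as a coupled initial boundary value problem on the future slab $J^+(\Sigma_{t_0^*})$, with a timelike boundary at conformal infinity, and to adapt the functional-analytic well-posedness framework developed for scalar wave equations on asymptotically anti-de Sitter backgrounds to the present spin-weighted, coupled setting.

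The first step is to rewrite \eqref{eq:Teukoriginal} in the regular coordinates $(t^*, r, \vartheta, \varphi)$ in terms of the rescaled quantities $\widetilde{\alpha}^{[\pm 2]}$. In these variables the principal part is a regular hyperbolic operator down to $r=r_+$, so that the event horizon appears as a regular characteristic hypersurface requiring no boundary condition beyond the pointwise regularity \eqref{eq:defreghor}. At $r\to+\infty$ the equations become conformally regular wave-type equations on an asymptotically anti-de Sitter cylinder, with indicial roots placing the problem within the Breitenlohner--Freedman range; accordingly, two scalar boundary traces per spin weight must be prescribed, and the natural candidates are the limits of $\widetilde{\alpha}^{[\pm 2]}$ and of their renormalised normal derivatives $r^2\partial_r \widetilde{\alpha}^{[\pm 2]}$, which are precisely the quantities related by \eqref{eq:defbdycond}.

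Next, I would define a Hilbert space $\mathcal{H}$ of pairs $(\widetilde{\alpha}^{[+2]}, \widetilde{\alpha}^{[-2]})$ using twisted $H^1$ norms encoding the asymptotic decay at infinity, and impose the coupling relations \eqref{eq:defbdycond} as conditions in the definition of the domain of the spatial Teukolsky operator. Contracting \eqref{eq:Teukoriginal} against a suitable timelike multiplier and summing the contributions from $s=+2$ and $s=-2$, the boundary terms at infinity take the schematic form $[\widetilde{\alpha}^{[+2]} \cdot r^2 \partial_r \widetilde{\alpha}^{[+2]}]_{r\to\infty}$ and its $[-2]$ counterpart; the coupling relations \eqref{eq:defbdycond} cancel these terms exactly after integration by parts, producing a coupled energy estimate on any finite time slab.

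With this a priori estimate existence follows by a Galerkin approximation in a basis of spin-weighted spheroidal harmonics and radial eigenfunctions, or equivalently by rephrasing \eqref{eq:Teukoriginal} as a first-order evolution and invoking semigroup theory after verifying that the associated spatial operator is $m$-accretive on $\mathcal{H}$ up to a zeroth-order shift. Propagation of smoothness up to the horizon and the conformal boundary is then obtained by commuting with $\partial_{t^*}$, $\partial_\varphi$ and with a red-shift commutator at $r=r_+$, provided the prescribed corner compatibility conditions hold; radial regularity is subsequently recovered via elliptic estimates. The main obstacle, and the genuine content of the argument, is the functional-analytic treatment of the coupled boundary conditions: verifying that the linear relation \eqref{eq:defbdycond}, which couples $\widetilde{\alpha}^{[+2]}$ with $(\widetilde{\alpha}^{[-2]})^{*}$, defines an operator domain on which the spatial Teukolsky operator is closed and dissipative (modulo a zeroth-order shift) is exactly the point where the argument goes beyond any scalar well-posedness result of Warnick type.
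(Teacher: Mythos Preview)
Your outline is a reasonable blueprint for a direct functional-analytic attack, and your observation that the boundary fluxes from the $s=+2$ and $s=-2$ equations combine and cancel under~\eqref{eq:defbdycond} is correct. However, you have misidentified where the difficulty lies. You write that handling the coupled boundary conditions ``is exactly the point where the argument goes beyond any scalar well-posedness result of Warnick type''---but in fact no such extension is needed.

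The key observation you are missing is that the coupling in~\eqref{eq:defbdycond} can be removed by a linear change of dependent variables. Setting
\[
\alpha^{D}:=\bigl(\Delta^{2}\alpha^{[+2]}\bigr)-\bigl(\alpha^{[-2]}\bigr)^{\ast},\qquad
\alpha^{N}:=\bigl(\Delta^{2}\alpha^{[+2]}\bigr)+\bigl(\alpha^{[-2]}\bigr)^{\ast},
\]
the two conditions in~\eqref{eq:defbdycond} become, respectively, a homogeneous Dirichlet condition on $\alpha^{D}$ and a homogeneous Neumann condition on $\alpha^{N}$. The transformed system is then a (still coupled in the interior) pair of strongly hyperbolic equations on an asymptotically anti-de Sitter background with \emph{standard} boundary conditions, and falls squarely within the framework of~\cite{War15}; Theorem~2.3 there gives the result immediately. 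This is the entire content of the paper's proof.

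Your route---building twisted Sobolev spaces adapted to the coupled conditions, checking $m$-accretivity, running Galerkin or semigroup arguments---would presumably succeed, and indeed the cancellation you noticed in the energy identity is precisely the manifestation, in the original variables, of the Dirichlet/Neumann structure for $(\alpha^{D},\alpha^{N})$. But it amounts to re-deriving Warnick's machinery in a basis where the boundary conditions look artificially complicated. The paper's change of variables is the missing idea that collapses all of that work into a citation.
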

\begin{proof}
  Defining
  \begin{align*}
    \al^D & := \big(\De^2\al^{[+2]}\big)-\big(\al^{[-2]}\big)^\ast, & \al^N & :=\big(\De^2\al^{[+2]}\big)+\big(\al^{[-2]}\big)^\ast,
  \end{align*}
  the system~\eqref{Teusys} falls in the framework of strong hyperbolic operators with homogeneous (respectively Dirichlet and Neumann) boundary conditions for the unknowns $(\al^{D},\al^N)$ as defined in~\cite{War15}. The result then follows from Theorem 2.3 in~\cite{War15}.  
\end{proof}

\begin{figure}[h!]
  \centering
  \includegraphics[height=0.3\textheight]{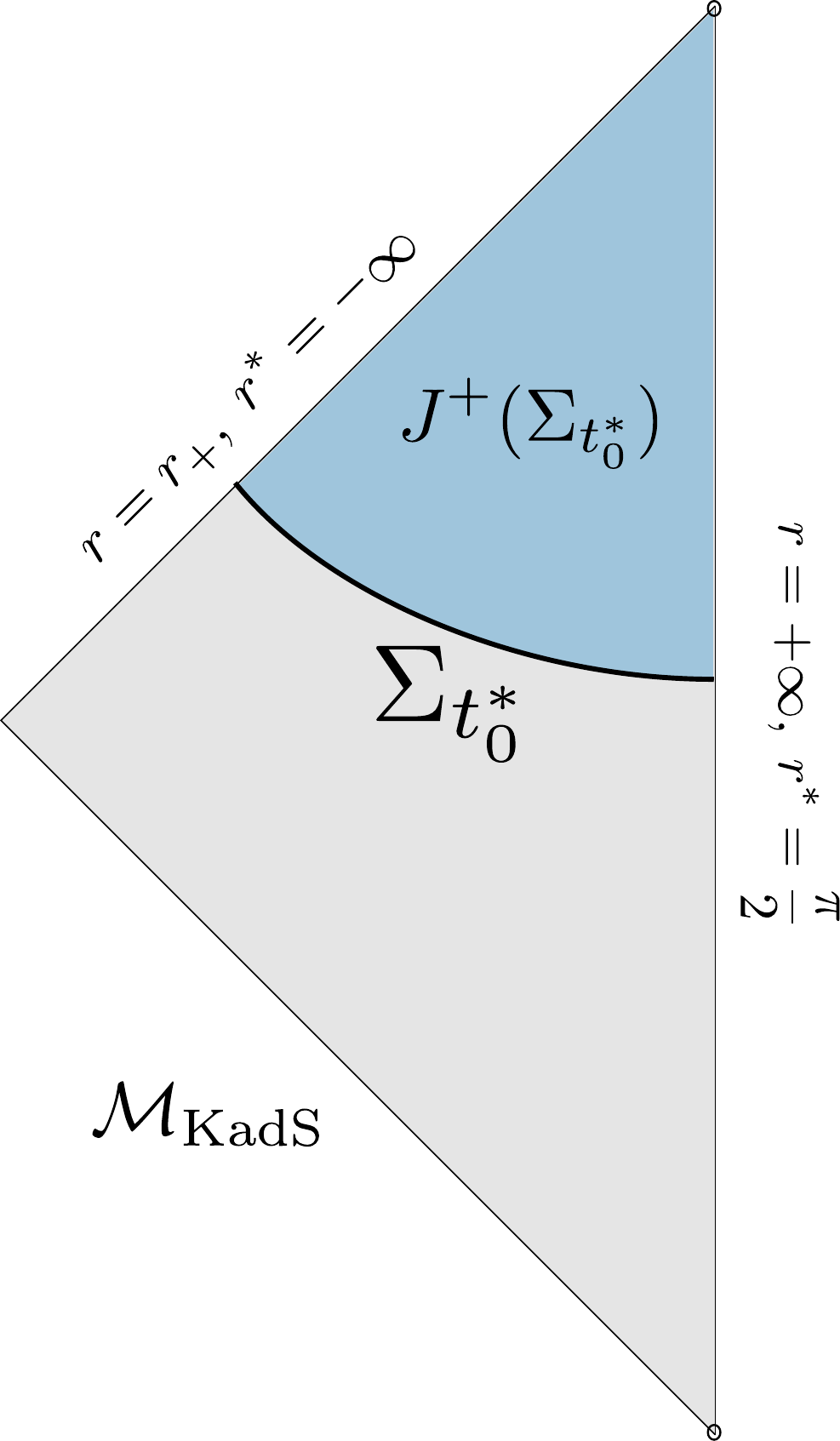}
  \caption{The spacetime $\MM_{\mathrm{KadS}}$ and the initial boundary value problem of Theorem~\ref{thm:wellposedness}.}
  \label{fig:integrationtransport}
\end{figure}
%The natural expectation is that the general solutions of Theorem~\ref{thm:wellposedness} decay logarithmically in time provided the background Kerr-adS spacetime satisfies the Hawking-Reall bound.

Theorem~\ref{thm:wellposedness} tells us that the Teukolsky system~\eqref{Teusys} can be studied independently of having to address the problem of gauge. In fact, it is even believed that the Teukolsky quantities control the full solutions to the Einstein linearised equations up to pure gauge solutions and linearised Kerr(-adS) solutions. See~\cite{Wal73} where it was shown, in the $\La=0$ case, that if $\al^{[\pm2]}$ vanish identically, then the full solution is trivial (\emph{i.e.} it is a sum of a pure gauge solution and a linearised Kerr solution), and~\cite{Daf.Hol.Rod19} where, in the Schwarzschild case, quantitative decay of the full solution (up to pure gauge solutions and linearised Kerr) is obtained from the decay of $\al^{[\pm2]}$. With this in mind, in the Kerr-adS case, a cornerstone of our objective mentioned before Section~\ref{sec:introTeuk} is the following:\\

\emph{Show that the solutions to the Teukolsky system~\eqref{Teusys} decay inverse logarithmically in time provided that the background Kerr-adS spacetime satisfies the Hawking-Reall bound~\eqref{est:HRboundoriginal}.}

\subsection{Mode stability on the real axis and the main theorem}\label{sec:intromodestab}
In this paper, we prove \emph{mode stability} for the initial boundary value problem of Theorem~\ref{thm:wellposedness}. Mode stability -- in conjunction with the theory of scattering resonances\footnote{See~\cite{Dya.Zwo19} for a general introduction to this theory.} -- has been a fundamental ingredient to obtain the decay of solutions to wave-type equations: see for example~\cite{Haf.Hin.Vas21} for the proof of the linear stability of Kerr with slow rotation $|a|\ll M$, or~\cite{Hin.Vas18,Fan21,Fan22} for the proof of the linear and non-linear stability of the slowly rotating Kerr-de Sitter solutions, which all rely on resonance expansions and the identification of obstructions to mode stability. In another framework, \emph{quantitative} versions of mode stability (see~\cite{Shl15,Tei20}) were used in the proof of the decay of solutions to the wave and Teukolsky equations on Kerr in the full subextremal range $|a|<M$ in~\cite{Daf.Rod.Shl16} and~\cite{Shl.Tei20}.\\

For the Teukolsky equations~\eqref{Teusys}, mode solutions take the following form.
% For the Teukolsky equations on Kerr-adS~\eqref{Teusys} we similarly expect mode stability to be a first step towards the desired statement of logarithmic decay (see Theorem~\ref{thm:goalGH}). Mode solutions to~\eqref{Teusys} are solutions of the following form.  %for the general Kerr-adS case, which will be obtained in a subsequent paper. , as a first step towards the proof of the decay of the general solutions of Theorem~\ref{thm:goalGH}.
\begin{definition}\label{def:mode}
  Let $\om\in\CCC$. We say that $\al^{[\pm2]}$ are (smooth) \emph{modes with frequency $\om$} if there exists (smooth) time-independent functions $\hat\al^{[\pm2]}$ such that
  \begin{align}\label{modesol}
    \alpha^{[+2]} (t,r,\varth,\varphi) & = e^{-i\omega t} \hat{\alpha}^{[+2]} (r,\varth,\varphi), & \alpha^{[-2]}(t,r,\vartheta,\varphi) & = e^{+i\omega^\star t} \hat{\alpha}^{[-2]} (r,\varth,\varphi),
  \end{align}
  for all $(t,r,\varth,\varphi)\in\RRR\times(r_+,+\infty)\times(0,\pi)\times(0,2\pi)$.
\end{definition}
\begin{remark}
  In the literature, Teukolsky mode quantities $\al^{[\pm2]}$ are usually defined with the convention $\al^{[\pm2]} = e^{-i\om t}\hat\al^{[\pm2]}$. While any convention is equally acceptable when $\al^{[+2]}$ and $\al^{[-2]}$ are not coupled, the coupling of these two quantities \emph{via} the conformal anti-de Sitter boundary condition~\eqref{eq:defbdycond} yields that the only convention potentially leading to the existence of non-trivial modes and fitting into the framework of~\cite{War15} is the one of Definition~\ref{def:mode}.
\end{remark}
\begin{remark}
  For solutions to the full linearised Einstein equations, there can be no non-trivial mode solution with the convention of Definition~\ref{def:mode} alone, due to the coupling of $\al^{[+2]}$ and $\al^{[-2]}$ \emph{via} the Bianchi equations (see~\cite{New.Pen62} and the Teukolsky-Starobinsky constraint relations~\cite{Sta.Chu74,Teu.Pre74}). In that case the correct definition of mode solution is sums of complex-conjugated frequencies mode solutions in the sense of Definition~\ref{def:mode}, \emph{i.e.}
  \begin{align*}
    \al^{[+2]} & = e^{- i\om t}\hat\al^{[+2]}_{-} + e^{+ i\om^\ast t}\hat\al^{[+2]}_+, & \al^{[-2]} & = e^{+ i\om^\ast t}\hat\al^{[-2]}_{-} + e^{- i\om t}\hat\al^{[-2]}_+,
  \end{align*}
  where
  \begin{align*}
    \al^{[+2]}_- & := e^{-i\om t}\hat\al^{[+2]}_{-}, & \al^{[-2]}_- & := e^{+i\om^\ast t}\hat\al^{[-2]}_{-},\\
    \intertext{and}
    \al^{[+2]}_+ & := e^{+ i\om^\ast t}\hat\al^{[+2]}_+, & \al^{[-2]}_+ & := e^{- i\om t}\hat\al^{[-2]}_+, 
  \end{align*}
  are respectively mode solutions to the Teukolsky equations in the sense of Definition~\ref{def:mode}.
\end{remark}

Mode stability is the statement that there do not exist regular modes which are solutions of the initial boundary value problem for Teukolsky equations with $\mathrm{Im}(\omega) \geq 0$. This article is dedicated to the proof of the following theorem which establishes \emph{mode stability on the real axis}.
\begin{theorem}[Main theorem]\label{thm:main}
  Let $(M,a,k)$ be admissible subextremal Kerr-adS black hole parameters in the sense of Definition~\ref{def:admissibleKadSparameters}. Then the following holds.
  \begin{enumerate}
  \item There exist no non-zero smooth modes with real frequency $\om\in\RRR$ in the sense of Definition~\ref{def:mode} which are solutions of the Teukolsky equations~\eqref{eq:Teukoriginal}, which satisfy the conformal anti-de Sitter boundary conditions at infinity~\eqref{eq:defbdycond} and the regularity conditions at the horizon~\eqref{eq:defreghor}, \underline{except} for a potential set of \underline{stationary modes} with respect to the \emph{Hawking vector field}, i.e.~solutions satisfying
  \begin{align*}
    \mathrm{K}(\al^{[\pm2]}) & = 0, & \mathrm{K} & := \pr_t+\frac{a}{r_+^2+a^2}\pr_\varphi.
  \end{align*}
  \item The exceptional non-trivial stationary solutions from (1) cannot exist if one of the following holds:
  
\begin{enumerate}
\item the solution is also time-independent, \emph{i.e.} $\pr_t\al^{[\pm2]}=0$. 
\item the Hawking-Reall bound~\eqref{est:HRboundoriginal} is satisfied, and one of the following two \emph{ad hoc} hypotheses hold:
  \begin{subequations}\label{est:slowrotmore}
    \begin{align}
      ak & \leq \frac{1}{20},\label{est:assumboundX}\\
      \intertext{or}
      a & \leq \frac{1}{4}kr_+^2 \label{est:farHR} \, .
    \end{align}
  \end{subequations}
  \end{enumerate}
 \end{enumerate}
\end{theorem}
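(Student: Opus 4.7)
My plan is to separate variables, reduce to coupled radial ODEs, and exploit the Teukolsky-Starobinsky identities to close boundary terms that would otherwise obstruct an energy/Wronskian argument. Expanding $\hat\alpha^{[\pm 2]} = \sum_{m,\ell}e^{im\varphi}R^{[\pm 2]}_{m\ell}(r)S^{[\pm 2]}_{m\ell}(\varth)$ in $\pm 2$-spin-weighted spheroidal harmonics, the smoothness of $\alpha^{[\pm 2]}$ and the reality of $\om$ produce the \emph{same} angular eigenvalue $\lambda_{m\ell}$ in the radial equations for both spins, because the $\alpha^{[-2]}$ ansatz carries $e^{+i\om^\star t}=e^{+i\om t}$. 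Horizon regularity~\eqref{eq:defreghor} dictates the leading behaviours of $R^{[\pm 2]}_{m\ell}$ at $r=r_+$, and the conformal anti-de Sitter conditions~\eqref{eq:defbdycond} descend to two boundary conditions at $r\to\infty$ coupling $R^{[+2]}_{m\ell}$ with $R^{[-2]}_{m\ell}$.

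For modes with real $\om$ that are not $\mathrm{K}$-stationary, the plan is to construct an appropriately $\De$-weighted Wronskian-type current in $r$, combining $R^{[+2]}_{m\ell}$ with the complex conjugate of $R^{[-2]}_{m\ell}$ (the reality of $\om$ making these carry the same $t$-dependence), and integrate its $r$-derivative from $r_+$ to $\infty$. The bulk vanishes by the radial equations. On the horizon, a computation using~\eqref{eq:defreghor} and the standard purely ingoing structure of horizon-regular modes yields a boundary term with a definite sign proportional to $(\om-m\omp)^2$, vanishing precisely for $\mathrm{K}$-stationary modes. At infinity, the Teukolsky-Starobinsky identities allow one to re-express fourth-order radial derivatives of $R^{[\pm 2]}_{m\ell}$ in terms of $R^{[\mp 2]}_{m\ell}$, converting the naive boundary term -- which by itself does not close -- into a quadratic expression in the \emph{coupled} boundary data of~\eqref{eq:defbdycond}, which then vanishes. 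The definite-sign horizon contribution must therefore vanish, forcing $\om=m\omp$, i.e.\ $\mathrm{K}$-stationarity.

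In the stationary regime $\om=m\omp$, under the Hawking-Reall bound~\eqref{est:HRboundoriginal} the vectorfield $\mathrm{K}$ is causal in the exterior, so the Teukolsky operator restricted to $\mathrm{K}$-invariant functions has elliptic character. Pairing each radial equation with the complex conjugate of the corresponding unknown, weighted by appropriate powers of $\De$, and integrating over the exterior yields a bilinear form on $(R^{[+2]}_{m\ell},R^{[-2]}_{m\ell})$; boundary contributions at infinity are handled once more by the Teukolsky-Starobinsky identities combined with~\eqref{eq:defbdycond}. For the time-independent subcase $\pr_t\alpha^{[\pm 2]}=0$, necessarily $m=0$, and the argument reduces to a purely Laplace-type estimate, giving vanishing without any smallness assumption. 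For general $\mathrm{K}$-stationary modes, the cross-spin couplings and the Teukolsky potential produce indefinite-sign terms whose coercive control requires either $ak\leq 1/20$ or $a\leq kr_+^2/4$, yielding~\eqref{est:slowrotmore}.

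The main obstacle I anticipate is this final coercivity step. The Teukolsky operator in the stationary regime carries first-order terms whose imaginary parts have indefinite sign, and the conformal anti-de Sitter boundary conditions mix derivatives of the two spin components in ways that resist direct sign control. The thresholds~\eqref{est:slowrotmore} plausibly arise as the values at which a natural multiplier choice produces a manifestly positive quadratic form after applying Teukolsky-Starobinsky at the boundary; sharpening them would likely require a more refined multiplier or a direct analysis of potential zero modes, a delicate task in view of the superradiant instabilities expected when the Hawking-Reall bound is violated.
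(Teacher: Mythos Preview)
Your high-level architecture (separate variables, invoke Teukolsky--Starobinsky at infinity, then split into non-stationary and stationary regimes) matches the paper, but the mechanism you propose in the non-stationary case is not the one that actually works, and a key lemma is missing.

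\medskip

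\textbf{Non-stationary case.} You pair $R^{[+2]}_{m\ell}$ with $\big(R^{[-2]}_{m\ell}\big)^\ast$ and claim the bulk of the resulting Wronskian current vanishes. It does not: from Lemma~\ref{lem:Teukeqrel}, $R^{[+2]}_{m\ell}$ and $R^{[-2]}_{m\ell}$ satisfy the \emph{same} radial operator $\widetilde{\mathfrak{I}}^{m,\omega}_{[+2]}=\widetilde{\mathfrak{I}}^{-m,-\omega}_{[-2]}$, so it is $W\big(R^{[+2]}_{m\ell},R^{[-2]}_{m\ell}\big)$ --- without conjugation --- whose bulk vanishes. Your pairing with $\big(R^{[-2]}_{m\ell}\big)^\ast$ produces a bulk term proportional to $\Im V^{m,\omega}$, which is not zero. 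More importantly, the conclusion is not drawn from a signed horizon flux: with the correct Wronskian one shows $W\equiv 0$ (the limit at infinity vanishes once the boundary conditions are rewritten as \emph{Robin} conditions via Teukolsky--Starobinsky, see Proposition~\ref{prop:RobinBC}), hence $R^{[+2]}_{m\ell}$ and $R^{[-2]}_{m\ell}$ are proportional; the contradiction then comes from the \emph{incompatibility of their Frobenius branches at the horizon} when $\omega\neq m\omega_+$, not from a term of definite sign. Your claimed horizon contribution ``$\propto(\omega-m\omega_+)^2$'' has no obvious realisation here. You also omit the step that makes the infinity argument go through: the strict positivity of the angular Teukolsky--Starobinsky constant $\aleph>0$ (Lemma~\ref{lem:angTS}), which is what guarantees the Robin relation is non-degenerate.

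\medskip

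\textbf{Stationary case.} Your elliptic/energy picture is the right one, and the boundary term at infinity indeed closes via the Robin conditions. But the coercivity does not come from controlling ``cross-spin couplings'' --- each $R^{[\pm 2]}_{m\ell}$ is treated by the same real energy identity (Lemma~\ref{lem:energy}). The actual work is elsewhere: one needs lower bounds on the (shifted) angular eigenvalues $\widetilde\lambda^{m\omega_+}_{m\ell}$ under the Hawking--Reall bound (Lemma~\ref{lem:lambda}) and, for $|m|\geq 2$, a Hardy inequality (Lemma~\ref{lem:energyandHardy}) to absorb the negative part of the potential. The thresholds~\eqref{est:slowrotmore} enter precisely through these angular estimates and the resulting sign of $V_{\mathrm{stat}}+V_{\mathrm{Hardy}}$, not through any spin-coupling at the boundary.
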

% \begin{remark}[Time-stationary perturbations and uniqueness of Kerr-anti-de Sitter space]
%   \todo{Remarks on the theorem}From Theorem~\ref{thm:main} in the $\om=0$ case, we deduce that there are no non-trivial \emph{stationary solutions with respect to $\pr_t$}. From this observation, we can obtain that any family of solutions to the Einstein equations~\eqref{eq:EEcosmo} smoothly bifurcating from a Kerr-anti-de Sitter solution must have vanishing linearised Teukolsky quantities $\al^{[\pm2]}$ at Kerr-anti-de Sitter. It is therefore This implies that the linearisation of Einstein equations
% \end{remark}
\begin{remark}
  Projecting the potentially non-trivial stationary mode solutions of Theorem~\ref{thm:main} onto the Hilbert basis $(e^{im\varphi})_{m\in\ZZZ}$ of the functions which are $2\pi$-periodic in $\varphi$, we obtain directly that their associated frequencies must be of the form  
  \begin{align*}
    \om & = m \om_+,
  \end{align*}
  with $m \in\ZZZ\setminus\{0\}$, and where $\om_+:=a/(r_+^2+a^2)$. % That these frequencies cannot be zero, or reformulated, that all real mode solution to~\eqref{Teusys} which are stationary with respect to the time vector field $\pr_t\al^{[\pm2]} = 0$ has to vanish, is a result obtained in Proposition~\ref{prop:statm0}.
\end{remark}
\begin{remark}
  In view of the numerical results of~\cite{Car.Dia.Har.Leh.San14}, we expect that the second part of the theorem actually holds without the assumptions~\eqref{est:slowrotmore}. However, in view of the bounds obtained in the present paper, the numerical results of~\cite{Car.Dia.Har.Leh.San14} and the superradiant instabilities revealed in~\cite{Gre.Hol.Ish.Wal16}, we expect that our result is sharp with respect to the Hawking-Reall bound, \emph{i.e.} that mode stability does not hold if it is violated. Note that this violation can happen even under the additional assumption~\eqref{est:assumboundX}. See the discussions in Section~\ref{sec:related} and Remark~\ref{rem:sharpHR}. 
\end{remark}

Theorem~\ref{thm:main} is the main result of this paper. After some preliminaries in Sections \ref{sec:radang} and \ref{sec:TStrafos}, the first part is proven in Section \ref{sec:TSconslaws} and the second part in Section \ref{sec:stat}. We provide in Section~\ref{sec:overview} a brief overview of its proof.

\subsection{From mode stability on the real axis to full mode stability}\label{sec:introfullmode}%Boundedness results in the $a=0$ case and mode stability}
It is well-known that one can exploit the continuity of the mode frequencies in the black hole parameters to deduce the full statement of mode stability for \eqref{Teusys} from mode stability on the real axis together with mode stability for specific values of the parameters, e.g.~the Schwarzschild case $a=0$ (see~\cite{And.Ma.Pag.Whi17}).\footnote{The general idea is that modes ``have to cross the real axis'' to become exponentially growing modes.}\\

For a large class of Schwarzschild-adS spacetimes, we can obtain mode stability for $\Im(\om)>0$ (and more general uniform boundedness statements) relatively easily by generalising the so-called \emph{Chandrasekhar transformation} theory of \cite{Daf.Hol.Rod19a,Daf.Hol.Rod19} in the asymptotically flat case. We recall that the main difficulty to obtain uniform boundedness statements for \eqref{Teusys}, even in the case $a=0$, is that there is no natural conserved energy or integrated local energy decay estimate for solutions. Using the Chandrasekhar transformation, one obtains for $a=0$ the following \emph{Regge-Wheeler} system
\begin{subequations}\label{Teusys2}
  \begin{align}
    0 & = L \underline{L} \Psi^{[\pm 2]} + \frac{\Delta}{r^4} \le(\LL-\frac{6M}{r}\ri) \Psi^{[\pm 2]},
  \end{align}
  and
  \begin{align}\label{eq:RWBC}
    \begin{aligned}
      \Psi^{[+2]} - \big(\Psi^{[-2]}\big)^\ast & \xrightarrow{r\to +\infty} 0, \\
      r^2\pr_r\Psi^{[+2]} +r^2\pr_r\big(\Psi^{[-2]}\big)^\ast & \xrightarrow{r\to +\infty} -6M\lim_{r\to+\infty}\le(\widetilde{\al}^{[+2]}+\big(\widetilde{\al}^{[-2]}\big)^\ast\ri), 
    \end{aligned}
  \end{align}
\end{subequations}
where
\begin{align}\label{eq:defPsipm2}
  \Psi^{[+2]} & := \frac{r^4}{\De}\Lb\le(\frac{r^4}{\De}\Lb\widetilde{\al}^{[+2]}\ri), & \Psi^{[-2]} & := \frac{r^4}{\De}L\le(\frac{r^4}{\De}L\widetilde{\al}^{[-2]}\ri), 
\end{align}
and where $\LL$ is an angular Laplace-type elliptic operator with eigenvalues $\ell(\ell+1)$ for $\ell\geq 2$. Defining
\begin{align}\label{eq:defPsiDR}
  \Psi^D & := \Psi^{[+2]} - \big(\Psi^{[-2]}\big)^\ast, & \Psi^R & := \LL(\LL-2)\le(\Psi^{[+2]} + \big(\Psi^{[-2]}\big)^\ast\ri) +12M\pr_t\le(\Psi^{[+2]} - \big(\Psi^{[-2]}\big)^\ast\ri),
\end{align}
one can show that~\eqref{Teusys} and~\eqref{Teusys2} imply that 
\begin{subequations}\label{Teusys2bis}
  \begin{align}
    0 & = L \underline{L} \Psi + \frac{\Delta}{r^4} \le(\LL-\frac{6M}{r}\ri) \Psi,
  \end{align}
  for $\Psi = \Psi^D,\Psi^R$, and
  \begin{align}\label{eq:RWBC}
    \begin{aligned}
      \Psi^D & \xrightarrow{r\to +\infty} 0, \\
      2\pr_{t}^2\Psi^R + \frac{\LL(\LL-2)}{6M}\pr_{r^\ast}\Psi^R + k^2\LL\Psi^R & \xrightarrow{r\to+\infty} 0. 
    \end{aligned}
  \end{align}
\end{subequations}
%       LL\Psi^{[+2]} +\Lb\Lb\big(\Psi^{[-2]}\big)^\ast + \frac{\LL(\LL-2)}{6M} \le( L\Psi^{[+2]} - \Lb\big(\Psi^{[-2]}\big)^\ast\ri) & \xrightarrow{r\to+\infty} 0,

Integrating by parts and exploiting the boundary conditions for the system~\eqref{Teusys2bis} gives the \emph{boundedness in time} of the energies
\begin{align}\label{eq:energySadS}
  \begin{aligned}
     \mathrm{E}_{m\ell}^D(t^\ast) &:= \half\int_{\Si_{t^\ast}} \frac{r^2(1+k^2r^2)^2-4M^2}{r^2(1+k^2r^2)}|\pr_{t^\ast}\Psi^{D}_{m\ell}|^2 + |\pr_{r^\ast}\Psi^{D}_{m\ell}|^2 + \frac{\De}{r^4}\le(\ell(\ell+1) - \frac{6M}{r}\ri) |\Psi^{D}_{m\ell}|^2 \,\d r^\ast, \\
    \mathrm{E}_{m\ell}^R(t^\ast) & := \half\int_{\Si_{t^\ast}} \frac{r^2(1+k^2r^2)^2-4M^2}{r^2(1+k^2r^2)}|\pr_{t^\ast}\Psi^{R}_{m\ell}|^2 + |\pr_{r^\ast}\Psi^{R}_{m\ell}|^2 + \frac{\De}{r^4}\le(\ell(\ell+1) - \frac{6M}{r}\ri) |\Psi^{R}_{m\ell}|^2 \,\d r^\ast \\
    & \quad + \frac{6M}{\ell(\ell+1)\le(\ell(\ell+1)-2\ri)} \le[|\pr_{t^\ast}\Psi^R_{m\ell}|^2 + k^2\frac{\ell(\ell+1)}{2}|\Psi^R_{m\ell}|^2\ri]\bigg|_{r=+\infty},
  \end{aligned}
\end{align}
where $\Psi_{m\ell}$ denotes the projections of $\Psi$ on the eigenfunctions of $\LL$ and where $\pr_{t^\ast},\pr_{r^\ast}$ are the coordinate vector fields associated to the $(t^\ast,r^\ast)$ coordinates.
\begin{remark}\label{rem:coercivity}
  It is immediate that since $\ell\geq 2$, the energies~\eqref{eq:energySadS} are \underline{coercive provided the parameters are such} \underline{that $r_+ \geq M$}. We highlight that for Schwarzschild-adS spacetimes the bound $r_+\geq M$ does not always hold. Indeed, for fixed mass $M$, the black hole radius $r_+$ can range from $0$ (when $k\to+\infty$) to the Schwarzschild radius $2M$ (when $k\to0$). % In particular, there exists Schwarzschild-adS spacetimes (with sufficiently small radius $r_+$) such that in a non-trivial angular regime $2 \leq \ell\leq \ell_c(M,k)$, the energies~\eqref{eq:energySadS} are not coercive.
  
\end{remark}
From this observation we deduce the following theorem.
% Let us introduce the two null coordinates $u := t-r^\ast, v := t+ r^\ast$, so that in the $(u,v,\varth,\varphi)$ coordinate system we have $\pr_{v}=\half L$ and $\pr_v=\half \Lb$.  Standard energy conservation for~\eqref{Teusys2} using the boundary conditions~\eqref{eq:defbdycond} yields the conservation law (see Figure~\ref{fig:conservationlaw})
% \begin{align}\label{econs}
% \mathrm{E}_{v_0}[\Psi](u,v)  = \mathrm{E}_{v_0}[\Psi](u,v_0)
% \end{align}
% for any $v_0\in\RRR$, $v \geq v_0$ and $u \geq v_0 - \pi$, and with the energy
% \begin{align}\label{eq:energySadS}
%   \begin{aligned}
%     \mathrm{E}_{v_0}[\Psi](u,v)  :=  \sum_{s \in \{-2,2\}} & \int_{u(r=\infty,v)}^u \Big\{\|\pr_u\Psi^{[ s]}\|_{S^2_{\bar{u},v}}^2 + \frac{\De}{r^4}\le(\|\sqrt{\LL^{[ s]}}\Psi^{[ s]}\|_{S^2_{\bar{u},v}}^2- \frac{6M}{r} \|\Psi^{[ s]}\|_{S^2_{\bar{u},v}}^2\ri) \Big\}\,\d \bar{u}\\
%     +& \int_{v_0}^v \Big\{\|\pr_v\Psi^{[ s]}\|_{S^2_{u,\bar{v}}}^2 +  \frac{\De}{r^4}\le( \|\sqrt{\LL^{[ s]}}\Psi^{[ s]}\|^2_{S^2_{u,\bar{v}}}- \frac{6M}{r} \|\Psi^{[ s]}\|_{S^2_{u,\bar{v}}}^2\ri)\Big\} \,\d \bar{v} \, , 
%   \end{aligned}
% \end{align}
% where $\norm{\,\xi}^2_{S^2_{u,v}} := \int_{\mathbb{S}^2} |\xi (u,v,\vartheta,\varphi)|^2\sin \vartheta \d\vartheta \d\varphi$ and where $\sqrt{\LL^{[s]}}$ are the square root operators of $\LL^{[s]}$.
% \begin{figure}[h!]
%   \centering
%   \includegraphics[height=0.3\textheight]{conservationlaw2.pdf}
%   \caption{The conservation law~\eqref{econs}.}
%   \label{fig:conservationlaw}
% \end{figure}

\begin{theorem}\label{thm:boundedSadS}
  For Schwarzschild-adS spacetimes with parameters satisfying $r_+ \geq M$, there does not exist non-zero, regular, growing mode $\Im(\om)>0$ solutions to the Teukolsky system~\eqref{Teusys}. 
\end{theorem}
% \begin{theorem}\label{thm:nogrowingmodeSadS}
%   For Schwarzschild-adS spacetimes with parameters such that $r_+ \geq M$, there are no non-zero mode solutions with $\Im(\om)>0$ to the initial boundary value problem for Teukolsky equations~\eqref{Teusys}.\footnote{With a little more work, the proof of the theorem also shows that general solutions to the Teukolsky equations are bounded.}
% \end{theorem}
\begin{proof}
  Let $\al^{[\pm2]}$ be regular growing mode $\Im(\om)>0$ solutions to the Teukolsky system~\eqref{Teusys}. The transformations $\Psi^{D},\Psi^R$ given by~\eqref{eq:defPsipm2},~\eqref{eq:defPsiDR} produce regular growing mode solutions (perhaps identically zero) to the Regge-Wheeler system~\eqref{Teusys2bis}. This contradicts the boundedness and coercivity of the energies~\eqref{eq:energySadS}, except if $\Psi^{D}=\Psi^R=0$. From~\eqref{eq:defPsiDR}, this implies that $\Psi^{[+2]}=\Psi^{[-2]}=0$. We now have the following general transport estimates (see Figure~\ref{fig:integrationtransportrevised})
  %\begin{subequations}\label{est:transportRW}
    \begin{align}
      \norm{\psi}_{L^\infty\le(J^+(\Si_{t^\ast_0})\ri)} & \les \big\Vert\frac{r^4}{\De}L\psi\big\Vert_{L^\infty\le(J^+(\Si_{t^\ast_0})\ri)} + \norm{\psi}_{L^\infty(\Si_{t^\ast_0})},\label{est:transportpsi}
    \end{align}
  %\end{subequations}
    for all functions $\psi$ on $J^+(\Si_{t^\ast_0})$.\\
    
    \begin{figure}[h!]
      \centering
      \includegraphics[height=0.35\textheight]{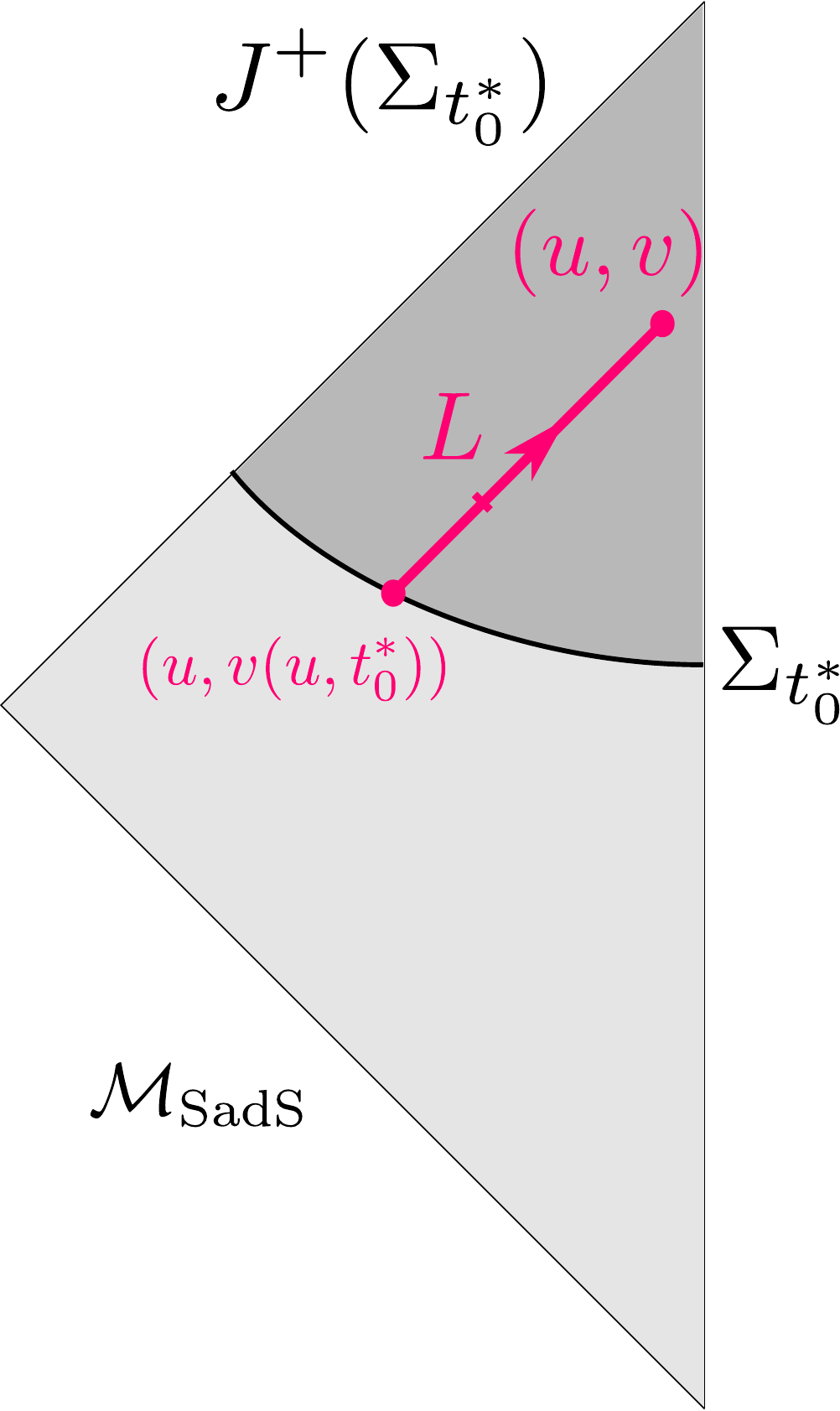}
      \caption{The transport estimate~\eqref{est:transportpsi}.}
      \label{fig:integrationtransportrevised}
    \end{figure}
    Applying estimate~\eqref{est:transportpsi} once, using the vanishing of $\Psi^{[-2]}$ and the definition of the Chandrasekhar transformations~\eqref{eq:defPsipm2}, we deduce that
    \begin{align}\label{est:intermediateboundednessRW}
      \big\Vert\frac{r^4}{\De}L\widetilde{\al}^{[-2]}\big\Vert_{L^\infty\le(J^+(\Si_{t^\ast_0})\ri)} \les \big\Vert\frac{r^4}{\De}L\widetilde{\al}^{[-2]}\big\Vert_{L^\infty(\Si_{t^\ast_0})} < \infty.
    \end{align}
  Applying estimate~\eqref{est:transportpsi} again, using~\eqref{est:intermediateboundednessRW}, we deduce that $\widetilde{\al}^{[-2]}$ is bounded. This contradicts the fact that $\al^{[-2]}$ is a growing mode, except if $\al^{[-2]}=0$. From the boundary conditions~\eqref{eq:defbdycond}, this implies that $\al^{[+2]}$ satisfies both Dirichlet and Neumann boundary conditions. From a standard uniqueness argument, we deduce that $\al^{[+2]}=0$.% \\
  % We now turn to the proof of~\eqref{est:transportpsi}. Let us introduce the two null coordinates $u := t-r^\ast, v := t+ r^\ast$, so that in the $(u,v,\varth,\varphi)$ coordinate system we have $\pr_{v}=\half L$ and $\pr_u=\half \Lb$. By integration (see Figure~\ref{fig:integrationtransportrevised}) we have
  % \begin{align*}
  %   |\psi\le(u,v,\varth,\varphi\ri)| & \leq \half \int_{v(u,t^\ast_0)}^v |(L\psi)\le(u,v',\varth,\varphi\ri)| \, \d v' + |\psi\le(u,v(u,t^\ast_0),\varth,\varphi\ri)| \\
  %                                      & \leq \le(\half\int_{v(u,t^\ast_0)}^v \frac{\De}{r^4}\le(u,v',\varth,\varphi\ri) \, \d v'\ri)\big\Vert\frac{r^4}{\De}(L\psi)\big\Vert_{L^\infty(J^+(\Si_{t^\ast_0}))} + \norm{\psi}_{L^\infty(\Si_{t^\ast_0})} \\
  %                                      & \leq \le(\int_{-\infty}^{\frac{\pi}{2}} \frac{\De}{r^4}\,\d r^\ast\ri)\big\Vert\frac{r^4}{\De}(L\psi)\big\Vert_{L^\infty(J^+(\Si_{t^\ast_0}))} + \norm{\psi}_{L^\infty(\Si_{t^\ast_0})},
  % \end{align*}
  % where in the last line we used the change of variable $r^\ast = (v'-u)/2$. The last integral is finite and this concludes the proof of~\eqref{est:transportpsi} and of the theorem.
\end{proof}

\begin{remark}
  Elaborating on the proof of Theorem~\ref{thm:boundedSadS}, one can more generally obtain a quantitative uniform boundedness statement for solutions to the Teukolsky system~\eqref{Teusys} on Schwarzschild-adS spacetimes with parameters satisfying $r_+\geq M$. 
\end{remark}

Unfortunately, the above argument is not as simple in the case of general Schwarzschild-adS black holes -- in view of the potentially negative potential in~\eqref{Teusys2} --, or in the case of non-vanishing angular momentum -- in which case the equations for $\Psi^{[\pm 2]}$ involve a coupling to $\alpha^{[\pm 2]}$. Nonetheless, the continuity of mode frequencies together with the real axis mode stability of Theorem~\ref{thm:main} enables to improve the mode stability result of Theorem~\ref{thm:boundedSadS} to the following full mode stability statement.
\begin{theorem}\label{thm:fullmode}
  Let $(M,a,k)$ be admissible subextremal Kerr-adS black hole parameters such that the Hawking-Reall bound~\eqref{est:HRboundoriginal} and the additional hypothesis~\eqref{est:slowrotmore} hold. Then there do not exist non-zero mode solutions to the Teukolsky system~\eqref{Teusys} with $\Im(\om) \geq 0$. 
\end{theorem}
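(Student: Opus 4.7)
The plan is to run the standard continuity argument in the black-hole parameters that is sketched just before the statement. Real-axis mode stability from Theorem~\ref{thm:main} provides the essential rigidity along $\mathrm{Im}(\omega)=0$, while Theorem~\ref{thm:boundedSadS} supplies a ``base point'' of Schwarzschild-adS parameters at which full mode stability is already known; the proof propagates this base case across the whole parameter region $\mathcal{P}$ defined by admissibility together with~\eqref{est:HRboundoriginal} and~\eqref{est:slowrotmore}.

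The first step is to observe that $\mathcal{P}$ is path-connected and contains a Schwarzschild-adS point $(M_*,0,k_*)$ with $r_+(M_*,0,k_*)\geq M_*$. Indeed, the defining inequalities are continuous in $(M,a,k)$ and either~\eqref{est:assumboundX} or~\eqref{est:farHR} allows one to send $a\to 0$ while staying in $\mathcal{P}$, so one may first deform to the slice $a=0$ and then adjust $M$ and $k$ inside the Schwarzschild-adS subfamily (where the only constraint becomes $kM\leq 1$) to enter the regime $r_+\geq M$ covered by Theorem~\ref{thm:boundedSadS}. Fix such a continuous path $\gamma:[0,1]\to\mathcal{P}$ with $\gamma(0)=(M_*,0,k_*)$ and $\gamma(1)=(M,a,k)$.

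For each fixed azimuthal number $m\in\mathbb{Z}$, the Teukolsky system~\eqref{Teusys} together with the boundary conditions~\eqref{eq:defbdycond} and the horizon regularity~\eqref{eq:defreghor}, reduced to a coupled radial-angular ODE problem by the separation of variables developed in Sections~\ref{sec:radang}--\ref{sec:stat}, defines a holomorphic family of Fredholm operators in the frequency variable $\omega\in\{\mathrm{Im}(\omega)\geq 0\}$ depending smoothly on the background parameters. The mode frequencies are precisely the poles of the corresponding meromorphic inverse, and by Gohberg-Sigal-type perturbation theory these poles depend continuously on the parameters, counted with multiplicity. Assume now for contradiction that some mode $\omega_1$ with $\mathrm{Im}(\omega_1)>0$ exists at $\gamma(1)$ for some $m$. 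Since no such mode exists at $\gamma(0)$ by Theorem~\ref{thm:boundedSadS}, continuity forces a pole to appear in the open upper half-plane at some parameter value $s_*\in(0,1]$, either by crossing the real axis -- which is directly precluded by Theorem~\ref{thm:main} -- or by emerging ``from infinity'' in $\omega$.

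The main obstacle is to rule out this escape-from-infinity scenario, for which one needs a uniform a priori bound of the form $|\omega|\leq C(m)$ on all poles with $\mathrm{Im}(\omega)\geq 0$, uniform in $(M,a,k)\in\gamma$. Such a bound is obtained by combining the causality of the Hawking vectorfield $\mathrm{K}$ in the exterior region, which holds throughout $\mathcal{P}$ thanks to~\eqref{est:HRboundoriginal}, with a non-degenerate redshift estimate at the event horizon and a WKB analysis of the separated radial ODE that excludes trapping for large $|\omega|$; the standard Weyl-type asymptotics for the spin-weighted spheroidal angular eigenvalues simultaneously control the angular separation constant in terms of $m$ and $|\omega|$. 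Inserting this uniform bound into the continuity principle for the poles then closes the contradiction and finishes the proof.
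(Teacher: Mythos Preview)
Your overall strategy --- path-connect within $\mathcal{P}$ to a Schwarzschild-adS base point covered by Theorem~\ref{thm:boundedSadS}, then use continuity of mode frequencies in the parameters to force any upper-half-plane pole to cross the real axis, contradicting Theorem~\ref{thm:main} --- is exactly the one the paper employs. The paper's own proof is correspondingly short: it fixes the explicit base point $(M,a,k)=(1,0,1)$ (where $r_+=1=M$) and invokes the framework of~\cite{War15}, which places the Teukolsky system~\eqref{Teusys} in a class of strongly hyperbolic operators with homogeneous boundary conditions whose Laplace-transformed operator (and its dual) is \emph{locally uniformly Fredholm}; the continuity of the set of mode frequencies in $(M,a,k)$ then follows from the associated estimates, with~\cite{Hin.Vas18,Vas13} cited for the general formalism. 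In particular, the escape-from-infinity scenario you isolate is absorbed into that Fredholm machinery rather than treated by a separate high-frequency argument.

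Your alternative implementation --- working in the separated ODE picture for each fixed $m$, appealing to Gohberg--Sigal perturbation theory, and supplying an explicit bound $|\omega|\leq C(m)$ via the causal Hawking vectorfield, redshift, and a WKB non-trapping argument --- is a legitimate and more transparent route, but the high-frequency estimate you sketch is the one place where real work remains: as written it is a plausible outline rather than a proof. The paper sidesteps this entirely by quoting the~\cite{War15} estimates wholesale, which is cleaner but less self-contained.
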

\begin{proof}
  The Teukolsky system~\eqref{Teusys} belongs to the class of strongly hyperbolic operator with homogeneous boundary conditions considered in~\cite{War15}. The Laplace-transformed Teukolsky operator and its dual are therefore locally uniformly Fredholm, see~\cite[Theorems 4.3 and 4.6, Equations (4.4) and (4.11)]{War15}. Using the associated estimates one shows directly that the set of frequencies $\om\in\CCC$ associated to non-trivial modes is continuous in the parameters $(M,a,k)$ (see also \cite[Proposition 5.11]{Hin.Vas18} and~\cite[Paragraph 2.7]{Vas13} where these arguments are carried out in greater generality). \\
  Assume that there exists a mode solution with $\Im(\om)>0$ for admissible parameters $(M,a,k)$ satisfying the Hawking-Reall bound and the additional hypothesis. There exists a smooth path of such parameters joining $(M,a,k)$ to the unit mass Schwarzschild-adS parameters $(1,0,1)$. Since $r_+(1,0,1)=1$, by Theorem~\ref{thm:boundedSadS}, there exists no mode with $\Im(\om)>0$ for these parameters. Therefore, by continuity of the frequency $\om\in\CCC$, there exists admissible parameters satisfying the Hawking-Reall bound and the additional hypothesis such that the Teukolsky equations admit a real mode $\Im(\om)=0$. This contradicts Theorem~\ref{thm:main} and finishes the proof of the result.
\end{proof}

\subsection{Overview of the proof and comparison with the asymptotically flat case}\label{sec:overview}
We provide below a brief sketch of the proof of Theorem~\ref{thm:main}. The first steps are common to the asymptotically flat and Kerr-adS cases. First, we decompose regular real mode solutions $\widetilde{\alpha}^{[\pm 2]}$ as
\begin{align*}%\label{modesol2}
  \widetilde{\alpha}^{[+2]} (t,r,\vartheta,\varphi) & = e^{-i\omega t} \sum_{m\in\ZZZ}\sum_{\ell\geq|m|} R_{[+2],\omega}^{m \ell}(r) S_{m\ell}^\omega(\vartheta) e^{+im\varphi},\\
  \widetilde{\alpha}^{[-2]}(t,r,\vartheta,\varphi) & = e^{+i\omega t} \sum_{m\in\ZZZ}\sum_{\ell\geq|m|} R_{[-2],\omega}^{m \ell}(r) S_{m\ell}^\omega(\vartheta) e^{-im\varphi}.
\end{align*}
From the separability properties of the Teukolsky equations~\eqref{eq:Teukoriginal}, the radial functions $R_{[+2]}$ and $R_{[-2]}$ satisfy two second order ODEs called \emph{radial Teukolsky equations}. Each solution of these equations can be decomposed in a basis of regular/irregular solutions at the horizon:
\begin{align*}
  R_{[\pm2]}(r) & = A_{[\pm2],\HH^+}R_{[\pm2],\HH^+}(r) + A_{[\pm2],\HH^-}R_{[\pm2],\HH^-}(r). 
\end{align*}
Note that the regularity conditions at the future horizon~\eqref{eq:defreghor} for $\widetilde{\al}^{[\pm2]}$ impose that the radial Teukolsky quantities $R_{[\pm2]}$ are on the regular branches of the ODE at the horizon, \emph{i.e.}
\begin{align}\label{eq:radregintro}
  A_{[+2],\HH^{-}} = A_{[-2],\HH^-} = 0.
\end{align}
Let us first consider the non-stationary case $\om-m\om_+\neq0$. The analysis for the asymptotically flat and asymptotically anti-de Sitter case separates in two distinct cases.
\begin{itemize}
\item {\bf The asymptotically flat case.} In the asymptotically flat case, each solution decomposes on a basis of regular/irregular solutions at infinity:
  \begin{align*}
    R_{[\pm2]}(r) & = A_{[\pm2],\II^+}R_{[\pm2],\II^+}(r) + A_{[\pm2],\II^-}R_{[\pm2],\II^-}(r).
  \end{align*}
  Regularity conditions at infinity will impose that
  \begin{align}\label{eq:radregintrobis}
    A_{[+2],\II^-} = A_{[-2],\II^-} = 0.
  \end{align}
  Using the so-called \emph{Teukolsky-Starobinsky identities} -- which is an invariance of the Teukolsky ODEs by some fourth order differential operators (see Section~\ref{sec:TStrafos} for their definitions in the general Kerr-adS case) -- and a conservation of the Wronskian argument, one can obtain the following two \emph{Teukolsky conservation laws}
  \begin{subequations}\label{eq:conslawintroflat}
  \begin{align}
    \begin{aligned}
      & (\om-m\om_+)^{-1}\frac{\wp^{-1}}{\widetilde{C}(r_+^2+a^2)}|A_{[+2],\HH^+}|^2 + 16\om^5|A_{[+2],\II^+}|^2 \\
      = \;\; & (\om-m\om_+)\frac{|\xi|^2\widetilde{C}}{r_+^2+a^2}|A_{[+2],\HH^-}|^2 + \frac{\wp}{16\om^3} |A_{[+2],\II^-}|^2,
    \end{aligned}
  \end{align}
  and
  \begin{align}
    \begin{aligned}
      & (\om-m\om_+)\frac{|\xi|^2\widetilde{C}}{r_+^2+a^2}|A_{[-2],\HH^+}|^2 +\frac{\wp}{16\om^3}|A_{[-2],\II^+}|^2 \\
      = \;\; &(\om-m\om_+)^{-1}\frac{\wp^{-1}}{\widetilde{C}(r_+^2+a^2)}|A_{[-2],\HH^-}|^2 + 16\om^5|A_{[-2],\II^-}|^2,   
    \end{aligned}
  \end{align}
  \end{subequations}
  where $\wp$ is the so-called \emph{radial Teukolsky-Starobinsky constant} (see Lemma~\ref{lem:radTS} for a definition in the more general Kerr-adS case), and where $\widetilde{C}$ and $|\xi|^2$ are -- for subextremal black hole parameters -- strictly positive constants. Note that the radial Teukolsky-Starobinsky constant $\wp$ can also easily be shown to be strictly positive. The conservation laws~\eqref{eq:conslawintroflat} and the regularity conditions~\eqref{eq:radregintro} and~\eqref{eq:radregintrobis} imply that
  \begin{align}\label{eq:conslawintroflatbis}
    \begin{aligned}
      (\om-m\om_+)^{-1}\frac{\wp^{-1}}{\widetilde{C}(r_+^2+a^2)}|A_{[+2],\HH^+}|^2 + 16\om^5|A_{[+2],\II^+}|^2 & = 0, \\
      (\om-m\om_+)\frac{|\xi|^2\widetilde{C}}{r_+^2+a^2}|A_{[-2],\HH^+}|^2 +\frac{\wp}{16\om^3}|A_{[-2],\II^+}|^2 & = 0. 
    \end{aligned}
  \end{align}
  From~\eqref{eq:conslawintroflatbis} and the previous remarks, we deduce that in the \emph{non-superradiant case} $\om(\om-m\om_+)>0$,
  \begin{align*}
    A_{[+2],\HH^+} = A_{[+2],\II^+} = A_{[-2],\HH^+} = A_{[-2],\II^+} = 0,
  \end{align*}
  which proves mode stability. The above approach fails in the superradiant case $\om(\om-m\om_+)<0$. In that case, the proof of mode stability was obtained using a transformation theory introduced in~\cite{Whi89} (initially for growing modes $\Im(\om)>0$), generalised to real modes for the wave equation in~\cite{Shl15} and for the Teukolsky equations in~\cite{And.Ma.Pag.Whi17}, and to include the extremal case in~\cite{Tei20}. Recently, the transformation of Whiting~\cite{Whi89} has been given a new profound interpretation in terms of spectral symmetries in~\cite{Cas.Tei22}.\footnote{The spectral symmetries highlighted in~\cite{Cas.Tei22} also allowed the authors to provide partial mode stability results in the Kerr-de Sitter case $\La>0$. See also~\cite{Hin21} where mode stability results are obtained for a specific class of such spacetimes. The general $\Lambda>0$ case remains an open problem.}
\item {\bf The Kerr-adS case.} In the Kerr-adS case, the two radial Teukolsky quantities $R_{[+2]}$ and $R_{[-2]}$ couple at infinity \emph{via} the conformal anti-de Sitter boundary conditions~\eqref{eq:defbdycond} as:
  \begin{align}\label{eq:bdycondintro}
    R_{[+2]}-R_{[-2]}^\ast & \xrightarrow{r\to+\infty} 0, & r^2\pr_rR_{[+2]}+r^2\pr_rR_{[-2]}^\ast & \xrightarrow{r\to+\infty} 0.
  \end{align}
  Using Teukolsky-Starobinsky identities and conservation of Wronskians coupled at infinity by the boundary conditions~\eqref{eq:bdycondintro}, we obtain a Teukolsky conservation law at the horizon:
  \begin{align}\label{eq:conslawintroadS}
    \begin{aligned}
      & \aleph |A_{[+2],\HH^+}|^2 + \le|\xi\Xi(\om-m\om_+)\widetilde{C} A_{[-2],\HH^+}-12M(\Xi\om)A_{[+2],\HH^+}^\ast\ri|^2 \\
      = \;\;& \aleph |A_{[-2],\HH^-}|^2 + \le|\xi\Xi(\om-m\om_+)\widetilde{C} A_{[+2],\HH^-}-12M(\Xi\om)A_{[-2],\HH^-}^\ast\ri|^2,
    \end{aligned}
  \end{align}
  where $\aleph$ is the so-called \emph{angular Teukolsky-Starobinsky constant} (see Lemma~\ref{lem:angTS} for a definition), and where $\widetilde{C}$ and $\xi$ are the non-vanishing, Kerr-adS versions of the corresponding constants of the asymptotically flat case. Note that the angular Teukolsky-Starobinsky constant $\aleph$ can be shown to be strictly positive (see Lemma~\ref{lem:angTS} in the present paper). The conservation laws~\eqref{eq:conslawintroadS} and the regularity conditions at the horizon~\eqref{eq:radregintro} imply that
  \begin{align}\label{eq:conslawintroadSbis}
    \aleph |A_{[+2],\HH^+}|^2 + \le|\xi\Xi(\om-m\om_+)\widetilde{C} A_{[-2],\HH^+}-12M(\Xi\om)A_{[+2],\HH^+}^\ast\ri|^2 = 0.
  \end{align}
  From~\eqref{eq:conslawintroadSbis} and the above remarks, we deduce that
  \begin{align*}
    A_{[+2],\HH^+} = A_{[-2],\HH^+} = 0,
  \end{align*}
  which proves mode stability in the non-stationary case.
\end{itemize}
Let us make the following two comments.
\begin{itemize}
\item In the asymptotically flat case, the superradiant obstruction to mode stability is caused by the comparison between the values of Wronskians \underline{at the horizon} and \underline{at infinity}. Indeed, while the asymptotics at the horizon produce $(\om-m\om_+)$ factors, the asymptotics at infinity produce $\om$ factors. This leaves room for the possibility that these two factors do not have the same signs, and that not all terms in the conservation laws~\eqref{eq:conslawintroflatbis} are positive. In the Kerr-adS case, using the ``reflection at infinity'' given by~\eqref{eq:bdycondintro} between the spin-$+2$ and the spin-$-2$ Teukolsky quantities, we only compare values of Wronskians \underline{at the horizon}. The computations of the asymptotics thus only produce $(\om-m\om_+)$ terms and leave no room to sign discrepancies, thus eluding any superradiant obstruction. We emphasise that the above argument excludes non-trivial mode solutions with $\omega-m\omega_+ \neq 0$ \underline{for all} admissible Kerr-adS subextremal black hole parameters, \emph{i.e.} \underline{even if the Hawking-Reall bound~\eqref{est:HRboundoriginal} is violated}.
\item It is remarkable that it is the \underline{angular} Teukolsky-Starobinsky constant which appears in the conservation law in the Kerr-adS case~\eqref{eq:conslawintroadS}, and not the \underline{radial} Teukolsky-Starobinsky constant as in the asymptotically flat case~\eqref{eq:conslawintroflat}. The strict positivity of the angular Teukolsky-Starobinsky constant, which is crucial to the argument, requires a delicate analysis (see Lemma~\ref{lem:angTS}). Note that the radial Teukolsky-Starobinsky constant can be expressed as $\wp = \aleph + 144M^2(\Xi\om)^2$ for which it is simpler to establish the strict positivity required in the asymptotically flat case.
\end{itemize}

The second part of Theorem~\ref{thm:main} concerns the stationary case $\om-m\om_+=0$. It is obtained by angular and radial elliptic estimates (see Section~\ref{sec:stat}). The cornerstone is the positivity of a potential, which is linked to the Hawking-Reall bound (see Remark~\ref{rem:sharpHR}), and is obtained in this paper using additionally the hypothesis~\eqref{est:slowrotmore}. This is where the superradiant obstructions to mode stability arise, which, as mentioned above, are absent from the non-stationary case.

\subsection{Mathematical and numerical results on Kerr-adS spacetimes}\label{sec:related}
A mathematical study of modes for the wave equation on Schwarzschild-adS spacetimes was initiated in~\cite{Gan14}. Note also that for a general class of equations, boundary conditions and asymptotically anti-de Sitter black holes, the papers~\cite{War15,Gan18} provided general definitions of modes and general Fredholm alternative type results.\\

Numerical results for Teukolsky equations on Kerr-anti-de Sitter black holes have been obtained starting with~\cite{Car.Lem01,Mos.Nor02} in the Schwarzschild-adS case. We mention in particular the remarkable~\cite{Car.Dia.Har.Leh.San14} where modes for the full range of admissible Kerr-adS black hole parameters are numerically computed and which in particular corroborates Theorem~\ref{thm:main} and Theorem~\ref{thm:fullmode}. In that work, the main observation is that mode stability holds as long as the Hawking-Reall bound~\eqref{est:HRboundoriginal} is satisfied, while, as soon as the Hawking-Reall bound is violated, there exists growing modes $\Im(\om)>0$.\footnote{The closer one is from the Hawking-Reall regime, the larger the azimuthal number $m$ of the growing mode must be. This is consistent with our observations of the degeneracies of the potential right when $a>kr_+^2$ provided that $|m|\to+\infty$. See Remark~\ref{rem:sharpHR}.} These growing modes come from exponentially decaying modes of the Hawking-Reall regime $\Im(\om)<0$, which have crossed the real axis $\Im(\om)=0$ exactly at stationary frequencies $\om=m\om_+$ after leaving the Hawking-Reall regime. In particular, this gives evidence for the existence of periodic stationary solutions when the Hawking-Reall bound is violated (see~\cite{Dia.San.Way15} for further works in that direction). We mention that the boundary conditions used in~\cite{Car.Dia.Har.Leh.San14} are Robin boundary conditions which were first derived in~\cite{Dia.San13}.  These Robin conditions are re-derived in Appendix~\ref{sec:Robin}, where we use them to provide an alternative (to the one of Section~\ref{sec:TSconslaws}, described in Section~\ref{sec:overview}) proof of mode stability in the non-stationary case.% -- for which further works~\cite{Dia.San.Way15} has shown to be the sign of existence of so-called black resonators, which are a branch of solutions to the Einstein equations~\eqref{eq:EEcosmo} with helico\"idal symetry bifurcating from the Kerr-adS solutions.\\

\subsection{Final comments}\label{sec:finalcomments}
We comment briefly on how the above results will be used for our program (see the discussions before and at the end of Section~\ref{sec:introTeuk}). In a follow-up paper we will prove the following result:\\

% \begin{theorem*}[in preparation]%\label{thm:decay}
  \emph{All solutions to the Teukolsky system~\eqref{Teusys} on Kerr-adS spacetimes satisfying the Hawking-Reall bound~\eqref{est:HRboundoriginal} and a slow rotation hypothesis ($ak\ll 1$ is sufficient) are bounded and decay in time with an inverse logarithmic decay rate.}\\
%\end{theorem*}

The proof of the theorem will exploit that for sufficiently large angular parameters $\ell \geq \ell_c(M,a,k)$, the (inhomogeneous) Regge-Wheeler equations produced by the physical space Chandrasekhar transformation theory are perturbations (under the slow rotation hypothesis) of decoupled wave equations with positive conserved energies. See the Regge-Wheeler equations~\eqref{Teusys2}, the conserved energies~\eqref{eq:energySadS} and Remark~\ref{rem:coercivity} in the Schwarzschild-adS case. The boundedness and decay results of~\cite{Hol09,Hol.Smu13} can be applied to such equations, and these properties can be integrated back to the original Teukolsky solutions (see the proof of Theorem~\ref{thm:boundedSadS}). In the finite angular regime $2\leq \ell \leq \ell_c$, where the Regge-Wheeler potential is negative, the above analysis fails. However, here Theorem \ref{thm:main} or the full mode stability result of Theorem~\ref{thm:fullmode} in a resonance expansion of the Teukolsky quantities can be used to obtain directly the boundedness and decay.\\ 

Finally, a third paper will be devoted to obtaining the full linear stability by linearising the Einstein equations in a suitable gauge and using the above theorem.\\

% \emph{Errata in the literature.} We noticed a typo and a small gap in two reference papers.
% \begin{itemize}
% \item In~\cite{Kha83}, we think that the definition of $\Phi_4$ should be $\Phi_4:=(r-ia\cos\varth)^4\psi_4$ instead of $\Phi_4:=(r-ia\cos\varth)^3\psi_4$ (see~\cite{Teu72}).
% \item In~\cite{Cas.Tei21}, we think that there is small gap in the proof of Lemma 2.11 in the $|m|\leq 1$ case which can be fixed by setting $k=0$ in the proof of our Lemma~\ref{lem:angTS}.
% \end{itemize}

\emph{Mathematica notebook.} A Mathematica notebook containing the verifications of all the computations of the present article can be found here~\url{https://github.com/OGR38/Teukolsky_Kerr_adS}.\\

\emph{Acknowledgements.} Both authors thank Sam Collingbourne, Christopher Kauffman, Rita Teixeira da Costa and Claude Warnick for interesting discussions around this project. Special thanks to Allen Fang for precious insights on resonance theory and pointing out references for Theorem~\ref{thm:fullmode}. O.G.~thanks Corentin Cadiou for useful Mathematica coding suggestions. G.H.~acknowledges support by the Alexander von Humboldt Foundation in the framework of the Alexander von Humboldt Professorship endowed by the Federal Ministry of Education and Research as well as ERC Consolidator Grant 772249. Both authors acknowledge funding through Germany’s Excellence Strategy EXC 2044 390685587, Mathematics M\"unster: Dynamics-Geometry-Structure.\\

\emph{Data Availability Statement.} Data sharing not applicable to this article as no data sets were generated or analysed during the current study.\\

\emph{Conflict of interest.} The authors have no competing interests to declare that are relevant to the content of this article. Funding bodies are acknowledged in the Section ``Acknowledgements''.\\

\section{The radial and angular Teukolsky equations}\label{sec:radang}
Following~\cite{Kha83}, we have preliminary definitions of frequency-dependent radial and angular operators.
\begin{definition}
  Let $\om\in\RRR$ and $m\in\ZZZ$. Let
  \begin{align*}
    K & := am-\om(r^2+a^2), & H & := a\om\sin\varth-m\csc\varth. 
  \end{align*}
  For $n\in\ZZZ$, we define the following radial operators
  \begin{align*}
    \DD_n & := \pr_r + i\frac{\Xi K}{\De} + n \pr_r\log\De, & \DD_n^\dg & := \pr_r -i\frac{\Xi K}{\De} + n\pr_r\log\De,
  \end{align*}
  and the following angular operators\footnote{The definitions of the two operators $\LL$ and $\LL^\dag$ were wrongly interchanged in~\cite{Kha83}.}
  \begin{align*}
    \LL_n & := \pr_\varth - \frac{\Xi H}{\De_\varth} + n \pr_\varth(\log(\sqrt{\De_\varth}\sin\varth)), & \LL^\dg_n & := \pr_\varth + \frac{\Xi H}{\De_\varth} + n \pr_\varth(\log(\sqrt{\De_\varth}\sin\varth)).
  \end{align*}
\end{definition}

Let us define the following radial and angular decompositions. 
\begin{lemma}\label{lem:angulardecompo}
  Let $\om\in\RRR$. There exists real functions $S^\om_{m\ell}(\varth)$ and eigenvalues $\la^\om_{m\ell}\in\RRR$ indexed by $\ell\geq 2, |m|\leq\ell$ which are solutions to the eigenvalue equation
  \begin{align}\label{eq:defla}
    0 & = \sqrt{\De_\varth}\LL_{-1}^\dg\sqrt{\De_\varth}\LL_2S^\om_{m\ell}+\le(-6a\Xi\om\cos\varth+6k^2a^2\cos^2\varth+\la^\om_{m\ell}\ri)S^\om_{m\ell} =: \LLL^{m,\om}[\la_{m\ell}^\om] S_{m\ell}^\om,
  \end{align}
  with asymptotics (together with the consistent asymptotics for derivatives)
  \begin{align}\label{est:limitsStheta}
    \begin{aligned}
      S_{m\ell}^\om(\varth) & = O_{\varth\to0}((1-\cos\varth)^{(m+2)/2}), & S_{m\ell}^\om(\varth) & = O_{\varth\to\pi}((\cos\varth+1)^{(m-2)/2}), && \text{if $m>2$},\\
      S_{m\ell}^\om(\varth) & = O_{\varth\to0}((1-\cos\varth)^{(m+2)/2}), & S_{m\ell}^\om(\varth) & = O_{\varth\to\pi}((\cos\varth+1)^{(2-m)/2}), && \text{if $|m| \leq 2$},\\
      S_{m\ell}^\om(\varth) & = O_{\varth\to0}((1-\cos\varth)^{(-2-m)/2}), & S_{m\ell}^\om(\varth) & = O_{\varth\to\pi}((\cos\varth+1)^{(2-m)/2}), && \text{if $m<-2$}.
    \end{aligned}
  \end{align}
  The functions $S^\om_{m\ell}(\varth)e^{+im\varphi}$ are smooth spin-$+2$-weighted complex functions and the family $\le(S^\om_{m\ell}(\varth)e^{im\varphi}\ri)_{\ell\geq 2,|m|\leq\ell}$ forms an $L^2$-Hilbert basis of the smooth spin-$+2$-weighted complex functions. Moreover, the eigenfunctions $S^{\om}_{m\ell}$ and eigenvalues $\la^\om_{m\ell}$ are analytic in the admissible Kerr-adS black hole parameters $(a,k)$ and in the frequency $\om$, and in the Schwarzschild-adS case $a=0$ we have
  \begin{align*}
    \la^\om_{m\ell} & = \ell(\ell+1)-2.
  \end{align*}
\end{lemma}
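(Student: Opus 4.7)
The plan is to recognize $\LLL^{m,\om}[\la]$ as the $m$-th azimuthal Fourier mode of a self-adjoint elliptic operator on the compact manifold $\SSS^2$, acting on sections of the $+2$-spin weighted complex line bundle, and then to combine classical spectral theory on compact manifolds with a Frobenius analysis at the poles to extract the stated asymptotics.

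\textbf{Symmetrisation and ellipticity.} First I would expand $\sqrt{\De_\varth}\LL_{-1}^\dg\sqrt{\De_\varth}\LL_2$ using the definitions of $\LL_n$ and $\LL_n^\dg$, obtaining the manifestly formally self-adjoint divergence form
\begin{align*}
\LLL^{m,\om}[\la]S = \frac{1}{\sin\varth}\pr_\varth\le(\De_\varth\sin\varth\,\pr_\varth S\ri) - V^{m,\om}(\varth)\,S + \la S,
\end{align*}
with respect to the measure $\sin\varth\,\d\varth$. Here $V^{m,\om}$ is smooth on $(0,\pi)$ and absorbs the singular pieces $\Xi^2 H^2/\De_\varth$ blowing up at $\varth\in\{0,\pi\}$, together with the extra potential $-6a\Xi\om\cos\varth + 6k^2a^2\cos^2\varth$ from~\eqref{eq:defla}. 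The strict positivity $\De_\varth > 0$ on $[0,\pi]$, ensured by $|ak|<1$, makes the operator elliptic on $(0,\pi)$; its smooth extension across the poles then coincides, after multiplication by $e^{im\varphi}$, with a spin-weight-two angular Laplace-type operator on $\SSS^2$ restricted to the $e^{im\varphi}$-sector.

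\textbf{Frobenius analysis at the poles.} The endpoints $\varth=0,\pi$ are regular singular points. Changing variable to $u = 1-\cos\varth$ near $\varth=0$ (resp.~$u = 1+\cos\varth$ near $\varth=\pi$), the factor $H^2 = (a\om\sin\varth - m\csc\varth)^2$ combines with the first-order contributions from $\LL_2$ and $\LL_{-1}^\dg$ to produce effective poles $(m+s)^2/(4u^2)$ and $(m-s)^2/(4u^2)$ at the two endpoints, with $s=2$. The corresponding indicial exponents are $\pm|m\pm 2|/2$, and the three case splits of~\eqref{est:limitsStheta} simply track the signs of $m\pm 2$. Selecting the non-negative exponent at each pole produces exactly the stated asymptotics and coincides, after multiplication by $e^{im\varphi}$, with the spin-weighted smoothness of Definition~\ref{def:smoothspin}.

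\textbf{Discrete spectrum, analyticity, and Schwarzschild limit.} Once the smooth extension is identified, the standard spectral theory of self-adjoint elliptic operators on the compact manifold $\SSS^2$ yields a discrete real spectrum $\{\la^\om_{m\ell}\}_{\ell\geq 2}$ with simple real eigenfunctions $S^\om_{m\ell}$ (reality following from reality of the coefficients for $\om\in\RRR$); the family $(S^\om_{m\ell}(\varth)e^{im\varphi})_{\ell,m}$ then forms the claimed $L^2$-Hilbert basis of $+2$-spin weighted functions, with smoothness on the open interval obtained from elliptic regularity applied to~\eqref{eq:defla}. Analyticity in the parameters $(a, k, \om)$ follows from Kato's analytic perturbation theory: the family is holomorphic of type~(A) with domain fixed by the Frobenius exponents (which depend only on $m$), and the eigenvalues are simple. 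In the Schwarzschild-adS limit $a=0$ one has $\De_\varth = \Xi = 1$, and $\LLL^{m,\om}[\la]$ reduces to the classical spin-weight-two angular Laplace-Beltrami operator on $\SSS^2$, whose eigenvalues are the well-known $\ell(\ell+1) - 2$ associated to the spin-weighted spherical harmonics $\,{}_2Y_{\ell m}$. The main obstacle is the Frobenius step: because the two indicial exponents at each endpoint differ by the integer $|m\pm 2|$, logarithmic branches can in principle appear in the second solution, and it must be verified with care that the spin-weighted smoothness of Definition~\ref{def:smoothspin} unambiguously selects the regular branch.
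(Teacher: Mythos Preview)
Your proposal is correct and takes essentially the same approach as the paper: the paper's proof simply invokes Sturm--Liouville theory (citing \cite[Proposition~2.1]{Tei20}) for the spectral statement and Frobenius analysis at $\varth=0,\pi$ for the asymptotics~\eqref{est:limitsStheta}, which is exactly what you have unpacked in more detail. Your framing via elliptic operators on the compact manifold $\SSS^2$ acting on spin-$2$ sections is equivalent to the singular Sturm--Liouville viewpoint on $(0,\pi)$ with the regularity condition at the endpoints, and your identification of the indicial exponents $\pm|m\pm 2|/2$ and the use of Kato perturbation theory for analyticity are the standard ingredients behind the paper's one-line citation.
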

\begin{proof}
  The proof is a consequence of Sturm-Liouville theory, see the proof of~\cite[Proposition 2.1]{Tei20}. The asymptotics~\eqref{est:limitsStheta} are given by the only regular branches obtained by Frobenius analysis at $\varth=0$ and $\varth=\pi$.
\end{proof}

\begin{definition}\label{def:radialdecomp}
  Let $\om\in\RRR$ and let $\al^{[\pm2]}$ be two smooth spin-$\pm2$-weighted complex modes with frequency $\om$. We define the associated \emph{radial Teukolsky quantities} to be the radial functions $\le(R_{[\pm2],\om}^{m\ell}(r)\ri)_{\ell\geq 2,|m|\leq \ell}$ projections of the renormalised functions $\widetilde{\al}^{[\pm2]}$ on the Hilbert basis of Lemma~\ref{lem:angulardecompo}. That is,
  \begin{align*}%\label{eq:defdecomphatPsi}
    \widetilde{\al}^{[\pm2]}(t,r,\varth,\varphi) & = \sum_{\ell\geq 2}\sum_{|m|\leq \ell} e^{\mp i\om t} e^{\pm im\varphi} R_{[\pm2],\om}^{m\ell}(r)  S^\om_{m\ell}(\varth),
  \end{align*}
  for all $(t,r,\varth,\varphi)\in\RRR\times(r_+,+\infty)\times(0,\pi)\times(0,2\pi)$.
\end{definition}
\begin{remark}
  The complex conjugate of a smooth spin-$\pm2$-weighted complex function is a smooth spin-$\mp2$-weighted complex function. The decomposition of Definition~\ref{def:radialdecomp} in the spin-$-2$ case is obtained using this conjugacy.
\end{remark}

The Teukolsky equations satisfy the following separability properties.
\begin{lemma}[Radial Teukolsky equations]\label{lem:radTeuk}
  Let $\om\in\RRR$. Let $\al^{[\pm2]}$ be two spin-$\pm2$-weighted complex modes with frequency $\om$. Then, $\al^{[\pm2]}$ satisfy the Teukolsky equations~\eqref{eq:Teukoriginal} if and only if the associated radial Teukolsky quantities $R^{m\ell}_{[\pm2],\om}$ as defined in Definition~\ref{def:radialdecomp} satisfy the following \underline{radial Teukolsky equations}
  \begin{align}
    \label{eq:radialTeuk}
    \begin{aligned}
      0 & = \widetilde{\mathfrak{I}}^{m,\om}_{[+2]}[\la_{m\ell}^\om] R_{[+2],\om}^{m\ell}  := \frac{\De^3}{(r^2+a^2)^{7/2}}\mathfrak{I}^{m,\om}_{[+2]}[\la_{m\ell}^\om]\le(\frac{(r^2+a^2)^{3/2}}{\De^2}R_{[+2],\om}^{m\ell}\ri),\\
      0 & = \widetilde{\mathfrak{I}}^{-m,-\om}_{[-2]}[\la_{m\ell}^\om] R_{[-2],\om}^{m,\ell}  := \frac{\De}{(r^2+a^2)^{7/2}}\mathfrak{I}^{-m,-\om}_{[-2]}[\la_{m\ell}^\om]\le((r^2+a^2)^{3/2}R_{[-2],\om}^{m,\ell}\ri),
    \end{aligned}
  \end{align}
  where
  \begin{align*}
    \mathfrak{I}^{m,\om}_{[+2]}[\la_{m\ell}^\om]  & := \De\DD_1\DD_2^\dg + 6i\Xi\om r+6k^2r^2-\la^\om_{m\ell},\\
    \mathfrak{I}^{-m,-\om}_{[-2]}[\la_{m\ell}^\om] & := \De\DD_{-1}\DD_0^{\dg}+6i\Xi\om r+6k^2r^2-\la^\om_{m\ell}.
  \end{align*}
  % \begin{align*}
  %   \mathfrak{I}^{m,\ell,\om}_{[s]} & := \De^{-s}\frac{\d}{\d r}\le(\De^{1+s}\frac{\d}{\d r}\ri) + \frac{1}{\De}\le(\Xi^2(\om(r^2+a^2)-am)^2-is\Xi(\om(r^2+a^2)-am)\frac{\d\De}{\d r}\ri)\\
  %     & \quad + \le(4is\om\Xi r +2k^2(3s+8)r^2 + s(1+a^2k^2) - \la_{m,\ell} +2\Xi^2am\om-a^2\Xi^2\om^2\ri).
  % \end{align*}
  Moreover,
  \begin{itemize}
  \item the functions $\widetilde{\al}^{[\pm2]}$ satisfy the conformal anti-de Sitter boundary conditions at infinity~\eqref{eq:defbdycond} if and only if
    \begin{align}\label{eq:radialbdyconditions}
      \begin{aligned}
        R_{[+2],\om}^{m\ell}-\le(R_{[-2],\om}^{m\ell}\ri)^\ast & \to 0, & r^2\pr_rR_{[+2],\om}^{m\ell}+r^2\pr_r\le(R_{[-2],\om}^{m\ell}\ri)^\ast & \to 0, 
      \end{aligned}
    \end{align}
    when $r\to +\infty$,
  \item if $\om-m\om_+\neq 0$, all solutions to the two radial Teukolsky equations~\eqref{eq:radialTeuk} can be written respectively as
    \begin{align}\label{eq:radialregconditionbasisdecomp}
      \begin{aligned}
      R_{[+2],\om}^{m\ell} & = A_{[+2],\om,\HH^+}^{m\ell}R_{[+2],\om,\HH^+}^{m\ell} + A_{[+2],\om,\HH^-}^{m\ell}R_{[+2],\om,\HH^-}^{m\ell},\\
      R_{[-2],\om}^{m\ell} & = A_{[-2],\om,\HH^+}^{m\ell}R_{[-2],\om,\HH^+}^{m\ell} + A_{[-2],\om,\HH^-}^{m\ell}R_{[-2],\om,\HH^-}^{m\ell},
      \end{aligned}
    \end{align}
    where $A_{[\pm2],\om,\HH^{\pm}}^{m\ell}$ are complex-valued constants and where $R_{[+2],\om,\HH^+}^{m\ell},R_{[+2],\om,\HH^-}^{m\ell}$ and $R_{[-2],\om,\HH^+}^{m\ell},R_{[-2],\om,\HH^-}^{m\ell}$ are respectively the unique basis of solutions to the two Teukolsky equations~\eqref{eq:radialTeuk} satisfying\footnote{From Lemma~\ref{lem:Teukeqrel}, it will follow that $R_{[+2],\om,\HH^+}^{m\ell} = R_{[-2],\om,\HH^{-}}^{m\ell}$ and $R_{[+2],\om,\HH^-}^{m\ell} = R_{[-2],\om,\HH^+}^{m\ell}$.}
      \begin{align}\label{eq:radialregconditionsbasis}
        \begin{aligned}
        R_{[+2],\om,\HH^+}^{m\ell} & = e^{-i\Xi(\om-m\om_+) r^\ast}F_{[+2],\om,\HH^+}^{m\ell}(r), & R_{[+2],\om,\HH^-}^{m\ell} & = e^{+i\Xi(\om-m\om_+)r^\ast} \De^2F_{[+2],\om,\HH^-}^{m\ell}(r),\\
        R_{[-2],\om,\HH^+}^{m\ell} & = e^{+i\Xi(\om-m\om_+) r^\ast}\De^2F_{[-2],\om,\HH^+}^{m\ell}(r), & R_{[-2],\om,\HH^-}^{m\ell} & = e^{-i\Xi(\om-m\om_+)r^\ast}F_{[-2],\om,\HH^-}^{m\ell}(r),
        \end{aligned}
      \end{align}
      where $F_{[\pm2],\om,\HH^{\pm}}^{m\ell}:[r_+,+\infty)\to\CCC$ are smooth functions of $r$ satisfying $F_{[\pm2],\om,\HH^{\pm}}^{m\ell}(r_+)=1$. With these notations, the regularity conditions at the horizon~\eqref{eq:defreghor} for the functions $\widetilde{\al}^{[\pm2]}$ is equivalent to
      \begin{subequations}\label{eq:radialreggeneral}
        \begin{align}\label{eq:radialregconditions}
          \begin{aligned}
            A_{[+2],\om,\HH^-}^{m\ell}=A_{[-2],\om,\HH^-}^{m\ell}=0,
          \end{aligned}
        \end{align}
      \item if $\om-m\om_+=0$, the regularity conditions~\eqref{eq:defreghor} at the horizon for the functions $\widetilde{\al}^{[\pm2]}$ imply
    \begin{align}\label{eq:radialregstat}
      R_{[\pm2],\om}^{m\ell} & = \De^2 F_{[\pm2],\om}^{m\ell}(r),
    \end{align}
    \end{subequations}
    with $F_{[\pm2],\om}^{m\ell}:[r_+,+\infty)\to\CCC$ smooth functions of $r$.
  \end{itemize}
\end{lemma}
\begin{remark}
  The proof of the last part~\eqref{eq:radialregstat} will make use of some of the algebraic identities derived (independently) only in Section~\ref{sec:TStrafos}. We have included the statement here to collect all relevant properties in one lemma. 
\end{remark}
\begin{proof}
  Equations~\eqref{eq:radialTeuk} follow from~\cite[Equations (3.19) and (3.21)]{Kha83}.\footnote{We think that the definition of $\Phi_4$ in~\cite{Kha83} should be $\Phi_4:=(r-ia\cos\varth)^4\psi_4$ instead of $\Phi_4:=(r-ia\cos\varth)^3\psi_4$.} See also~\cite[Equation (3.9)]{Cas.Tei22}. The equivalence of the boundary conditions~\eqref{eq:defbdycond} and~\eqref{eq:radialbdyconditions} is directly obtained from the angular and radial decomposition of Definition~\ref{def:radialdecomp}. The existence of a unique basis of solutions satisfying~\eqref{eq:radialregconditionsbasis} is obtained by a direct Frobenius expansion at the regular singular point $r=r_+$. To prove~\eqref{eq:radialregconditions} and~\eqref{eq:radialregstat} we first define 
  \begin{align*}
    c^2(M,a,k) & := \frac{1}{r_+^2+a^2}\pr_r\De(r=r_+)>0.
  \end{align*}
  From the regularity condition at the horizon~\eqref{eq:defreghor} for $\widetilde{\al}^{[+2]}$ and the angular decomposition of Definition~\ref{def:radialdecomp} we have
  \begin{align}\label{eq:Lbradial}
    \begin{aligned}
      (-i\Xi(\om-m\om_+)-\pr_{r^\ast})R_{[+2],\om}^{m\ell} & = \Lb\le(e^{-i\om t}e^{im\varphi}R_{[+2],\om}^{m\ell}\ri) + O_{r\to r_+}(\De)\\
      & =  O_{r\to r_+}(\De) \\
      & = O_{r^\ast\to-\infty}(e^{c^2(a,M,k) r^\ast}).
      \end{aligned}
  \end{align}
  Integrating~\eqref{eq:Lbradial} in $r^\ast$, we obtain that
  \begin{align*}
    R_{[+2],\om}^{m\ell}e^{+i\Xi(\om-m\om_+) r^\ast} & = O(1),
  \end{align*}
  when $r^\ast\to-\infty$. From an analogous limit for $\De^{-2}R_{[-2],\om}^{m\ell}$ as well as higher regularity asymptotics follow similarly and we infer~\eqref{eq:radialregconditions}. In the case $\om-m\om_+=0$, we deduce from~\eqref{eq:Lbradial} that
  \begin{align*}
    \pr_rR_{[+2],\om}^{m\ell} & = O_{r\to r_+}(1),
  \end{align*}
  and by using the regularity conditions~\eqref{eq:defreghor} further
  \begin{align}\label{eq:prnR+2}
    \pr_r^nR_{[+2],\om}^{m\ell} & = O_{r\to r_+}(1),
  \end{align}
  for all $n\in\mathbb{N}$. A Frobenius analysis at $r=r_+$ for the radial Teukolsky equation~\eqref{eq:radialTeuk} shows that there are two non-trivial solutions $R_{\mathrm{irreg}}$ and $R_{\mathrm{reg}}$ with
  \begin{align*}
    R_{\mathrm{irreg}}(r) & = f(r) + C\De^2 \log(\De) g(r) , & R_{\mathrm{reg}} & = \De^2 g(r),
  \end{align*}
  where $f,g$ are smooth functions on $[r_+,+\infty)$ such that $f(r_+)=g(r_+)=1$, and where $C\in\mathbb{C}$. Anticipating on the Teukolsky-Starobinsky identity~\eqref{eq:invTS}, we have
  \begin{align*}
    \wp(m,m\om_+,\la^{m\om_+}_{m\ell}) R_{\mathrm{irreg}}(r) & = \frac{\De^2}{(r^2+a^2)^{3/2}}\le(\DD_0\ri)^4\De^2(\DD_0^\dg)^4\le((r^2+a^2)^{3/2}\le(f(r) + C\De^2 \log(\De) g(r)\ri)\ri), 
  \end{align*}
  for all $r>r_+$, where $\wp(m,m\om_+,\la^{m\om_+}_{m\ell})$ is the radial Teukolsky-Starobinsky constant defined in Lemma~\ref{lem:radTS}. Hence, since $f$ is smooth and does not vanish at $r=r_+$, and since $\DD_0$, $\DD_0^\dg$ are regular operators at $r=r_+$ in the stationary case $\om=m\om_+$, one has
  \begin{align*}
    C = 0 \implies \wp(m,m\om_+,\la^{m\om_+}_{m\ell})=0.
  \end{align*}
  The radial Teukolsky-Starobinsky constant $\wp(m,m\om_+,\la^{m\om_+}_{m\ell})$ being always striclty positive (see Lemmas~\ref{lem:angTS} and \ref{lem:radTS}), we infer an immediate contradiction and hence that $C\neq0$. Hence, we deduce from~\eqref{eq:prnR+2} that $R_{[+2],\om}^{m\ell}$ must be proportional to $R_{\mathrm{reg}}$ (since $R_{\mathrm{irreg}}$ is not smooth at $r=r_+$ because $C\neq0$). Along the same lines, we also have that $R_{[-2],\om}^{m\ell}$ is proportional to $R_{\mathrm{reg}}$ and this finishes the proof of~\eqref{eq:radialregstat}.
\end{proof}

% \begin{lemma}
%   We have
%   \begin{align*}
%     \mathfrak{I}_{[+2]}^{m,\ell,\om} & = \frac{\d^2}{(\d r^\ast)^2} - 4 \frac{(3M-r)r^2 + a^4k^2r-a^2(M+r-k^2r^3)}{(r^2+a^2)^2}\frac{\d}{\d r^\ast} + V^{m,\ell,\om}_{[+2]},
%   \end{align*}
%   where
%   \begin{align*}
%     V^{m,\ell,\om}_{[+2]} & := 
%   \end{align*}

% \end{lemma}

A direct computation gives the following relations for the radial Teukolsky operators.
\begin{lemma}\label{lem:Teukeqrel}
  We have
  \begin{align*}
    \widetilde{\mathfrak{I}}_{[+2]}^{m,\om}[\la_{m\ell}^\om] & = \widetilde{\mathfrak{I}}_{[-2]}^{-m,-\om}[\la_{m\ell}^\om] = \le(\widetilde{\mathfrak{I}}_{[-2]}^{m,\om}[\la_{m\ell}^\om]\ri)^\ast = \le(\widetilde{\mathfrak{I}}_{[+2]}^{-m,-\om}[\la_{m\ell}^\om]\ri)^\ast \\
                                                             & = \pr_{r^\ast}^2 +\le(\pr_{r^\ast}\le(\log\le(\frac{(r^2+a^2)^4}{\De^2}\ri)\ri)\ri)\pr_{r^\ast} + V^{m,\om}[\la^\om_{m\ell}],
  \end{align*}
  with
  \begin{align*}
    V^{m,\om}[\la^\om_{m\ell}] & := \frac{\De}{(r^2+a^2)^2}\le(8i\Xi\om r+6k^2r^2-\la^\om_{m\ell}\ri)+\frac{\pr_r\De}{(r^2+a^2)^2}2i\Xi K  + \frac{\Xi^2K^2}{(r^2+a^2)^2} \\
                               & \quad -3r\frac{\De\pr_r\De}{(r^2+a^2)^{3}} + 3(2r^2+a^2)\frac{\De^2}{(r^2+a^2)^4}.
  \end{align*}
\end{lemma}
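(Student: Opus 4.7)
The statement is an algebraic identity between several differential operators, so the proof plan is essentially a direct expansion organised around symmetries.

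\medskip

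\textbf{Step 1: Reduce the four equalities to one computation via symmetries.} The operator $\DD_n$ (resp.~$\DD_n^\dg$) differs from $\DD_n^\dg$ (resp.~$\DD_n$) only in the sign of the $i\Xi K/\De$ term. Since $K = am - \om(r^2+a^2)$ transforms as $K \to -K$ under $(m,\om) \to (-m,-\om)$, this substitution interchanges $\DD_n$ and $\DD_n^\dg$. Taking complex conjugates (with $\om,m$ real) has exactly the same effect on the $i\Xi K/\De$ term and additionally flips the sign of the explicit $6i\Xi\om r$ potential. Combining these two involutions, the four equalities in the claim reduce to verifying a single explicit expression for, say, $\widetilde{\mathfrak{I}}^{m,\om}_{[+2]}[\la]$, together with the observation that the target second-order operator (with potential $V^{m,\om}[\la]$) is manifestly invariant under $(m,\om)\to(-m,-\om)$ composed with complex conjugation (it depends on $K$ only through $K^2$ and is otherwise real).

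\medskip

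\textbf{Step 2: Expand $\widetilde{\mathfrak{I}}^{m,\om}_{[+2]}[\la]$ in $\pr_r$.} Writing $\DD_1 \DD_2^\dg = (\pr_r + i\Xi K/\De + \pr_r\log\De)(\pr_r - i\Xi K/\De + 2\pr_r\log\De)$ produces a second-order operator whose principal part is $\pr_r^2$, whose first-order coefficient assembles $3\pr_r\log\De$ together with purely imaginary pieces coming from the $\pr_r$ hitting $i\Xi K/\De$, and whose zero-th order contribution collects $\Xi^2 K^2/\De^2$, $\pr_r(i\Xi K/\De)$, $2(\pr_r\log\De)^2 + \pr_r^2\log\De$, plus cross terms. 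I then multiply on the right by $(r^2+a^2)^{3/2}/\De^2$ and on the left by $\De^3/(r^2+a^2)^{7/2}$; the logarithmic derivatives of the conjugation weights enter additively into the first-order coefficient and generate further zero-th order contributions. Careful bookkeeping of the imaginary first-order terms shows they combine into a purely real coefficient, once one includes the $\De$-prefactor multiplying $\DD_1\DD_2^\dg$ in $\mathfrak{I}^{m,\om}_{[+2]}$.

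\medskip

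\textbf{Step 3: Convert to $\pr_{r^\ast}$ and identify the potential.} Using $\pr_{r^\ast} = \frac{\De}{r^2+a^2}\pr_r$ I renormalise so that the leading coefficient becomes $1$; the first-order coefficient then rearranges into $\pr_{r^\ast}\log\bigl((r^2+a^2)^4/\De^2\bigr)$ after using the identity $\pr_r\log((r^2+a^2)^4/\De^2) = 4\pr_r\log(r^2+a^2) - 2\pr_r\log\De$ and collecting. The remaining zero-th order terms then give the stated $V^{m,\om}[\la]$: $\Xi^2 K^2/(r^2+a^2)^2$ comes directly from $\De\cdot (\Xi K/\De)^2/(r^2+a^2)^2$; the term $2i\Xi K \pr_r\De/(r^2+a^2)^2$ appears from the cross terms between $\DD_1$ and $\DD_2^\dg$ after multiplying by $\De$; the $\De/(r^2+a^2)^2\bigl(8i\Xi\om r + 6k^2 r^2 - \la\bigr)$ piece comes from the explicit potential $6i\Xi\om r + 6k^2r^2 - \la$ plus the contribution $\De \pr_r(i\Xi K/\De) = -i\Xi K'\De/(r^2+a^2) + i\Xi K \pr_r\De/(r^2+a^2)\cdot\ldots$ (with $K'=-2\om r$ producing the $8i\Xi\om r$ shift from $6$ to $8$); finally, the remaining purely geometric contributions $-3r\De\pr_r\De/(r^2+a^2)^3 + 3(2r^2+a^2)\De^2/(r^2+a^2)^4$ come from the logarithmic derivatives of $(r^2+a^2)^{3/2}/\De^2$ and their interactions with $\pr_r\log\De$.

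\medskip

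\textbf{Main obstacle.} There is no conceptual difficulty — everything reduces to algebra. The actual bookkeeping challenge is the one in Step 3: verifying that all imaginary first-order terms cancel against the conjugation weights so that the final first-order coefficient takes the clean form $\pr_{r^\ast}\log\bigl((r^2+a^2)^4/\De^2\bigr)$, and that the numerous real zero-th order contributions rearrange into exactly the five summands defining $V^{m,\om}[\la]$. The parallel computation for $\widetilde{\mathfrak{I}}^{-m,-\om}_{[-2]}[\la]$ (with weights $(r^2+a^2)^{3/2}$ and $\De/(r^2+a^2)^{7/2}$, and with $\DD_{-1}\DD_0^\dg$ in place of $\DD_1\DD_2^\dg$) produces precisely the same $\pr_{r^\ast}$-form; this can either be verified directly or, as in Step 1, deduced by the $(m,\om)\to(-m,-\om)$ symmetry applied to the explicit answer.
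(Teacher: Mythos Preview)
Your proposal is correct and takes essentially the same approach as the paper: the paper's ``proof'' is simply the assertion that this lemma follows from a direct computation (with verification relegated to the companion Mathematica notebook), and your outline is a reasonable organisation of that computation via the obvious $(m,\om)\to(-m,-\om)$ and complex-conjugation symmetries. One small clarification: the symmetry argument in Step~1 cleanly relates $\widetilde{\mathfrak{I}}_{[+2]}^{m,\om}$ to $\bigl(\widetilde{\mathfrak{I}}_{[+2]}^{-m,-\om}\bigr)^\ast$ and $\widetilde{\mathfrak{I}}_{[-2]}^{-m,-\om}$ to $\bigl(\widetilde{\mathfrak{I}}_{[-2]}^{m,\om}\bigr)^\ast$, but does not by itself link the $[+2]$ and $[-2]$ operators (different conjugation weights, different $\DD$-indices); you do acknowledge this in your final paragraph, so the plan stands once the two separate computations (one for each spin) are carried out.
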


\section{Teukolsky-Starobinsky transformations} \label{sec:TStrafos}
\begin{lemma}[Angular Teukolsky-Starobinsky transformations]\label{lem:angTS}
  Let $S_{m\ell}^\om$ be a solution to~\eqref{eq:defla} with $\la=\la^\om_{m\ell}$ such that $S_{m\ell}^\om e^{im\varphi}$ is a smooth spin-$+2$-weighted function. Let us define the \emph{associated Teukolsky-Starobinsky quantity} $S_{m\ell,c}^\om$ by
  \begin{align*}
    S_{m\ell,c}^\om & := \sqrt{\De_\varth}\LL_{-1}\sqrt{\De_\varth}\LL_{0}\sqrt{\De_\varth}\LL_1\sqrt{\De_\varth}\LL_2 S_{m\ell}^\om.
  \end{align*}
  Then, we have
  \begin{align*}
    0 & = \LLL^{-m,-\om}[\la^\om_{m\ell}]S_{m\ell,c}^\om = \sqrt{\De_\varth}\LL_{-1}\sqrt{\De_\varth}\LL_2^\dg S^\om_{m\ell,c}+\le(6a\Xi\om\cos\varth+6k^2a^2\cos^2\varth+\la^\om_{m\ell}\ri)S^\om_{m\ell,c}.
  \end{align*}
  Moreover, we have the following formula
  \begin{align*}
    \sqrt{\De_\varth}\LL_{-1}^\dg\sqrt{\De_\varth}\LL_{0}^\dg\sqrt{\De_\varth}\LL_1^\dg\sqrt{\De_\varth}\LL_2^\dg S_{m\ell,c}^\om & = \aleph(m,\ell,\la^{\om}_{m\ell}) S^{\om}_{m\ell},
  \end{align*}
  where $\aleph(m,\om,\la^\om_{m\ell})$ is the \emph{angular Teukolsky-Starobinsky constant} defined by
  \begin{align*}
    \aleph(m,\om,\la^\om_{m\ell}) & := 4 (\la_{m\ell}^\om)^2 + 4 (\la_{m\ell}^\om)^3 + (\la_{m\ell}^\om)^4 + 8 a \la_{m\ell}^\om (6 + 5 \la_{m\ell}^\om) (\Xi m)(\Xi \om)  \\
                                  & \quad - 144 a^3 (\Xi m)(\Xi\om) \le(k^2 (-2 + \la_{m\ell}^\om) + 2 (\Xi\om)^2\ri) + 4 a^4 \le(k^2 (-6 + \la_{m\ell}^\om) + 6 (\Xi\om)^2\ri)^2 \\
                                  & \quad + 4 a^2 k^2 \la_{m\ell}^\om \le(-12 - 4 \la_{m\ell}^\om + (\la_{m\ell}^\om)^2 + 24 (\Xi m)^2\ri) + 8a^2 \le(6 \la_{m\ell}^\om - 5 (\la_{m\ell}^\om)^2 + 18 (\Xi m)^2\ri) (\Xi\om)^2 \\
                                  & > 0.
  \end{align*}
\end{lemma}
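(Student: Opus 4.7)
The plan is to treat this claim as a composite Teukolsky--Starobinsky identity realised through spin-weighted ladder operators, in which $\sqrt{\De_\varth}\LL_n$ and $\sqrt{\De_\varth}\LL_n^\dg$ play the role of raising/lowering counterparts. I would first establish one-step intertwining relations expressing the commutator of $\sqrt{\De_\varth}\LL_n^\dg$ with $\sqrt{\De_\varth}\LL_{n+1}$ as a zero-order shift of the ``potential'' part of $\LLL^{m,\om}[\la]$. Iterating these relations and absorbing the accumulated shifts as the sign flips $m\mapsto -m$, $\om\mapsto-\om$, one obtains the operator-level intertwining
\[
\LLL^{-m,-\om}[\la]\circ\mathcal R \;=\; \mathcal R \circ \LLL^{m,\om}[\la],
\]
with $\mathcal R := \sqrt{\De_\varth}\LL_{-1}\sqrt{\De_\varth}\LL_0\sqrt{\De_\varth}\LL_1\sqrt{\De_\varth}\LL_2$. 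The first statement of the lemma then follows immediately by applying $\mathcal R$ to $S^\om_{m\ell}$ and invoking $\LLL^{m,\om}[\la^\om_{m\ell}]S^\om_{m\ell}=0$.

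For the second statement, let $\mathcal R^\dg$ denote the analogous chain built from the dagger operators. By the same intertwining mechanism, $\mathcal R^\dg \mathcal R$ commutes with $\LLL^{m,\om}[\la^\om_{m\ell}]$, and it is a scalar differential operator of order eight in $\varth$. Since the eigenspace of $\LLL^{m,\om}[\la^\om_{m\ell}]$ is one-dimensional by the Sturm--Liouville theory invoked in Lemma~\ref{lem:angulardecompo}, $\mathcal R^\dg \mathcal R$ must act on $S^\om_{m\ell}$ as multiplication by a scalar, which is by definition $\aleph(m,\om,\la^\om_{m\ell})$. The explicit value is then computed by expanding the composite symbol and using the eigenvalue equation~\eqref{eq:defla} iteratively to eliminate all $\varth$-derivatives of $S^\om_{m\ell}$ of order $\geq 2$, until only the multiplicative coefficient survives. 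This reduction is algebraic and best carried out with a computer algebra system.

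The hard part is the strict positivity of $\aleph$. Because the explicit formula contains indefinite-sign cross-terms mixing $a\Xi\om$, $\Xi m$ and $a^2k^2$, positivity is not manifest. My strategy would be to rewrite $\aleph$ as a sum of squares plus a controlled residual by completing a square in $\la^\om_{m\ell}$ and extracting a dominant fourth-power contribution of the form $\bigl((\la^\om_{m\ell})^2+\text{l.o.t.}\bigr)^2$, then dominating the remainder by using a Rayleigh-quotient lower bound on $\la^\om_{m\ell}$. This bound is obtained by pairing~\eqref{eq:defla} with $S^\om_{m\ell}$ against the measure $\sin\varth\,\mathrm d\varth$ and integrating by parts, producing an identity controlling $\la^\om_{m\ell}$ from below in terms of $(\Xi m)^2$, $(a\Xi\om)^2$ and the admissibility bound $ak<1$. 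The genuinely delicate regime, where $\la^\om_{m\ell}$ approaches its Sturm--Liouville minimum and superradiant-type cancellations threaten to tip the balance, must be handled by carefully matching these quadratic bounds rather than by a brute estimate.
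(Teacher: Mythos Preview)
Your treatment of the first two claims (the intertwining relation and the inversion formula with the explicit constant $\aleph$) is fine and matches the paper: both are direct computations, and the paper in fact relegates them to a Mathematica notebook.

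For the strict positivity of $\aleph$, however, your approach has a genuine gap. You propose to attack the explicit polynomial expression for $\aleph$ directly, completing squares and combining with Rayleigh-quotient lower bounds on $\la^\om_{m\ell}$. This is very unlikely to close: the formula has many indefinite-sign cross-terms, the eigenvalue $\la^\om_{m\ell}$ is not given explicitly, and the bounds one obtains from pairing~\eqref{eq:defla} against $S^\om_{m\ell}$ are too crude to dominate the residual uniformly in $(m,\om,\ell)$. Your own description of the ``genuinely delicate regime'' is a signal that the argument is incomplete there.

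The paper's proof uses a structural trick you have overlooked. The operators $\sqrt{\De_\varth}\LL_n$ and $\sqrt{\De_\varth}\LL_n^\dg$ are formal adjoints for the measure $\sin\varth\,\d\varth$, so integrating by parts (the boundary terms vanish by the regularity asymptotics~\eqref{est:limitsStheta}) gives
\[
\aleph \int_0^\pi |S^\om_{m\ell}|^2 \sin\varth\,\d\varth \;=\; \int_0^\pi |S^\om_{m\ell,c}|^2 \sin\varth\,\d\varth \;\geq\; 0.
\]
This yields $\aleph\geq 0$ for free, and reduces the problem to showing that the Teukolsky--Starobinsky transform $S^\om_{m\ell,c}$ does not vanish identically. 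That is a local question: one computes the leading Frobenius coefficient of $S^\om_{m\ell,c}$ at $\varth=0$ (and, when necessary, at $\varth=\pi$) in terms of the leading coefficient of $S^\om_{m\ell}$. For $|m|\geq 2$ the leading coefficient is a nonzero multiple, so $S^\om_{m\ell,c}\equiv 0$ forces $S^\om_{m\ell}\equiv 0$. For $|m|\leq 1$ the leading coefficients can vanish only for a finite explicit list of triplets $(m,\la,\Xi\om)$; for each of these one checks via an energy identity (multiply~\eqref{eq:defla} by $S$ and integrate) that the resulting potential is strictly positive, so no regular eigenfunction exists with those parameters. This case analysis is what actually handles the ``delicate regime'', and it is tractable precisely because the integration-by-parts identity has already reduced the question to a discrete set of exceptional values rather than a continuous estimate.
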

\begin{proof}
  All these identities except the positivity of the angular Teukolsky-Starobinsky constant are direct computations and are left to the reader.\footnote{\label{foo:Mathematica}A Mathematica notebook containing the verifications of all the computations of the present article is included as an ancillary arXiv file. It can also be found here~\url{https://github.com/OGR38/Teukolsky_Kerr_adS}.} To obtain the positivity of $\aleph$ we revisit arguments from the proof of~\cite[Lemmas 2.9 and 2.11]{Cas.Tei21}.\footnote{We think that there is small gap in the proof of~\cite[Lemma 2.11]{Cas.Tei21} in the $|m|\leq 1$ case which can be fixed by setting $k=0$ in our argument.}\;% \footnote{While in the $|m|\geq 2$ case the argument below is essentially similar to the argument found in this work, the treatment of the case $|m|\leq 1$ is fundamentally different, due to a logic mistake in that work. The correction to~\cite{Cas.Tei21} can be obtained by setting $k=0$ in the $|m|\leq 1$ case.}
  Note that the $a=0$ case is immediate from the expression of $\aleph$ and the expression of $\la_{m\ell}^\om$ (see Lemma~\ref{lem:angulardecompo}) and we shall therefore assume that $a>0$. Note also that the problem is invariant under the transformation
  \begin{align*}
    \varth & \to \pi -\varth, & m & \to -m, & \om & \to -\om,
  \end{align*}
  and we shall therefore assume $m\geq0$. By integration by parts we have
  \begin{align}\label{eq:proofIBPangTS}
    \begin{aligned}
      & \aleph\int_{0}^\pi |S_{m\ell}^\om|^2\,\sin\varth\d\varth \\
      = & \; \int_{0}^\pi \sqrt{\De_\varth}\LL_{-1}^\dg\sqrt{\De_\varth}\LL_0^\dg\sqrt{\De_\varth}\LL_{1}^\dg\sqrt{\De_\varth}\LL_2^\dg \sqrt{\De_\varth}\LL_{-1}\sqrt{\De_\varth}\LL_0\sqrt{\De_\varth}\LL_1\sqrt{\De_\varth}\LL_2S_{m\ell}^\om S_{m\ell}^\om \,\sin\varth\d\varth \\
    = & \; \int_{0}^\pi \le|\sqrt{\De_\varth}\LL_{-1}\sqrt{\De_\varth}\LL_0\sqrt{\De_\varth}\LL_1\sqrt{\De_\varth}\LL_2S_{m\ell}^\om\ri|^2\,\sin\varth\d\varth \geq 0,
    \end{aligned}
  \end{align}
  from which we obtain $\aleph\geq 0$. We also infer from~\eqref{eq:proofIBPangTS} that $\aleph(m,\om,\la^{\om}_{m\ell})=0$ iff
  \begin{align}\label{eq:zeroTSang}
     S^\om_{m\ell,c} = \sqrt{\De_\varth}\LL_{-1}\sqrt{\De_\varth}\LL_0\sqrt{\De_\varth}\LL_1\sqrt{\De_\varth}\LL_2S_{m\ell}^\om=0.
  \end{align}

  In the $m \geq 2$ case, a direct computation using the asymptotics~\eqref{est:limitsStheta} gives
  \begin{align*}%\label{est:limitsSthetaTS}
    \begin{aligned}
      (1-\cos\varth)^{-(m-2)/2}S^\om_{m\ell,c}(\varth) & \xrightarrow{\varth\to0} -4\Xi^2(m-1)m(m+1)(m+2)\le(\lim_{\varth\to0}(1-\cos\varth)^{-(m+2)/2}S_{m\ell}^\om(\varth)\ri),%  && \text{if $m \geq 2$},\\
      % (\cos\varth+1)^{+(m+2)}S^\om_{m\ell,c}(\varth) & \underset{\varth\to\pi}{\sim} +2\Xi^2(m-2)(m-1)m(m+1) \le(\lim_{\varth\to\pi}(\cos\varth+1)^{+(m-2)}S_{m\ell}^\om(\varth)\ri), && \text{if $m \leq-2$}.
    \end{aligned}
  \end{align*}
  Thus,~\eqref{eq:zeroTSang} holds only if $\lim_{\varth\to0}(1-\cos\varth)^{-(m+2)/2}S_{m\ell}^\om(\varth) = 0$. % in the $m \geq 2$ case, and if $\lim_{\varth\to\pi}(\cos\varth+1)^{(m-2)}S_{m\ell}^\om(\varth)=0$ in the $m \leq-2$ case.
  % From a Frobenius analysis of the ODE~\eqref{eq:defla} at respectively $\varth=0$ and $\varth=\pi$, this imposes that $S_{m\ell}^\om = 0$ in both cases.\\
  From a Frobenius analysis of the ODE~\eqref{eq:defla} at $\varth=0$ this implies that $S_{m\ell}^\om = 0$.\\

  In the $m=0$ case, a computation gives
  \begin{align*}
    \begin{aligned}
      (1-\cos\varth)^{-1}S^\om_{0\ell,c}(\varth) & \xrightarrow{\varth\to0} \le(\ka_0 + \ka_0'\ri)\le(\lim_{\varth\to0}(1-\cos\varth)^{-1}S_{0\ell}^\om(\varth)\ri),\\
      (1-\cos\varth)^{-1}S^\om_{0\ell,c}(\varth) & \xrightarrow{\varth\to\pi} (\ka_0-\ka_0') \le(\lim_{\varth\to\pi}(1-\cos\varth)^{-1}S_{0\ell}^\om(\varth)\ri).
    \end{aligned}
  \end{align*}
  with
  \begin{align*}
    \ka_0 & := \la_{0\ell}^\om (2 + \la_{0\ell}^\om)  - 24 a^4 k^2 \om^2 + 12 a^6 k^4 \om^2 + 2 a^2 \le(k^2 (-6 + \la_{0\ell}^\om) + 6 \om^2\ri), & \ka_0' & :=  8 a \la_{0\ell}^\om \Xi\om.
  \end{align*}
  From a Frobenius analysis of~\eqref{eq:defla} at $\varth=0,\pi$, identity~\eqref{eq:zeroTSang} can hold for non-trivial $S_{0\ell,c}^\om$ only if $\ka_0=\ka'_0=0$. Using that $a>0$, this only holds in the following four cases
  \begin{subequations}\label{eq:m0vanishingTS}
    \begin{align}
      m & = 0, & \la & = 0, & \Xi\om & = \pm k,\label{eq:m0la0}
                                       \intertext{or}
      m & = 0, & \la & = -1-a^2k^2 \pm \sqrt{1+14a^2k^2+a^4k^4}, & \Xi\om & = 0.\label{eq:m0om0}
    \end{align}
    \end{subequations}
    In the $m=1$ case, a computation gives
    \begin{align*}
      \begin{aligned}
        (1-\cos\varth)^{-1/2}S_{1\ell,c}^\om(\varth) & \xrightarrow{\varth\to0} -\ka_1 \le(\lim_{\varth\to0}(1-\cos\varth)^{-3/2}S_{1\ell}^\om(\varth)\ri),
      \end{aligned}
    \end{align*}
    with
    \begin{align*}
      \ka_1 & := 12 \Xi (6 a^2 k^2 + \la + 6 a \Xi \om).
    \end{align*}
    Thus we shall assume that $\ka_1=0$ (\emph{i.e.} $\la = - 6a^2k^2-6a\Xi\om$) otherwise $S_{1\ell}^\om=0$. Under this assumption, a direct computation gives that
    \begin{align}\label{eq:ODETSm1}
      \begin{aligned}
        S_{1\ell,c}^\om & = -8\ka_1'a^3\sin^2\varth\Deth^{-2}\le(f(\varth)S_{1\ell}^\om(\varth) - \sin\varth\Deth\pr_\varth S_{1\ell}^\om(\varth)\ri), 
      \end{aligned}
    \end{align}
    with
    \begin{align*}
      \ka_1' & := (a k^2 + \Xi\om) (k - a k^2 - \Xi\om) (k + a k^2 + \Xi\om)\\
      f(\varth) & := 2+\cos\varth+a\Xi\om\sin^2\varth +a^2k^2(1-2\cos\varth-3\cos^2\varth+\cos^3\varth).
    \end{align*}
    Assume that $\ka'_1\neq0$. Then, integrating the first order ODE given by combining~\eqref{eq:zeroTSang} and~\eqref{eq:ODETSm1}, we have
    \begin{align*}
      S_{1\ell}^\om(\varth) & = C (1-\cos\varth)^{-3/2}(1+\cos\varth)^{1/2}\Deth^2\exp\le(-\le(ak+\frac{\Xi\om}{k}\ri)\mathrm{arctanh}(ak\cos\varth)\ri),
    \end{align*}
    where $C$ is the integration constant. The above does not satisfy the required asymptotic $\sim(1-\cos\varth)^{3/2}$ when $\varth\to 0$ except if $C=0$ and $S_{1\ell}^\om=0$. Thus, we must have $\ka'_1=0$. The vanishing of $\ka_1$ and $\ka'_1$ is equivalent to the following three cases
    \begin{subequations}\label{eq:m1vanishingTS}
      \begin{align}
        m & = 1, & \la & = 0, & \Xi\om & = -a k^2,\label{eq:m1la0}
                                         \intertext{or}
        m & = 1, & \la & = \mp 6ak, & \Xi\om & = \pm k(1 \mp ak).\label{eq:m1lapos}
      \end{align}
    \end{subequations}
    We now prove that there are no solutions to equation~\eqref{eq:defla} which are regular at $\varth=0$ and $\varth=\pi$ for triplets $(m,\la,\Xi\om)$ as given in the seven cases~\eqref{eq:m0la0}~\eqref{eq:m0om0}~\eqref{eq:m1la0}~\eqref{eq:m1lapos}.\footnote{For these values $(m,\la,\Xi\om)$, a direct computation gives that for any function $S$ which is a solution of $\LL^{m,\om}[\la]S=0$, its Teukolsky-Starobinsky transformation $S_c = \sqrt{\De_\varth}\LL_{-1}\sqrt{\De_\varth}\LL_0\sqrt{\De_\varth}\LL_1\sqrt{\De_\varth}\LL_2 S$ is \emph{identically zero} (\emph{i.e.} identity~\eqref{eq:zeroTSang} is satisfied).} Using that $S_{m\ell}^\om$ is regular at $\varth=0,\pi$, multiplying~\eqref{eq:defla} with $S=S_{m\ell}^\om$ and integrating by parts gives
    \begin{align}\label{eq:energyidentityTS}
      0 & = \int_{0}^\pi\le(\De_\varth|\pr_\varth S|^2 + G(\varth) |S|^2\ri)\sin\varth\d\varth,
    \end{align}
    with
 %    \begin{align*}
 %      G(\varth) & := (9 - 17 XX^2 \pm 16 Sqrt[1 - XX] XX Cos[theta] + 
 %   10 (-1 + XX) Cos[2 theta] + (-1 + XX)^2 Cos[4 theta] + 
 %                  16 XX^2 Csc[theta]^2)/(2 (1 + XX + (-1 + XX) Cos[2 theta])) && \text{in the~\eqref{eq:m0la0} case},\\
 %      G(\varth) & := 4 + Sqrt[1 - XX] - XX \mp Sqrt[
 % 16 + (-16 + XX) XX] + (-1 + XX) Cos[2 theta] - (8 XX)/(
 % 1 + XX + (-1 + XX) Cos[2 theta]) + 4 XX Csc[theta]^2 && \text{in the~\eqref{eq:m0om0} case},\\
 %      G(\varth) & := \quar \big(4 (3 + X^2) + \frac{2 - 2 X^2}{1 + \cos\varth} +4\Xi\De_\varth^{-1}\le(-5+2X-X^2+4(-1+X)X\cos\varth\ri)\\
 %    \end{align*}
    \begin{align*}
      G(\varth) & := \quar\De_\varth^{-1}\big(-8 + 34 X^2 - 17 X^4 \pm 16 X (1 - X^2) \cos(\varth) - 10 X^2 \cos(2\varth) \\
                & \quad + X^4 \cos(4\varth) + 16 (1 - X^2)^2 \csc(\varth)^2\big) && \text{in the~\eqref{eq:m0la0} cases},\\
      G(\varth) & := 3 + X + X^2 \mp \sqrt{1 + 14 X^2 + X^4} - X^2 \cos(2 \varth) -4 \De_\varth^{-1} (1 - X^2) + 4 (1 - X^2) \csc(\varth)^2 && \text{in the~\eqref{eq:m0om0} cases},\\
      G(\varth) & := \quar \big(4 (3 + X^2) + \frac{2 - 2 X^2}{1 + \cos\varth} +4(1-X^2)\De_\varth^{-1}\le(-5+2X-X^2+4(-1+X)X\cos\varth\ri)\\
                & \quad - 4 X^2 \cos(2 \varth) + 9 (1 - X^2) \csc(\varth/2)^2\big) && \text{in the~\eqref{eq:m1la0} case},\\
      G(\varth) & := \quar \big(12 + 8 X (\pm 2 + X) + \frac{2 - 2 X^2}{1 + \cos\varth} + 4\De_\varth^{-1} (1 - X^2) (-5 + 4 X \cos\varth) \\
                & \quad - 4 X^2 \cos(2\varth) + 9 (1 - X^2) \csc(\varth/2)^2\big) && \text{in the~\eqref{eq:m1lapos} cases},
    \end{align*}
    where $X:=ak$. The functions $G$ are polynomials in $X$ and $x:=\cos\varth$ which can be checked to be positive for $-1<x<1$. Thus, we deduce from~\eqref{eq:energyidentityTS} that $S=S_{m\ell}^\om=0$ if $m=0$ or $m=1$ and this finishes the proof of the lemma.
\end{proof}

\begin{lemma}[Radial Teukolsky-Starobinsky transformations]\label{lem:radTS}
  Let $R_{[\pm2],\om}^{m\ell}$ be solutions to the radial Teukolsky equation~\eqref{eq:radialTeuk}. Let us define the \emph{associated Teukolsky-Starobinsky quantities} $R_{[\pm2],\om,c}^{m\ell}$ by
  \begin{align}\label{eq:defTeukStar}
    \begin{aligned}
      R_{[+2],\om,c}^{m\ell} & := \frac{\De^2}{(r^2+a^2)^{3/2}}(\DD_0^{\dg})^4\le((r^2+a^2)^{3/2}R_{[-2],\om}^{m\ell}\ri), \\
      R_{[-2],\om,c}^{m\ell} & := \frac{\De^2}{(r^2+a^2)^{3/2}}(\DD_0^\dg)^4\le((r^2+a^2)^{3/2}R_{[+2],\om}^{m\ell}\ri). 
    \end{aligned}
  \end{align}
  Then, the Teukolsky-Starobinsky quantities satisfy the following Teukolsky equations
  \begin{align}\label{eq:TeukTS}
    \widetilde{\mathfrak{I}}_{[+2]}^{-m,-\om}[\la^\om_{m\ell}] R_{[+2],\om,c}^{m\ell} & = 0, & \widetilde{\mathfrak{I}}_{[-2]}^{m,\om}[\la^\om_{m\ell}] R_{[-2],\om,c}^{m\ell} & = 0,
  \end{align}
  and we have the following inversion formulas
  \begin{align}\label{eq:invTS}
    \begin{aligned}
       R_{[+2],\om}^{m\ell} & = \wp(m,\om,\la^\om_{m\ell})^{-1}\frac{\De^2}{(r^2+a^2)^{3/2}}\le(\DD_0\ri)^4\le((r^2+a^2)^{3/2}R_{[-2],\om,c}^{m\ell}\ri),\\
       R_{[-2],\om}^{m\ell} & = \wp(m,\om,\la^\om_{m\ell})^{-1}\frac{\De^2}{(r^2+a^2)^{3/2}}\le(\DD_0\ri)^4\le((r^2+a^2)^{3/2}R_{[+2],\om,c}^{m\ell}\ri),
    \end{aligned}
  \end{align}
  where $\wp(m,\om,\la^\om_{m\ell})$ is the \emph{radial Teukolsky-Starobinsky constant} defined by
  \begin{align*}
    \wp(m,\om,\la^\om_{m\ell}) & := 144 M^2 (\Xi\om)^2 + \aleph(m,\om,\la^\om_{m\ell}).
  \end{align*}
\end{lemma}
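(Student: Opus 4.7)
The plan is to establish both \eqref{eq:TeukTS} and \eqref{eq:invTS} by direct operator computations resting on the algebraic structure of the Chandrasekhar ladder operators $\DD_n,\DD_n^\dg$. The elementary building blocks are the identities $\DD_n - \DD_m = (n-m)\pr_r\log\De$, $\DD_n - \DD_n^\dg = 2i\Xi K/\De$, and $[\DD_n,\DD_m^\dg] = -2i\Xi\pr_r(K/\De) + (m-n)\pr_r^2\log\De$, combined with the explicit polynomial form of $\De$, which is where the mass parameter $M$ enters. Observing furthermore that $\DD_n^{\dg,m,\om} = \DD_n^{-m,-\om}$ lets us read the conjugation $(m,\om)\to(-m,-\om)$ as the replacement $\DD_n\leftrightarrow\DD_n^\dg$, which clarifies why $(\DD_0^\dg)^4$ ought to map a spin $-2$ Teukolsky solution to a spin $+2$ solution of the flipped equation.

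For the first claim \eqref{eq:TeukTS} I would introduce the weighted unknowns $\phi^{[+2]}:=\De^{-2}(r^2+a^2)^{3/2}R_{[+2],\om}^{m\ell}$ and $\phi^{[-2]}:=(r^2+a^2)^{3/2}R_{[-2],\om}^{m\ell}$, for which \eqref{eq:radialTeuk} simply becomes $\mathfrak{I}^{m,\om}_{[+2]}[\la]\phi^{[+2]} = 0$ and $\mathfrak{I}^{-m,-\om}_{[-2]}[\la]\phi^{[-2]} = 0$. In these variables, \eqref{eq:TeukTS} is equivalent to the operator intertwining relation
\begin{equation*}
\mathfrak{I}^{-m,-\om}_{[+2]}[\la]\circ(\DD_0^\dg)^4 \;=\; (\DD_0^\dg)^4\circ\mathfrak{I}^{-m,-\om}_{[-2]}[\la],
\end{equation*}
together with its complex conjugate. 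I would verify it by commuting one factor of $\DD_0^\dg$ through the second-order $\mathfrak{I}$ at a time, using the commutators above to swap indices on the $\DD_n,\DD_n^\dg$ pair appearing inside $\mathfrak{I}$, the lower-order remainders collapsing thanks to the explicit expression for $\pr_r^2\log\De$.

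For the inversion \eqref{eq:invTS} the same weighting reduces the statement to
\begin{equation*}
\De^2 \DD_0^4 \De^2 (\DD_0^\dg)^4 \phi^{[-2]} \;=\; \wp(m,\om,\la)\,\phi^{[-2]} \quad \text{whenever } \mathfrak{I}^{-m,-\om}_{[-2]}[\la]\phi^{[-2]} = 0.
\end{equation*}
I would expand the eighth-order composition using the commutation rules to move every $\DD_0$ to the right of every $\DD_0^\dg$ (generating lower-order polynomial corrections in the process), then iteratively eliminate occurrences of the second-order operator $\De\DD_{-1}\DD_0^\dg$ via the equation satisfied by $\phi^{[-2]}$, so that all differentiations disappear and one is left with a multiplicative rational function of $r$ acting on $\phi^{[-2]}$. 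The equation forces this function to be independent of $r$; tracking how $M$ enters through $\De$ and $\om$ through $K$ should then produce the explicit polynomial $\wp = 144M^2(\Xi\om)^2 + \aleph(m,\om,\la)$, whose angular-sector piece $\aleph$ coincides with the constant of Lemma~\ref{lem:angTS} because the radial and angular separation constants share the same underlying algebraic structure.

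The main obstacle is the sheer combinatorial size of the expansion of $\DD_0^4\De^2(\DD_0^\dg)^4$, which by hand yields an unwieldy polynomial in $r$ with coefficients depending on $(m,\om,\la,a,k,M)$; a symbolic computation is essentially unavoidable, and I would rely on the Mathematica verification alluded to in footnote~\ref{foo:Mathematica}. The conceptual reward is however concrete: strict positivity $\aleph > 0$ from Lemma~\ref{lem:angTS} immediately forces $\wp > 0$, which will subsequently be the key input to genuinely invert the fourth-order Teukolsky-Starobinsky relation and carry out the Wronskian vanishing argument at the heart of the real-axis mode stability theorem.
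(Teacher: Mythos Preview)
Your proposal is correct and takes essentially the same approach as the paper: the paper's proof consists of the single sentence that the identities are direct computations left to the reader and verified in the companion Mathematica notebook, together with a footnote observing (via Chambers--Moss coordinates) the structural parallel between the radial and angular equations. Your write-up adds welcome conceptual scaffolding---the intertwining relation for $\mathfrak{I}^{-m,-\om}_{[\pm2]}$ and the commutator algebra of the $\DD_n,\DD_n^\dg$---but ultimately you too defer the unwieldy polynomial bookkeeping to symbolic computation, which is exactly what the paper does.
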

\begin{proof}
  The proof of the lemma are direct computations which are left to the reader.\footnoteref{foo:Mathematica}\,\footnote{As pointed out using Chamber-Moss coordinates in~\cite{Dia.San13} (see also~\cite{Cha.Mos94}), the angular and radial Teukolsky equations share a similar structure and computations carry over from one equation to the other.}
\end{proof}

We have the following lemma providing asymptotics at the horizon for Teukolsky-Starobinksy transformations.
\begin{lemma}[Teukolsky-Starobinsky horizon asymptotics]\label{lem:TShorasympt}
  Let $R_{[\pm2],\om}^{m\ell}$ be solutions to the radial Teukolsky equation~\eqref{eq:radialTeuk}. % Recall that $R_{[\pm2],\om}^{m\ell}$ decompose on the horizon normalised basis as
%   \begin{align*}
%     R_{[+2],\om}^{m\ell} & = A_{[+2],\om,\HH^+}^{m\ell}R_{[+2],\om,\HH^+}^{m\ell} + A_{[+2],\om,\HH^-}^{m\ell}R_{[+2],\om,\HH^-}^{m\ell},\\
%     R_{[-2],\om}^{m\ell} & = A_{[-2],\om,\HH^+}^{m\ell}R_{[-2],\om,\HH^+}^{m\ell} + A_{[-2],\om,\HH^-}^{m\ell}R_{[-2],\om,\HH^-}^{m\ell}.
%   \end{align*}
% satisfying the the regularity conditions~\eqref{eq:radialregconditions} and~\eqref{eq:radialregconditionshigher}.
  The Teukolsky-Starobinsky transformations $R^{m\ell}_{[+2],\om,c}$ have the following horizon decompositions.
  \begin{itemize}
  \item If $\om-m\om_+\neq0$, we have
    \begin{align}\label{eq:radialregconditionsTSgeneral}
      \begin{aligned}
        R_{[+2],\om,c}^{m\ell} & = C(m,\om) A_{[-2],\om,\HH^+}^{m\ell} R_{[+2],-\om,\HH^+}^{-m\ell} + \wp(m,\om,\la^\om_{m\ell})\le(C^\ast(m,\om)\ri)^{-1} A_{[-2],\om,\HH^-}^{m\ell} R_{[+2],-\om,\HH^-}^{-m\ell}, \\
        R_{[-2],\om,c}^{m\ell} & = \wp(m,\om,\la^\om_{m\ell})\le(C^\ast(m,\om)\ri)^{-1} A_{[+2],\om,\HH^+}^{m\ell} R_{[-2],-\om,\HH^+}^{-m\ell} + C(m,\om) A_{[+2],\om,\HH^-}^{m\ell} R_{[-2],-\om,\HH^-}^{-m\ell},
    \end{aligned}
    \end{align}
    where $A_{[\pm2],\om,\HH^\pm}^{m\ell}$ are the same constants as in~\eqref{eq:radialregconditionbasisdecomp}, and where
    \begin{align*}
      C(m,\om) & := -\le(2i\Xi(\om-m\om_+)(r_+^2+a^2)\ri) \le(2\pr_r\De(r_+)+2i\Xi(\om-m\om_+)(r_+^2+a^2)\ri)\\
               & \quad \times \le((\pr_r\De(r_+))^2+\le(2\Xi(\om-m\om_+)(r_+^2+a^2)\ri)^2\ri), \\
               & =: -i\xi\le(\Xi(\om-m\om_+)\ri)\widetilde{C}(m,\om), 
    \end{align*}
    with $\xi:=\pr_r\De(r_+)+i\Xi(\om-m\om_+)(r_+^2+a^2)$. % Moreover, we have the following asymptotics for higher order derivatives
    % \begin{align}
    %   \label{eq:radialregconditionsTShigher}
    %   \begin{aligned}
    %     \pr_{r^\ast}^nR_{[+2],\om,c}^{m\ell} & \sim (i\Xi(\om-m\om_+))^nC(m,\om)A_{[-2],\om}^{m\ell}e^{+i\Xi(\om-m\om_+)r^\ast},\\
    %     \pr^n_{r^\ast}\le(\De^{-2}R_{[-2],\om,c}^{m\ell}\ri) & \sim (-i\Xi(\om-m\om_+))^n\wp(m,\om,\la^\om_{m\ell})\le(C^\ast(m,\om)\ri)^{-1}A_{[+2],\om}^{m\ell} e^{-i\Xi(\om-m\om_+)r^\ast},
    %   \end{aligned}
    % \end{align}
    % when $r^\ast \to -\infty$.
  \item If $\om-m\om_+=0$ and if $R_{[\pm2],\om}^{m\ell}$ satisfy the regularity conditions~\eqref{eq:radialregstat}, we have
    \begin{align}
      \label{eq:radialregstatTS}
      R_{[\pm2],\om,c}^{m\ell} & = \De^2 F_{[\pm2],\om,c}^{m\ell}(r),
    \end{align}
    where $F_{[\pm2],\om,c}^{m\ell}:[r_+,+\infty)\to \CCC$ are smooth functions of $r$.
  \end{itemize}
\end{lemma}
\begin{proof}
  We have 
  \begin{align*}
    \frac{\De^2}{(r^2+a^2)^{3/2}}(\DD_0^\dg)^4\le((r^2+a^2)^{3/2}R_{[+2],\om,\HH^-}^{m\ell}\ri) & = \De^2 (\DD_0^\dg)^4\le(\De^2e^{+i\Xi(\om-m\om_+)r^\ast}F(r)\ri) \\
     & = \De^2 e^{-i\Xi(\om-m\om_+)r^\ast} \pr_r^4\le(\De^2 e^{+2i\Xi(\om-m\om_+)r^\ast}\ri)F(r), 
  \end{align*}
  where, in each of the above lines, $F(r):[r_+,\infty)\to\CCC$ is always a smooth function with $F(r_+)=1$, but which in general differs from line to line. Computing further, we have
  \begin{align*}
    \pr_r^4\le(\De^2 e^{+2i\Xi(\om-m\om_+)r^\ast}\ri)F(r) & = \le(2\pr_r\De+2i\Xi(\om-m\om_+)(r_+^2+a^2)\ri)\pr_r^3\le(\De e^{+2i\Xi(\om-m\om_+)r^\ast}\ri)F(r)\\
                                                       & =  \le(2\pr_r\De+2i\Xi(\om-m\om_+)(r_+^2+a^2)\ri)\le(\pr_r\De+2i\Xi(\om-m\om_+)(r_+^2+a^2)\ri)\\
    & \quad \quad \quad \quad \times \pr_r^2\le(e^{+2i\Xi(\om-m\om_+)r^\ast}\ri)F(r)\\
    & = \le(2\pr_r\De+2i\Xi(\om-m\om_+)(r_+^2+a^2)\ri)\le(\pr_r\De+2i\Xi(\om-m\om_+)(r_+^2+a^2)\ri) \\
    & \quad \quad \quad \quad \times \le(2i\Xi(\om-m\om_+)(r_+^2+a^2)\ri) \pr_r\le(\De^{-1}e^{+2i\Xi(\om-m\om_+)r^\ast}\ri)F(r) \\
    & = - \le(2\pr_r\De+2i\Xi(\om-m\om_+)(r_+^2+a^2)\ri)\le((\pr_r\De)^2+(2\Xi(\om-m\om_+)(r_+^2+a^2))^2\ri)\\
    & \quad \quad \quad \quad \times \le(2i\Xi(\om-m\om_+)(r_+^2+a^2)\ri) \De^{-2}e^{+2i\Xi(\om-m\om_+)r^\ast}F(r).
  \end{align*}
  Using the above computations, the definition of the Teukolsky-Starobinsky transformations, the Teukolsky equation for the Teukolsky-Starobinsky transformations~\eqref{eq:TeukTS}, and the definition of the horizon basis~\eqref{eq:radialregconditionsbasis}, we deduce that
  \begin{align*}
    R_{[-2],\om,\HH^-,c}^{m\ell} = C(m,\om)e^{+i\Xi(\om-m\om_+)r^\ast}F(r) = C(m,\om)R_{[-2],-\om,\HH^-}^{-m\ell}.
  \end{align*}
  In the $R_{[-2],\om,\HH^+,c}$ case the asymptotics are obtained from the $R_{[-2],\om,\HH^-,c} = R_{[+2],\om,\HH^+,c}$ case and the inversion formula~\eqref{eq:invTS}. The regularity~\eqref{eq:radialregstatTS} are direct consequences of the smoothness of $\De^{-2}R_{[\pm2]}$ from~\eqref{eq:radialregstat}. This finishes the proof of the lemma.
\end{proof}

We have the following limits at infinity for the Teukolsky-Starobinsky transformations. The proof of the following lemma is a direct calculation which is left to the reader.\footnoteref{foo:Mathematica}
\begin{lemma}[Teukolsky-Starobinsky transmission coefficients]\label{lem:TStransmissioninfinity}
  Let $m\in\ZZZ$, $\om\in\RRR$ and $\ell \geq |m|$. Let us define the following three real coefficients
  \begin{align*}
    \wp_0(m,\om,\la_{m\ell}^\om) & := -2 \la^\om_{m\ell} - (\la^\om_{m\ell})^2 + 20 a (\Xi m) (\Xi\om) + 8k^{-2} (1 + \la^\om_{m\ell}) (\Xi\om)^2 - 8k^{-4}(\Xi\om)^4 \\
          & \quad - 2 a^2 \le(k^2 (-6 + \la^\om_{m\ell}) + 6 (\Xi\om)^2\ri),\\ \\
    \wp_1(m,\om,\la_{m\ell}^\om) & := 4k^{-4}\le(2 a k^4(\Xi m) + k^2 (2 + \la^\om_{m\ell}) (\Xi\om) - 2 (\Xi\om)^3\ri),\\ \\
    \wp_2(m,\om,\la_{m\ell}^\om) & := - 4 k^{-4}\big(-k^4 \la_{m\ell}^\om (1 + \la_{m\ell}^\om) (\Xi\om) +k^2 (2 + 3 \la_{m\ell}^\om) (\Xi\om)^3 - 2 (\Xi\om)^5 \\
         & \quad + a (-3 k^6 \la_{m\ell}^\om (\Xi m) + 8 k^4 (\Xi m) (\Xi\om)^2) + 2 a^2 (k^6 (3 + \la_{m\ell}^\om) (\Xi\om) - 3 k^4 (\Xi\om)^3)\big).
  \end{align*}
  For all functions $R_{[\pm2],\om}^{m\ell}$ solutions to the radial Teukolsky equations~\eqref{eq:radialTeuk} we have 
  \begin{align}\label{eq:matrixlimits}
    \begin{aligned}
    \lim_{r\to+\infty} R^{m\ell}_{[\pm2],\om,c} & = -\le(\wp_0+12iM\Xi\om\ri) \lim_{r\to+\infty} R^{m\ell}_{[\mp2],\om} + i\wp_1\lim_{r\to+\infty}\pr_{r^\ast}R^{m\ell}_{[\mp2],\om}, \\
    \lim_{r\to+\infty} \pr_{r^\ast}R^{m\ell}_{[\pm2],\om,c} & = i\wp_2\lim_{r\to+\infty} R^{m\ell}_{[\mp2],\om} - \le(\wp_0-12iM\Xi\om\ri) \lim_{r\to+\infty}\pr_{r^\ast}R^{m\ell}_{[\mp2],\om}.
    \end{aligned}
  \end{align}
\end{lemma}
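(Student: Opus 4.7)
The lemma asserts that the two values at infinity $\lim_{r\to\infty}R^{m\ell}_{[\pm 2],\om,c}$ and $\lim_{r\to\infty}\pr_{r^\ast}R^{m\ell}_{[\pm 2],\om,c}$ depend linearly on only two pieces of data at infinity, namely $\lim_{r\to\infty}R^{m\ell}_{[\mp 2],\om}$ and $\lim_{r\to\infty}\pr_{r^\ast}R^{m\ell}_{[\mp 2],\om}$. Since the Teukolsky--Starobinsky transformation~\eqref{eq:defTeukStar} is \emph{a priori} a fourth-order differential operator on $R_{[\mp 2]}$, the first non-trivial point is to explain why only two scalars worth of data are needed; this is the consequence of the fact that $R_{[\mp 2]}$ satisfies the second-order radial Teukolsky ODE~\eqref{eq:radialTeuk}.

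The plan is therefore the following. I first treat the $[+2]$ case. Starting from the definition
\begin{align*}
  R^{m\ell}_{[+2],\om,c} \;=\; \frac{\De^{2}}{(r^{2}+a^{2})^{3/2}}\,(\DD_{0}^{\dg})^{4}\!\bigl((r^{2}+a^{2})^{3/2} R^{m\ell}_{[-2],\om}\bigr),
\end{align*}
I would expand $(\DD_{0}^{\dg})^{4}$ into ordinary derivatives, using $\DD_{0}^{\dg}=\pr_{r}-i\Xi K/\De$ and the Leibniz rule, so that the outcome is an expression of the schematic form $\sum_{j=0}^{4}a_{j}(r)\,\pr_{r}^{j}R^{m\ell}_{[-2],\om}$ with rational coefficients $a_{j}$ depending on $m,\om,a,k,M$. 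Then I would use the radial Teukolsky equation $\widetilde{\mathfrak{I}}^{-m,-\om}_{[-2]}[\la^{\om}_{m\ell}]R^{m\ell}_{[-2],\om}=0$, written as an explicit expression for $\pr_{r}^{2}R^{m\ell}_{[-2],\om}$ in terms of $R^{m\ell}_{[-2],\om}$ and $\pr_{r}R^{m\ell}_{[-2],\om}$, to recursively eliminate $\pr_{r}^{2},\pr_{r}^{3},\pr_{r}^{4}$ (differentiating the ODE once and twice to handle the higher orders). This reduces $R^{m\ell}_{[+2],\om,c}$ to a closed expression $A(r)\,R^{m\ell}_{[-2],\om}+B(r)\,\pr_{r}R^{m\ell}_{[-2],\om}$.

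Next I would compute the limits as $r\to\infty$. Note $\pr_{r^{\ast}}=\De(r^{2}+a^{2})^{-1}\pr_{r}\sim k^{2}r^{2}\pr_{r}$, so to extract finite limits in the form stated it is convenient to pass to $r^{2}\pr_{r}=(k^{-2}+O(r^{-2}))\pr_{r^{\ast}}$. I would then expand $A(r)$ and $r^{-2}B(r)$ in powers of $r^{-1}$: for $\lim R^{m\ell}_{[+2],\om,c}$ only the $O(1)$ terms survive and combining them produces the coefficients $-\wp_{0}-12iM\Xi\om$ (in front of $R^{m\ell}_{[-2],\om}$) and $i\wp_{1}$ (in front of $\pr_{r^{\ast}}R^{m\ell}_{[-2],\om}$). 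The analogous computation applied to $\pr_{r^{\ast}}R^{m\ell}_{[+2],\om,c}$ — one more $\pr_{r^\ast}$-derivative, one further use of the ODE to reduce back to $(R,\pr_{r}R)$-form — produces the second line of~\eqref{eq:matrixlimits}. The $[-2]$ case is obtained by running the identical calculation with $\DD_{0}$ in place of $\DD_{0}^{\dg}$, using $\widetilde{\mathfrak{I}}^{m,\om}_{[+2]}[\la^{\om}_{m\ell}]R^{m\ell}_{[+2],\om}=0$.

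The essential obstacle is not conceptual but algebraic: a naive symbolic expansion of $(\DD_{0}^{\dg})^{4}$ combined with the substitution of $\pr_{r}^{2}R_{[-2]}$ by its ODE-value generates a large number of rational-function terms, most of which must combine and cancel to leave the compact leading coefficients $\wp_{0},\wp_{1},\wp_{2}$ and the $12iM\Xi\om$ piece. I would organise this bookkeeping by grouping contributions according to powers of $\De^{-1}$ and carrying the $r\to\infty$ asymptotics at each step, and in practice I would verify the final identification of constants using the Mathematica notebook mentioned in Footnote~\ref{foo:Mathematica}, exactly as the author suggests by leaving the calculation to the reader.
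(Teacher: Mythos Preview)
Your proposal is correct and is precisely the direct calculation the paper has in mind: the paper's own proof is simply ``a direct calculation which is left to the reader'' with a reference to the companion Mathematica notebook, and your outline of expanding $(\DD_0^\dg)^4$, reducing higher $r$-derivatives via the radial ODE, and extracting the $r\to\infty$ limits is exactly how one carries this out. There is nothing to add.
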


\section{Teukolsky-Starobinsky conservation law and the $\om-m\om_+\neq0$ case}\label{sec:TSconslaws}
The main result of this section is the following proposition, from which Theorem~\ref{thm:main} follows (see Corollary~\ref{cor:pfmainnonstat}).
\begin{proposition}[Teukolsky-Starobinsky conservation law]\label{prop:TSconslaw}
  Let $\om\in\RRR$, $m\in\ZZZ$ and $\ell\geq |m|$. Assume that $\om-m\om_+\neq0$. Let $R_{[\pm2],\om}^{m\ell}$ be solutions to the radial Teukolsky equations~\eqref{eq:radialTeuk}, satisfying the boundary conditions at infinity~\eqref{eq:radialbdyconditions}. We have the following \emph{conservation law}
  \begin{align}
    \label{eq:TSconslaw}
    \begin{aligned}
    & \aleph(m,\om,\la^\om_{m\ell}) \big|A_{[+2],\om,\HH^+}^{m\ell}\big|^2 + \le|C(m,\om)A_{[-2],\om,\HH^+}^{m\ell} + 12i M(\Xi\om) \big(A_{[+2],\om,\HH^+}^{m\ell}\big)^\ast\ri|^2 \\
      = \; & \aleph(m,\om,\la^\om_{m\ell}) \big|A_{[-2],\om,\HH^-}^{m\ell}\big|^2 + \le|C(m,\om)A_{[+2],\om,\HH^-}^{m\ell} +  12i M(\Xi\om) \big(A_{[-2],\om,\HH^-}^{m\ell}\big)^\ast\ri|^2,
    \end{aligned}
  \end{align}
  where we refer to~\eqref{eq:radialregconditionbasisdecomp} for the definition of the horizon coefficients $A_{[\pm2],\om,\HH^\pm}^{m\ell}$ and where we recall that $\aleph(m,\om,\la^\om_{m\ell})$ is the angular Teukolsky-Starobinsky constant defined in Lemma~\ref{lem:angTS}.
\end{proposition}
\begin{corollary}[Proof of Theorem~\ref{thm:main} in the $\om-m\om_+\neq0$ case]\label{cor:pfmainnonstat}
  Let $\al^{[\pm2]}$ be regular mode solutions to the Teukolsky system~\eqref{Teusys}. Assume that $\al^{[\pm2]}$ are non-stationary with respect to the Hawking vector field, \emph{i.e.} $\mathrm{K}(\al^{[+2]}) \neq 0$ or $\mathrm{K}(\al^{[-2]}) \neq 0$. Then, $\al^{[+2]}=\al^{[-2]}=0$.
\end{corollary}
\begin{proof}[Proof of Corollary \ref{cor:pfmainnonstat}]
  Decompose $\al^{[\pm2]}$ as in Definition~\ref{def:radialdecomp}. By Lemma~\ref{lem:radTeuk} the radial quantities $R_{[\pm2],\om}^{m\ell}$ satisfy the radial Teukolsky equations~\eqref{eq:radialTeuk} and the boundary conditions~\eqref{eq:radialbdyconditions}. From the non-stationary assumptions $\mathrm{K}(\al^{[+2]})=0$ or $\mathrm{K}(\al^{[-2]})=0$, we infer that $\om-m\om_+\neq0$ and Proposition~\ref{prop:TSconslaw} applies. Now, by Lemma~\ref{lem:radTeuk} again (see~\eqref{eq:radialregconditions}), the regularity at the horizon assumptions~\eqref{eq:defreghor} for $\al^{[\pm2]}$ imply that $A_{[+2],\om,\HH^-}^{m\ell} = A_{[-2],\om,\HH^-}^{m\ell} = 0$. Plugging this in the conservation law~\eqref{eq:TSconslaw} we infer that
  \begin{align}\label{eq:conslaw2}
     \aleph(m,\om,\la^\om_{m\ell}) \big|A_{[+2],\om,\HH^+}^{m\ell}\big|^2 + \le|C(m,\om)A_{[-2],\om,\HH^+}^{m\ell} + 12i M(\Xi\om) \big(A_{[+2],\om,\HH^+}^{m\ell}\big)^\ast\ri|^2 & = 0.
  \end{align}
  By Lemma~\ref{lem:angTS}, we have that the angular Teukolsky-Starobinsky constant $\aleph(m,\om,\la^\om_{m\ell})$ is strictly positive. Moreover, from the Definition of $C(m,\om)$ from Lemma~\ref{lem:TShorasympt}, we have that if $\om-m\om_+\neq0$ then $C(m,\om)\neq0$. From these two observations, identity~\eqref{eq:conslaw2} implies that $A_{[+2],\om,\HH^+}^{m\ell}=A_{[-2],\om,\HH^+}^{m\ell}=0$ and this finishes the proof of the corollary.
\end{proof}
\begin{remark}\label{rem:crucialangTSpos}
  The key argument in the proof of Corollary~\ref{cor:pfmainnonstat} is the strict positivity of the \underline{angular} Teukolsky-Starobinsky constant $\aleph(m,\om,\la^\om_{m\ell})$. This strongly differs with the proof of mode stability in the non-superradiant asymptotically flat case which uses the positivity of the \underline{radial} Teukolsky-Starobinsky constant $\wp(m,\om,\la^\om_{m\ell}) = \aleph(m,\om,\la^\om_{m\ell}) + 144M^2(\Xi\om)^2$. Note that the positivity of the radial Teukolsky-Starobinsky constant requires a much less refined analysis as the positivity of the angular Teukolsky-Starobinsky constant, as one can directly infer that, if the radial Teukolsky-Starobinsky constant vanishes, then $\om=0$. See also Lemma~\ref{lem:angTS} for the proof of the positivity of the angular Teukolsky-Starobinsky constant obtained in the present paper.
\end{remark}
The proof of Proposition~\ref{prop:TSconslaw} relies on the following lemma.
\begin{lemma}\label{lem:wronskiansidentity}
  Let $\om\in\RRR$, $m\in\ZZZ$ and $\ell\geq |m|$ and let $R_{[\pm2],\om}^{m\ell}$ be solutions to the radial Teukolsky equations~\eqref{eq:radialTeuk}, satisfying the boundary conditions at infinity~\eqref{eq:radialbdyconditions}. Define the Wronskians
  \begin{align*}
    W_{+2} & := W\big(R_{[+2],\om}^{m\ell},\big(R_{[-2],\om,c}^{m\ell}\big)^\ast\big), & W_0 & := W\big(R_{[+2],\om}^{m\ell},R_{[-2],\om}^{m\ell}\big), &  W_{-2} & := W\big(R_{[+2],\om,c}^{m\ell},\big(R_{[-2],\om}^{m\ell}\big)^\ast\big).
  \end{align*}
  We have
  \begin{align}\label{eq:wronskiansidentity}
    W_{+2} - W_{-2} & = 24 i M(\Xi\om) W_0.
  \end{align}
\end{lemma}
\begin{proof}
  The radial functions $R_{[+2],\om}^{m\ell},(R^{m\ell}_{[-2],\om,c})^\ast$, the radial functions $R_{[+2],\om}^{m\ell},R_{[-2],\om}^{m\ell}$, and the radial functions $R_{[+2],\om,c}^{m\ell},(R_{[-2],\om}^{m\ell})^\ast$ respectively satisfy the same second order ODE (see Equations~\eqref{eq:radialTeuk}~\eqref{eq:TeukTS} and Lemma~\ref{lem:Teukeqrel}). Their Wronskians $W_{+2},W_{0},W_{-2}$ therefore satisfy the identity~\eqref{eq:wronskiansidentity} iff~\eqref{eq:wronskiansidentity} holds in the limit $r\to+\infty$. From the limits~\eqref{eq:radialbdyconditions} and~\eqref{eq:matrixlimits} we have
  \begin{align*}%\label{eq:wronskidinfinity}
    \begin{aligned}
      \lim_{r\to+\infty} W_{+2} & = \lim_{r\to+\infty} \begin{vmatrix} R_{[+2]} & R_{[-2],c}^\ast \\ \pr_{r^\ast}R_{[+2]} & \pr_{r^\ast} R_{[-2],c}^\ast \end{vmatrix} \\
      & = \lim_{r\to+\infty} \begin{vmatrix} R_{[-2]}^\ast & \le(R_{[+2],c} +12 i M (\Xi\om)\le(R_{[-2]}+R_{[+2]}^\ast\ri)\ri) \\
        -\pr_{r^\ast}R_{[-2]}^\ast & \le(-\pr_{r^\ast} R_{[+2],c}^\ast + 12 i M (\Xi\om) \le(\pr_{r^\ast}R_{[-2]}-\pr_{r^\ast}R_{[+2]}^\ast\ri)\ri) \end{vmatrix} \\
      & = \lim_{r\to+\infty} W_{-2} + 12iM(\Xi\om) \lim_{r\to+\infty} \le(\pr_{r^\ast}R^\ast_{[-2]}\le(R_{[-2]}+R_{[+2]}^\ast\ri)+R^\ast_{[-2]}\le(\pr_{r^\ast}R_{[-2]}-\pr_{r^\ast}R_{[+2]}^\ast\ri) \ri)\\
      & = \lim_{r\to+\infty} W_{-2} + 24iM(\Xi\om) \lim_{r\to+\infty}\begin{vmatrix} R_{[+2]} & R_{[-2]} \\ \pr_{r^\ast}R_{[+2]} & \pr_{r^\ast}R_{[-2]}\end{vmatrix} \\
      & = \lim_{r\to+\infty} \le(W_{-2} + 24iM(\Xi\om)W_0\ri),
    \end{aligned}
  \end{align*}
  where we dropped the indices $m,\ell,\om$ for simplicity. This finishes the proof of the lemma.
\end{proof}
We are now ready to prove Proposition~\ref{prop:TSconslaw}.
\begin{proof}[Proof of Proposition~\ref{prop:TSconslaw}]
  Using the horizon decompositions~\eqref{eq:radialregconditionbasisdecomp} and~\eqref{eq:radialregconditionsTSgeneral} and the symmetries from Lemma~\ref{lem:Teukeqrel}, we have
  \begin{align*}
    \begin{aligned}
      W_{+2} & = W\bigg(A^{m\ell}_{[+2],\om,\HH^+}R_{[+2],\om,\HH^+}^{m\ell}+A_{[+2],\om,\HH^-}^{m\ell}R^{m\ell}_{[-2],\om,\HH^-},\\
      & \quad \quad \quad \wp(m,\om,\la^\om_{m\ell}) C(m,\om)^{-1}\big(A_{[+2],\HH^+}\big)^\ast R^{m\ell}_{[-2],\om,\HH^+}+C(m,\om)^\ast \big(A_{[+2],\om,\HH^-}^{m\ell}\big)^\ast R^{m\ell}_{[-2],\om,\HH^-}\bigg) \\
      & = \wp(m,\om,\la^\om_{m\ell}) C(m,\om)^{-1}\big|A_{[+2],\om,\HH^+}^{m\ell}\big|^2 W\le(R_{[+2],\om,\HH^+}^{m\ell},R^{m\ell}_{[-2],\om,\HH^+}\ri) \\
      & \quad + C(m,\om)^\ast\big|A_{[+2],\om,\HH^-}^{m\ell}\big|^2 W\le(R_{[+2],\om,\HH^-}^{m\ell},R_{[-2],\om,\HH^-}^{m\ell}\ri).
    \end{aligned}
  \end{align*}
  A direct computation gives
  \begin{align*}
    \begin{aligned}
      W\le(R_{[+2],\om,\HH^+}^{m\ell},R^{m\ell}_{[-2],\om,\HH^+}\ri) = -W\le(R_{[+2],\om,\HH^-}^{m\ell},R_{[-2],\om,\HH^-}^{m\ell}\ri) = 2\frac{\xi}{r_+^2+a^2}\De^2F(r),
    \end{aligned}
  \end{align*}
  where $F:[r_+,+\infty)\to\CCC$ is a smooth function with $F(r_+)=1$. Thus, we infer that
  \begin{align}\label{eq:limwronskianshorizon1}
    \De^{-2}W_{+2}\bigg|_{r=r_+} & = 2\frac{\xi}{r_+^2+a^2} \wp(m,\om,\la^\om_{m\ell}) C(m,\om)^{-1}\big|A_{[+2],\om,\HH^+}^{m\ell}\big|^2 -2\frac{\xi}{r_+^2+a^2} C(m,\om)^\ast\big|A_{[+2],\om,\HH^-}^{m\ell}\big|^2.
  \end{align}
  Analogous computations give
  \begin{align}\label{eq:limwronskianshorizon2}
    \begin{aligned}
      \De^{-2}W_{0}\bigg|_{r=r_+} & = 2 \frac{\xi}{r_+^2+a^2}A_{[+2],\om,\HH^+}^{m\ell}A_{[-2],\om,\HH^+}^{m\ell} - 2 \frac{\xi}{r_+^2+a^2}A_{[+2],\om,\HH^-}^{m\ell}A_{[-2],\om,\HH^-}^{m\ell},\\ \\
      \De^{-2}W_{-2}\bigg|_{r=r_+} & = 2\frac{\xi^\ast}{r_+^2+a^2} C(m,\om)\big|A_{[-2],\om,\HH^+}^{m\ell}\big|^2- 2\frac{\xi^\ast}{r_+^2+a^2} \wp(m,\om,\la^\om_{m\ell}) (C(m,\om)^\ast)^{-1}\big|A_{[-2],\om,\HH^-}^{m\ell}\big|^2.
    \end{aligned}
  \end{align}

  % Using the asymptotics~\eqref{eq:radialregconditions},~\eqref{eq:radialregconditionshigher},~\eqref{eq:radialregconditionsTS},~\eqref{eq:radialregconditionsTShigher} at the horizon we have 
  % \begin{align}\label{eq:limwronskianshorizon}
  %   \begin{aligned}
  %     W_{+2} & \underset{r^\ast\to-\infty}{\sim} 2(r_+^2+a^2)^{-1} \wp(m,\om,\la^\om_{m\ell})C(m,\om)^{-1} \le(\pr_r\De(r_+)+i\Xi(\om-m\om_+)(r_+^2+a^2)\ri) \De^2 |A_{[+2]}|^2, \\ \\
  %     W_0 & \underset{r^\ast\to-\infty}{\sim} 2(r_+^2+a^2)^{-1} \le(\pr_r\De(r_+)+i\Xi(\om-m\om_+)(r_+^2+a^2)\ri) \De^2 A_{[+2]}A_{[-2]}, \\ \\
  %     W_{-2} & \underset{r^\ast\to-\infty}{\sim} 2 (r_+^2+a^2)^{-1} C(m,\om) \le(\pr_r\De(r_+)-i\Xi(\om-m\om_+)(r_+^2+a^2)\ri) \De^2 |A_{[-2]}|^2.
  %   \end{aligned}
  % \end{align}
  Plugging~\eqref{eq:limwronskianshorizon1} and~\eqref{eq:limwronskianshorizon2} in the identity~\eqref{eq:wronskiansidentity}, we obtain
  \begin{align*}
    & \wp C^{-1}\xi |A_{[+2],\HH^+}|^2 -C^\ast\xi|A_{[+2],\HH^-}|^2 - C \xi^\ast |A_{[-2],\HH^+}|^2 + \wp (C^{\ast})^{-1}\xi^\ast|A_{[-2],\HH^-}|^2 \\
    = & \; 24iM(\Xi\om) \xi \le(A_{[+2],\HH^+}A_{[-2],\HH^+} - A_{[+2],\HH^-}A_{[-2],\HH^-}\ri),
  \end{align*}
  where here and in the rest of this proof we dropped the indices and arguments $m,\ell,\om,\la_{m\ell}^{\om}$ for simplicity. Using the expression of $C$, we rewrite this identity as
  \begin{align}\label{eq:wronskhorid}
    \begin{aligned}
      & i\wp \widetilde{C}^{-1}(\Xi(\om-m\om_+))^{-1} |A_{[+2],\HH^+}|^2 + i (\Xi(\om-m\om_+)) \widetilde{C}|\xi|^2|A_{[-2],\HH^+}|^2 - 24iM(\Xi\om) \xi A_{[+2],\HH^+}A_{[-2],\HH^+} \\
      = & \; i\wp \widetilde{C}^{-1}(\Xi(\om-m\om_+))^{-1} |A_{[-2],\HH^-}|^2 + i (\Xi(\om-m\om_+)) \widetilde{C}|\xi|^2|A_{[+2],\HH^-}|^2 - 24iM(\Xi\om) \xi A_{[+2],\HH^-}A_{[-2],\HH^-}.
    \end{aligned}
  \end{align}
  Using that $\widetilde{C}>0$ and $\xi \neq 0$ we define $B_{[\pm2],\HH^\pm}$ by
  \begin{align*}
    A_{[+2],\HH^+} & =: \le|\Xi(\om-m\om_+)\ri|^{+1/2} \widetilde{C}^{+1/2} B_{[+2],\HH^+}, & A_{[-2],\HH^+} & =: \xi^{-1}\le|\Xi(\om-m\om_+)\ri|^{-1/2} \widetilde{C}^{-1/2} B_{[-2],\HH^+} \\
    A_{[-2],\HH^-} & =: \le|\Xi(\om-m\om_+)\ri|^{+1/2} \widetilde{C}^{+1/2} B_{[-2],\HH^-}, & A_{[+2],\HH^-} & =: \xi^{-1}\le|\Xi(\om-m\om_+)\ri|^{-1/2} \widetilde{C}^{-1/2} B_{[+2],\HH^-}.
  \end{align*}
  Plugging these definitions in~\eqref{eq:wronskhorid}, we have
  \begin{align*}
    & \wp|B_{[+2],\HH^+}|^2 + |B_{[-2],\HH^+}|^2 - 24 M (\Xi\om) B_{[+2],\HH^+}B_{[-2],\HH^+} \\
    = & \; \wp|B_{[-2],\HH^-}|^2 + |B_{[+2],\HH^-}|^2 - 24 M (\Xi\om) B_{[-2],\HH^-}B_{[+2],\HH^-},
  \end{align*}
  if $\om-m\om_+>0$, and
  \begin{align*}
    & \wp|B_{[+2],\HH^+}|^2 + |B_{[-2],\HH^+}|^2 + 24 M (\Xi\om) B_{[+2],\HH^+}B_{[-2],\HH^+} \\
    = & \; \wp|B_{[-2],\HH^-}|^2 + |B_{[+2],\HH^-}|^2 + 24 M (\Xi\om) B_{[-2],\HH^-}B_{[+2],\HH^-},
  \end{align*}
  if $\om-m\om_+<0$. Using that $\wp=144M^2(\Xi\om)^2+\aleph$, we rewrite the above as
  \begin{align*}
    & \aleph|B_{[+2],\HH^+}|^2 + 144 M^2(\Xi\om)^2 |B_{[+2],\HH^+}|^2 + |B_{[-2],\HH^+}|^2 \mp 24 M (\Xi\om) B_{[+2],\HH^+}B_{[-2],\HH^+} \\
    & = \aleph|B_{[-2],\HH^-}|^2 + 144 M^2(\Xi\om)^2 |B_{[-2],\HH^-}|^2 + |B_{[+2],\HH^-}|^2 \mp 24 M (\Xi\om) B_{[-2],\HH^-}B_{[+2],\HH^-}.
  \end{align*}
  Taking the real part of the above, and using that $|z_1+z_2|^2=|z_1|^2+|z_2|^2+2\Re(z_1^\ast z_2)$ for all $z_1,z_2\in\CCC$, we infer
  \begin{align*}
    \aleph |B_{[+2],\HH^+}|^2 + \le|B_{[-2],\HH^+} \mp 12 M(\Xi\om) B_{[+2],\HH^+}^\ast\ri|^2  & = \aleph |B_{[-2],\HH^-}|^2 + \le|B_{[+2],\HH^-} \mp 12 M(\Xi\om) B_{[-2],\HH^-}^\ast\ri|^2.
  \end{align*}
  Re-plugging the definition of the quantities $B_{[\pm2],\HH^\pm}$ in the above, we obtain~\eqref{eq:TSconslaw} and this finishes the proof of the proposition.
\end{proof}

\section{The stationary case $\om-m\om_+=0$}\label{sec:stat}
The second part of Theorem~\ref{thm:main} splits into the following two propositions.
\begin{proposition}[The $m=0$ case]\label{prop:statm0}
  Let $m=0$ and $\om = m\om_+ = 0$. For all radial functions $R_{[\pm2],m}^{m\ell}$, solutions to the Teukolsky equations~\eqref{eq:radialTeuk} satisfying the boundary conditions at infinity~\eqref{eq:radialbdyconditions} and the regularity conditions~\eqref{eq:radialregstat} at the horizon, we have $R_{[+2],m}^{m\ell}=R_{[-2],m}^{m\ell}=0$.
\end{proposition}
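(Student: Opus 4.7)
\emph{Plan for Proposition~\ref{prop:statm0}.}

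The plan is to reduce the coupled radial problem to a scalar boundary value problem and then to close the argument with an energy identity. First, when $m = \om = 0$ the functions $K$ and $H$ vanish identically, so all the radial operators $\DD_n$ and $\DD_n^\dg$ coincide and are real; by Lemma~\ref{lem:Teukeqrel} the two radial Teukolsky operators $\widetilde{\mathfrak{I}}_{[+2]}^{0,0}[\la]$ and $\widetilde{\mathfrak{I}}_{[-2]}^{0,0}[\la]$ reduce to a single real second-order Sturm--Liouville operator, so both $R_{[+2]}^{0\ell}$ and $R_{[-2]}^{0\ell}$ satisfy the same ODE, with the same horizon behaviour $R = \De^2 A(r)$, $A$ smooth at $r_+$. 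A Frobenius analysis at the regular singular point $r = r_+$ shows that the horizon-regular subspace is one-dimensional, so one can write $R_{[\pm 2]}^{0\ell} = c_{\pm}\,\phi$ for a common real solution $\phi$ and constants $c_\pm\in\CCC$. Substituting into~\eqref{eq:radialbdyconditions} produces the algebraic system
\begin{align*}
(c_+ - c_-^\ast)\,\phi(\infty) & = 0, & (c_+ + c_-^\ast)\,\lim_{r\to\infty}\big(r^2 \pr_r \phi\big) & = 0,
\end{align*}
which reduces the proposition to the claim that no nontrivial horizon-regular real $\phi$ satisfies either the Dirichlet condition $\phi(\infty) = 0$ or the Neumann condition $\lim_{r\to\infty}(r^2\pr_r\phi) = 0$ at infinity.

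Next, both exclusions are reduced to an energy estimate in $r^\ast$. Rewriting the ODE in self-adjoint form $\pr_{r^\ast}(\mu\,\pr_{r^\ast}\phi) + \mu\,V^{0,0}[\la]\,\phi = 0$ with weight $\mu := \De^{-2}(r^2+a^2)^4$, multiplying by $\phi$ and integrating over $r^\ast \in (-\infty,\pi/2)$, the horizon boundary term vanishes because $\phi$ and $\pr_{r^\ast}\phi$ are both $O(\De^2)$ (the latter from $\pr_{r^\ast} = (\De/(r^2+a^2))\pr_r$ applied to $\phi = \De^2 A$), while the boundary term at infinity vanishes under either the Dirichlet or the Neumann assumption. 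This yields
\begin{align*}
\int_{-\infty}^{\pi/2}\mu\,|\pr_{r^\ast}\phi|^2\,\d r^\ast \;=\; \int_{-\infty}^{\pi/2}\mu\,V^{0,0}[\la]\,\phi^2\,\d r^\ast.
\end{align*}

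Finally, and this is the main obstacle, I would show that this identity forces $\phi\equiv 0$ \emph{uniformly} in the admissible parameters without invoking the Hawking--Reall bound or the slow rotation assumption~\eqref{est:slowrotmore}, which are needed only in the genuinely coupled $m\neq 0$ stationary case. The inputs are the strict positivity of the angular eigenvalue $\la_{0\ell}^0 > 0$ (from Lemma~\ref{lem:angulardecompo} together with the $a = 0$ Schwarzschild-adS value $\la = \ell(\ell+1) - 2 \geq 4$ and analyticity in $(a,k)$), and a further integration by parts on the term $-3r\,\De\,\De'/(r^2+a^2)^3$ in $V^{0,0}[\la]$ which transfers the $r$-derivative onto $\phi^2$, producing additional positive contributions that, together with the manifestly nonnegative terms $3(2r^2+a^2)\De^2/(r^2+a^2)^4$ and $6k^2r^2\,\De/(r^2+a^2)^2$, should yield a coercive strict upper bound for the right-hand side by the left-hand side. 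In the Dirichlet case $\phi\equiv 0$ follows immediately from $\pr_{r^\ast}\phi = 0$ together with $\phi(\infty) = 0$; in the Neumann case it follows from $\phi$ constant together with horizon regularity $\phi(r_+) = 0$. Executing this rearrangement uniformly over the full admissible parameter range is the hard part.
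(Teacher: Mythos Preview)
Your reduction to a single real horizon-regular solution $\phi$ satisfying either Dirichlet or Neumann data at infinity is correct and is essentially a clean restatement of what the paper obtains via the Robin conditions of Proposition~\ref{prop:RobinBC} (for $m=\om=0$ one has $\wp_1=\wp_2=0$ and $Z=\pm|\wp_0|$, so the Robin pair degenerates precisely to Dirichlet or Neumann). Your energy identity is also correct and, after the substitution $\widetilde R=\mu^{1/2}\phi$ with $\mu=(r^2+a^2)^4/\De^2$, coincides with the paper's identity in Lemma~\ref{lem:energy}. So the setup is equivalent to the paper's.

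The genuine gap is the final step, and the proposed mechanism does not work as stated. In your identity $\int\mu|\pr_{r^\ast}\phi|^2=\int\mu V^{0,0}[\la]\phi^2$ you need the right-hand side to be \emph{strictly smaller} than the left (or nonpositive) to conclude; but the terms you single out as ``manifestly nonnegative'', namely $6k^2r^2\De/(r^2+a^2)^2$ and $3(2r^2+a^2)\De^2/(r^2+a^2)^4$, push the right-hand side \emph{up}, not down, so they work against you. An integration by parts on the $-3r\De\De'/(r^2+a^2)^3$ term does not by itself produce enough negativity to absorb them, and your appeal to analyticity of $\la_{m\ell}^\om$ in $(a,k)$ only gives positivity \emph{near} $a=0$, not on the full parameter range (the paper obtains $\lat_{0\ell}^{0}\geq 0$ by a direct variational argument, Lemma~\ref{lem:lambda}, which is what you actually need). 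The paper resolves the core difficulty by passing to the renormalised unknown $\widetilde R=(r^2+a^2)^2\De^{-1}R$, which shifts the effective potential to $V_{\mathrm{stat}}^{m=0}[\lat]$; the remaining work is then to prove $V_{\mathrm{stat}}^{m=0}[\lat=0](r)>0$ for all $r>r_+$ and \emph{all} admissible $(M,a,k)$. This is a nontrivial polynomial positivity statement: the paper rescales to $M=1$, treats $a=0$ by hand, and for $a>0$ changes to the parameters $(x_+,X)=(r_+/a,ak)$ and verifies positivity by checking that all Taylor coefficients $\pr_x^j\widetilde V_0(x=x_+)$, $0\le j\le 7$, are nonnegative (with at least one strictly positive). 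That computation, not the energy identity, is the actual content of the proof, and your outline does not supply a substitute for it.
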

\begin{proposition}[The general $m$ case and the Hawking-Reall bound under the restrictions~\eqref{est:slowrotmore}]\label{prop:stat}
  If the black hole parameters are such that the Hawking-Reall bound~\eqref{est:HRboundoriginal} and either assumption~\eqref{est:assumboundX} or~\eqref{est:farHR} are satisfied, then the following holds. For all $m\in\ZZZ$ and for all radial functions $R_{[\pm2],\om}^{m\ell}$, solutions to the Teukolsky equations~\eqref{eq:radialTeuk} with $\om=m\om_+$ and satisfying the boundary conditions at infinity~\eqref{eq:radialbdyconditions} and the regularity conditions~\eqref{eq:radialregstat} at the horizon, we have $R_{[+2],\om}^{m\ell}=R_{[-2],\om}^{m\ell}=0$.
\end{proposition}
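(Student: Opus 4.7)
The plan is to work mode by mode via the decomposition of Definition~\ref{def:radialdecomp}, fixing a pair $(m,\ell)$ at real frequency $\omega = m\omega_+$. Since the horizon regularity~\eqref{eq:radialregstat} forces $R_{[\pm 2],\omega}^{m\ell} = \Delta^2 A_{[\pm 2],\omega}^{m\ell}(r)$ with $A_{[\pm 2],\omega}^{m\ell}$ smooth at $r_+$, any boundary contribution at the horizon in an integration by parts vanishes with several orders of margin. The first step is to recast the second-order radial operator of Lemma~\ref{lem:Teukeqrel} in formally self-adjoint form by multiplying by the weight $(r^2+a^2)^4/\Delta^2$, yielding a Sturm--Liouville operator of the form $\partial_{r^*}\!\left(\frac{(r^2+a^2)^4}{\Delta^2}\,\partial_{r^*}\right) + \frac{(r^2+a^2)^4}{\Delta^2}\,V^{m,\omega}[\lambda_{m\ell}^\omega]$ acting on $R_{[\pm 2],\omega}^{m\ell}$.

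Next I would derive an energy identity by pairing the separated equation with $(R_{[+2],\omega}^{m\ell})^\ast$ — or, more plausibly, with the conjugate of a suitable linear combination of $R_{[\pm 2],\omega}^{m\ell}$ adapted to their coupling — integrating against the angular eigenfunction $S_{m\ell}^\omega$ over both variables, and taking real parts (to eliminate the imaginary pieces $8i\Xi\omega r$ and $2i\Xi K\,\partial_r\Delta/(r^2+a^2)^2$ of $V^{m,\omega}$). Integration by parts in $r^*$ produces a kinetic term $\frac{(r^2+a^2)^4}{\Delta^2}|\partial_{r^*}R|^2$, and the angular eigenvalue equation~\eqref{eq:defla} converts angular derivatives into the potential contributions $\lambda_{m\ell}^\omega - 6a\Xi\omega\cos\vartheta + 6k^2a^2\cos^2\vartheta$. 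At infinity the boundary term is governed by the Robin conditions~\eqref{eq:RobinBC} together with the coupling identity~\eqref{eq:TeukidZ} of Proposition~\ref{prop:RobinBC}; strict positivity of the angular Teukolsky--Starobinsky constant $\aleph$ (Lemma~\ref{lem:angTS}) ensures that the boundary data are nondegenerate and should produce a boundary contribution of definite sign. Together this yields a combined radial-angular elliptic identity of the schematic form
\[
\int_{r_+}^{+\infty}\!\!\int_0^{\pi}\!\!\left( w_1\,|\partial_{r^*}R|^2 + w_2\,|\partial_\vartheta R|^2 + V_{\mathrm{eff}}(r,\vartheta)\,|R|^2 \right)\sin\vartheta\,d\vartheta\,dr^* = 0,
\]
from which $R_{[\pm 2],\omega}^{m\ell}\equiv 0$ follows at once provided $V_{\mathrm{eff}}\geq 0$ pointwise.

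The cornerstone, and the step I expect to be the real obstacle, is the pointwise positivity of $V_{\mathrm{eff}}$. In the stationary case one computes $\Xi K = -\Xi m a (r^2-r_+^2)/(r_+^2+a^2)$, which makes $\Xi^2 K^2/(r^2+a^2)^2$ manifestly nonnegative and vanishing at the horizon; completing the square on the real part of $V^{m,\omega}$ should reveal that the nonnegativity of the resulting ``Killing block'' is precisely the Hawking--Reall bound~\eqref{est:HRboundoriginal}, consistent with Remark~\ref{rem:sharpHR} and with the fact that this bound is equivalent to the global causality of $\mathrm{K}=\partial_t+\omega_+\partial_\varphi$. What then remains are the indefinite curvature contributions $\Delta(6k^2r^2-\lambda_{m\ell}^\omega)/(r^2+a^2)^2$ together with the angular $6k^2 a^2 \cos^2\vartheta - 6a\Xi\omega\cos\vartheta$, which can become negative in low angular modes. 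Absorbing them requires a lower bound on $\lambda_{m\ell}^\omega$ that is perturbative around the Schwarzschild-adS value $\ell(\ell+1)-2$; the auxiliary hypotheses~\eqref{est:assumboundX} ($ak\le 1/20$) and~\eqref{est:farHR} ($a\le \tfrac{1}{4}kr_+^2$) are exactly the quantitative smallness conditions under which such a lower bound suffices to dominate the indefinite terms pointwise in $(r,\vartheta)$.

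I expect the technical heart of the proof to be a set of explicit algebraic inequalities in $(r,\vartheta)$ carried out separately in the two regimes~\eqref{est:assumboundX} and~\eqref{est:farHR}, possibly with a further split between low and high $\ell$. A secondary but still delicate point is to verify that the Robin boundary contribution at infinity has a compatible sign for both admissible values $Z_\pm$ of Proposition~\ref{prop:RobinBC}, since $Z_\pm$ enters the boundary quadratic form explicitly through~\eqref{eq:defZ}; combining the identities for $R_{[+2]}$ and $R_{[-2]}$ simultaneously is probably what forces these two sign checks to cooperate.
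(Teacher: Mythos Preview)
Your overall strategy---an energy identity plus positivity of an effective potential---is the right one and matches the paper's approach in spirit. But there are two concrete issues.

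First, a misunderstanding about the boundary term at infinity. The Robin conditions~\eqref{eq:RobinBC} do not produce a boundary term of ``definite sign'' to be absorbed; they make it \emph{vanish}. Indeed, they force either $R\to 0$ or $\partial_{r^\ast}R+i\sigma R\to 0$ for some real $\sigma$ (this is exactly the argument at the end of Section~\ref{sec:Robin}), and in either case $\Re\big(R^\ast\partial_{r^\ast}R\big)\to 0$. So the positivity of $\aleph$ is used only to guarantee that the Robin system is nondegenerate, not to sign a surviving boundary contribution. Relatedly, you do not need to couple $R_{[+2]}$ and $R_{[-2]}$ in the energy identity: both satisfy the same ODE and the same boundary/horizon conditions, so one treats each separately.

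Second, and this is the real gap: pointwise positivity of the effective potential \emph{fails} in the regimes you must handle, and you are missing the ingredient the paper uses to repair it. After the natural renormalisation $\widetilde R=\frac{(r^2+a^2)^2}{\Delta}R$ and integration by parts, the radial energy identity reads
\[
\int_{-\infty}^{\pi/2}\Big(|\partial_{r^\ast}\widetilde R|^2+V^m_{\mathrm{stat}}[\lat^{m\omega_+}_{m\ell}]\,|\widetilde R|^2\Big)\,dr^\ast=0,
\]
with $V^m_{\mathrm{stat}}[\lat]\to k^2\lat$ as $r\to\infty$ (see Remark~\ref{rem:angularestimatessharpde1}). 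The paper does \emph{not} attempt a joint $(r,\vartheta)$ estimate; it first bounds the renormalised eigenvalue $\lat^{m\omega_+}_{m\ell}$ from below via an angular variational argument (Lemma~\ref{lem:lambda}), and these lower bounds are in general \emph{negative}: for $|m|\ge 2$ under~\eqref{est:assumboundX} one only gets $\lat\ge -\tfrac{5\delta}{2}$, and under~\eqref{est:farHR} the bounds~\eqref{est:lamgeneralm1},~\eqref{est:lamgeneral} are likewise negative. Hence $V^m_{\mathrm{stat}}$ alone is not pointwise positive, and your scheme as written stalls. The paper closes this by a Hardy inequality (Lemma~\ref{lem:energyandHardy}): one sacrifices part of $|\partial_{r^\ast}\widetilde R|^2$ to generate an extra potential $V_{\mathrm{Hardy}}[\lat]$, and then checks the \emph{sum} $V^m_{\mathrm{stat}}[\lat]+V_{\mathrm{Hardy}}[\lat]>0$ for the relevant negative $\lat$, which reduces to explicit polynomial inequalities in $r$ (not in $(r,\vartheta)$). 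Without this Hardy step---or an equivalent mechanism to trade kinetic energy for potential---the argument does not close for $|m|\ge 2$ under~\eqref{est:assumboundX}, nor for any $|m|\ge 1$ under~\eqref{est:farHR}.
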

\begin{remark}\label{rem:sharpHR}
  The positivity of the potentials crucially used in the proof of Proposition~\ref{prop:stat} dramatically fails to hold if the Hawking-Reall bound~\eqref{est:HRboundoriginal} is violated. In that case, the potentials can indeed be made arbitrarily negative provided that the azimuthal number $|m|$ is sufficiently large (see Remarks~\ref{rem:angularestimatessharpde1} and~\ref{rem:angularestimatessharpde2}). In the case of the wave equation, the same feature was the key to construct stationary (and growing) mode solutions in~\cite{Dol17}.
\end{remark}

The proof of Propositions~\ref{prop:statm0} and~\ref{prop:stat} is postponed to Sections~\ref{sec:proofpropstatm0} and~\ref{sec:proofpropstat} respectively. It is based on an energy identity and uses estimates on the angular eigenvalues which are respectively obtained in Sections~\ref{sec:energy} and~\ref{sec:angularestimates}. In some specific case of the proof of Proposition~\ref{prop:stat}, we will also need Hardy estimates which are obtained in the Section~\ref{sec:hardy}.
\subsection{Energy identity}\label{sec:energy}
We have the following energy identity.
\begin{lemma}[Energy identity]\label{lem:energy}
  For all radial functions $R=R_{[\pm2],\om}^{m\ell}$, solutions to the Teukolsky equations~\eqref{eq:radialTeuk} with $\om=m\om_+$ satisfying the boundary conditions at infinity~\eqref{eq:radialbdyconditions} and the regularity conditions~\eqref{eq:radialregstat} at the horizon, we have
  \begin{align}\label{est:energy}
    \begin{aligned}
      \int_{-\infty}^{\pi/2} \le(|\pr_{r^\ast}\widetilde{R}|^2 + V_{\mathrm{stat}}^m[\lat] |\widetilde{R}|^2 +\frac{\De}{(r^2+a^2)^2}\le(\lat_{m\ell}^{m\om_+}-\lat\ri) |\widetilde{R}|^2 \ri) \d r^\ast = 0, 
    \end{aligned}
  \end{align}
  for all $\lat\in\RRR$, where
  \begin{align*}
    \widetilde{R} & := \frac{(r^2+a^2)^2}{\De} R,
  \end{align*}
  and where
  \begin{align*}
    V_{\mathrm{stat}}^{m}[\lat] & := \frac{\De}{(r^2+a^2)^2}\lat + \le(\frac{\Xi\om_+m}{k}\ri)^2\frac{\De-k^2(r-r_+)^2(r+r_+)^2}{(r^2+a^2)^2} \\
                                & \quad + \frac{(\pr_r\De)^2}{(r^2 + a^2)^2} + \frac{\De}{(r^2+a^2)^2}\le(2+a^2k^2-6k^2r^2-\pr_r^2\De\ri) +r\frac{\De\pr_r\De}{(r^2+a^2)^{3}} - (2r^2-a^2)\frac{\De^2}{(r^2+a^2)^4},
  \end{align*}
  and
  \begin{align*}
    \lat_{m\ell}^{m\om_+} & := \la_{m\ell}^{m\om_+}-2-a^2k^2-\le(\frac{\Xi \om_+ m }{k}\ri)^2.
  \end{align*}
\end{lemma}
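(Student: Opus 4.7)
The plan is to cast the stationary radial Teukolsky equation in self-adjoint Schr\"odinger form, derive the identity by an integration-by-parts energy argument, and control the two boundary terms separately using the horizon regularity~\eqref{eq:radialregstat} and the Robin reformulation of the conformal boundary conditions from Proposition~\ref{prop:RobinBC}.

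First, starting from the reformulation of the radial Teukolsky operator in Lemma~\ref{lem:Teukeqrel}, I will perform the substitution $\widetilde{R} = f R$ with $f := (r^2+a^2)^2/\De$. This is designed precisely to eliminate the first-order $\pr_{r^\ast}$ term and produces the Schr\"odinger-type equation
\begin{align*}
  \pr_{r^\ast}^2 \widetilde{R} + \le(V^{m,\om}[\la_{m\ell}^\om] - f^{-1}\pr_{r^\ast}^2 f\ri)\widetilde{R} = 0.
\end{align*}
At the stationary frequency $\om = m\om_+$, the relation $K = m\om_+(r_+^2 - r^2)$ drives all subsequent simplifications. I then multiply by $\widetilde{R}^\ast$, integrate over $r^\ast \in (-\infty, \pi/2)$, integrate by parts once, and take the real part to obtain
\begin{align*}
  \int_{-\infty}^{\pi/2}\le(|\pr_{r^\ast}\widetilde{R}|^2 - \Re\le(V^{m,m\om_+}[\la_{m\ell}^{m\om_+}] - f^{-1}\pr_{r^\ast}^2 f\ri)|\widetilde{R}|^2\ri)\d r^\ast = \Re\bigl[\widetilde{R}^\ast\pr_{r^\ast}\widetilde{R}\bigr]_{-\infty}^{\pi/2}.
\end{align*}

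The boundary contribution at $r^\ast = -\infty$ follows immediately from the stationary horizon regularity~\eqref{eq:radialregstat}: $R = \De^2 A$ with $A$ smooth forces $\widetilde{R} = (r^2+a^2)^2\De\,A$ and $\pr_{r^\ast}\widetilde{R}$ to decay exponentially in $r^\ast$. The contribution at $r^\ast = \pi/2$ is more delicate, since the potential $V^{m,m\om_+}$ retains non-trivial imaginary pieces (from $8i\Xi\om r$ and $2i\Xi K\pr_r\De$) even at the stationary frequency. The decisive input will be Proposition~\ref{prop:RobinBC}: at real frequency $\om = m\om_+$ the transmission coefficients $\wp_0, \wp_1, \sqrt{\aleph}$ are all real, so the Robin relation $(\wp_0 \pm \sqrt{\aleph})R - i\wp_1 \pr_{r^\ast}R \to 0$ at $r\to\infty$ forces $R/\pr_{r^\ast}R$ to have a purely imaginary limit and thus $\Re(R^\ast\pr_{r^\ast}R) \to 0$. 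Combined with the elementary asymptotics $f \to k^{-2}$ and $f\,\pr_{r^\ast}f \to 0$, this gives $\Re(\widetilde{R}^\ast\pr_{r^\ast}\widetilde{R}) \to 0$ at conformal infinity.

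It finally remains to verify the purely algebraic identity
\begin{align*}
  -\Re\bigl(V^{m,m\om_+}[\la_{m\ell}^{m\om_+}]\bigr) + f^{-1}\pr_{r^\ast}^2 f = V_{\mathrm{stat}}^m[\lat] + \frac{\De}{(r^2+a^2)^2}\bigl(\lat_{m\ell}^{m\om_+} - \lat\bigr),
\end{align*}
in which the $\lat$-dependence on the right cancels trivially between the $\frac{\De}{(r^2+a^2)^2}\lat$ term inside $V_{\mathrm{stat}}^m[\lat]$ and the explicit correction, so it is enough to check the equality for a single value of $\lat$. The relevant ingredients will be: (i)~the stationary factorization $\Xi^2 K^2 = (\Xi m\om_+/k)^2 k^2(r-r_+)^2(r+r_+)^2$, which pairs the $\Xi^2 K^2/(r^2+a^2)^2$ piece of $V^{m,m\om_+}$ with the $(\Xi\om_+ m/k)^2$ term of $V_{\mathrm{stat}}^m$; (ii)~the explicit expansion of $f^{-1}\pr_{r^\ast}^2 f$, which supplies the $(\pr_r\De)^2$, $r\De\pr_r\De$ and $\De^2$-weighted contributions; and (iii)~the shift $\la_{m\ell}^{m\om_+} - \lat_{m\ell}^{m\om_+} = 2 + a^2k^2 + (\Xi m\om_+/k)^2$ which balances the remaining constant terms. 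The hardest step will be the vanishing of the flux at conformal infinity: a seemingly innocuous boundary that is tamed only through the Robin coupling of Proposition~\ref{prop:RobinBC}, the real-frequency assumption $\om = m\om_+$ and the positivity $\aleph > 0$ (Lemma~\ref{lem:angTS}) being essential to ensure that the Robin coefficients are real and the limit of $R/\pr_{r^\ast}R$ is genuinely purely imaginary.
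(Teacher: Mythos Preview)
Your proposal is correct and follows essentially the same route as the paper: transform to Schr\"odinger form via $\widetilde{R}=\frac{(r^2+a^2)^2}{\De}R$, multiply by the conjugate and take real parts to obtain the differential identity $\pr_{r^\ast}\bigl(\widetilde{R}\pr_{r^\ast}\widetilde{R}^\ast+\widetilde{R}^\ast\pr_{r^\ast}\widetilde{R}\bigr)=2|\pr_{r^\ast}\widetilde{R}|^2+2V_{\mathrm{stat}}^m[\lat_{m\ell}^{m\om_+}]|\widetilde{R}|^2$, then kill the horizon boundary term via~\eqref{eq:radialregstat} and the infinity boundary term via the Robin conditions~\eqref{eq:RobinBC}. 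Your more explicit discussion of why the Robin data are real at $\om=m\om_+$ (so that $\Re(R^\ast\pr_{r^\ast}R)\to0$), and of the algebraic identification of the potential, simply unpacks what the paper leaves as a one-line computation.
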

To prove lemma~\ref{lem:energy}, we need the following preliminary lemma.
\begin{lemma}
  For all radial functions $R_{[+2],\om}^{m\ell}, R_{[+2],\om}^{m\ell}$, solutions to the Teukolsky equations~\eqref{eq:radialTeuk} with $\om=m\om_+$ satisfying the boundary conditions at infinity~\eqref{eq:radialbdyconditions} and the regularity conditions~\eqref{eq:radialregstat} at the horizon, we have
  \begin{align}\label{eq:bdycondim}
    \begin{aligned}
      R_{[+2],\om}^{m\ell}\pr_{r^\ast}\big(R_{[+2],\om}^{m\ell}\big)^\ast+\big(R_{[+2],\om}^{m\ell}\big)^\ast\pr_{r^\ast}R_{[+2],\om}^{m\ell} \xrightarrow{r\to+\infty} 0, \\
      R_{[-2],\om}^{m\ell}\pr_{r^\ast}\big(R_{[-2],\om}^{m\ell}\big)^\ast+\big(R_{[-2],\om}^{m\ell}\big)^\ast\pr_{r^\ast}R_{[-2],\om}^{m\ell} \xrightarrow{r\to+\infty} 0.
    \end{aligned}
  \end{align}
\end{lemma}
\begin{proof}
  From the regularity conditions at the horizon~\eqref{eq:radialregstat}, $R_{[+2]}=R_{[+2],\om}^{m\ell}$ and $R_{[-2]}=R_{[-2],\om}^{m\ell}$ are on the same branch of the Frobenius expansion of the ODE~\eqref{eq:radialTeuk} when $r^\ast\to-\infty$. If $R_{[+2]}=0$ or $R_{[-2]}=0$ then by a unique-continuation argument for the ODE~\eqref{eq:radialTeuk} using the boundary conditions~\eqref{eq:radialbdyconditions}, we have $R_{[+2]}=R_{[-2]}=0$ and~\eqref{eq:bdycondim} trivially holds. If else, there exists $\kappa\in\CCC^\ast$ such that $R_{[+2]}(r)=\kappa R_{[-2]}(r)$. The boundary conditions at infinity~\eqref{eq:radialbdyconditions} imply that
  \begin{align}\label{eq:bdycondtied}
    \begin{aligned}
      R_{[+2]}-\kappa^\ast R_{[+2]}^\ast & \xrightarrow{r\to+\infty} 0, & \pr_{r^\ast}R_{[+2]}+\kappa^\ast \pr_{r^\ast}R_{[+2]}^\ast & \xrightarrow{r\to+\infty} 0.  
    \end{aligned}
  \end{align}
  From~\eqref{eq:bdycondtied}, we deduce that
  \begin{align*}
    R_{[+2]}\pr_{r^\ast}R_{[+2]}^\ast +R_{[+2]}^\ast\pr_{r^\ast}R_{[+2]} & \xrightarrow{r\to+\infty} \kappa^\ast R_{[+2]}^\ast\pr_{r^\ast}R_{[+2]}^\ast - \kappa^\ast R_{[+2]}^\ast\pr_{r^\ast}R_{[+2]}^\ast = 0.
  \end{align*}
  The same argument also gives the same limit in the $[-2]$ case and this finishes the proof of the lemma.
\end{proof}
\begin{proof}[Proof of Lemma~\ref{lem:energy}]
  Using the expressions and notations of Lemma~\ref{lem:Teukeqrel}, we have
  \begin{align}\label{eq:radialTeukstatrenorm}
    \begin{aligned}
      0 & = \frac{(r^2+a^2)^2}{\De}\widetilde{\mathfrak{I}}_{[+2]}^{m,\om}[\la^\om_{m\ell}]\le(\frac{\De}{(r^2+a^2)^2} \widetilde{R}\ri) \\
      & = \pr_{r^\ast}^2\widetilde{R} -\le(V_{\mathrm{stat}}^{m}[\lat^\om_{m\ell}]-i\Im\le(V^{m,\om}[\la^\om_{m\ell}]\ri)\ri)\widetilde{R}.
    \end{aligned}
  \end{align}
  From~\eqref{eq:radialTeukstatrenorm} we infer
  \begin{align}\label{eq:radialTeukstatrenorm2}
    \begin{aligned}
      0 & = \pr_{r^\ast}\le(\widetilde{R}\pr_{r^\ast}\widetilde{R}^\ast + \widetilde{R}^\ast\pr_{r^\ast}\widetilde{R}\ri) - 2|\pr_{r^\ast}\widetilde{R}|^2 -2 V_{\mathrm{stat}}^{m}[\lat^\om_{m\ell}] |\widetilde{R}|^2.
    \end{aligned}
  \end{align}
  Integrating~\eqref{eq:radialTeukstatrenorm2}, using the conditions at the horizon~\eqref{eq:radialregstat} and the limit at infinity~\eqref{eq:bdycondim}, we obtain the desired~\eqref{est:energy} and this finishes the proof of the lemma.
\end{proof}
\begin{remark}\label{rem:angularestimatessharpde1}
  From a direct computation, one can check that the renormalised angular eigenvalue $\lat$ is the leading order term of the stationary potential at infinity, \emph{i.e.}
  \begin{align}\label{eq:limitVstat}
    V_{\mathrm{stat}}^{m} [\lat=\lat_{m\ell}^{m\om_+}] & \xrightarrow{r\to +\infty} k^2 \lat_{m\ell}^{m\om_+}.   
  \end{align}
  The key to establishing the positivity of the potential (up to using Hardy estimates) is therefore to obtain a suitable lower bound on $\lat_{m\ell}^{m\om_+}$. See Section~\ref{sec:angularestimates} and Remark~\ref{rem:angularestimatessharpde2}. 
\end{remark}

\subsection{Hardy estimate}\label{sec:hardy}
The following Hardy estimate is used in the proof of Proposition~\ref{prop:stat}. It is designed to compensate the potentially negative leading coefficient of the potential while keeping the lower coefficients positive (see also Remark~\ref{rem:angularestimatessharpde1}).
\begin{lemma}[Hardy estimate]\label{lem:energyandHardy}
  For all radial functions $\widetilde{R}$ such that $R:=\frac{(r^2+a^2)}{\De}\widetilde{R}$ satisfies the boundary conditions at infinity~\eqref{eq:radialbdyconditions} and the regularity conditions~\eqref{eq:radialregstat} at the horizon, we have
  \begin{align}\label{est:Hardy}
    \begin{aligned}
      \int_{-\infty}^{\pi/2}|\pr_{r^\ast}\widetilde{R}|^2 \,\d r^\ast & \geq \int_{-\infty}^{\pi/2}V_{\mathrm{Hardy}}[\lat]|\widetilde{R}|^2 \,\d r^\ast.
    \end{aligned}
  \end{align}
  for all $\lat\in\RRR$, where
  \begin{align*}
    V_{\mathrm{Hardy}}[\lat] & := \lat\frac{\De}{r^2+a^2}\frac{\d}{\d r}\le(\frac{r-r_+}{r^2+a^2}\ri)  - \lat^2\le(\frac{r-r_+}{r^2+a^2}\ri)^2.
  \end{align*}
\end{lemma}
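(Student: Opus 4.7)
The plan is to recognise that $V_{\mathrm{Hardy}}[\lat]$ has the structure $\pr_{r^\ast} f - f^2$ for an explicit real-valued function $f$, and then derive the estimate from completing a square. Concretely, set
\begin{align*}
  f(r) & := \lat\,\frac{r-r_+}{r^2+a^2}.
\end{align*}
Using $\frac{\d r^\ast}{\d r} = \frac{r^2+a^2}{\De}$, one immediately checks that
\begin{align*}
  \pr_{r^\ast} f & = \lat\,\frac{\De}{r^2+a^2}\,\frac{\d}{\d r}\!\le(\frac{r-r_+}{r^2+a^2}\ri), & f^2 & = \lat^2\le(\frac{r-r_+}{r^2+a^2}\ri)^2,
\end{align*}
so that $V_{\mathrm{Hardy}}[\lat] = \pr_{r^\ast} f - f^2$.

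Next, since $f$ is real, expanding the obvious inequality $|\pr_{r^\ast}\widetilde R + f\widetilde R|^2\geq 0$ gives the pointwise identity
\begin{align*}
  |\pr_{r^\ast}\widetilde R|^2 + f\,\pr_{r^\ast}\!\big(|\widetilde R|^2\big) + f^2\,|\widetilde R|^2 & \geq 0.
\end{align*}
Integrating over $r^\ast\in(-\infty,\pi/2)$ and integrating by parts the middle term yields
\begin{align*}
  \int_{-\infty}^{\pi/2}|\pr_{r^\ast}\widetilde R|^2\,\d r^\ast & \geq \int_{-\infty}^{\pi/2}\!\big((\pr_{r^\ast} f) - f^2\big)|\widetilde R|^2\,\d r^\ast - \Big[f\,|\widetilde R|^2\Big]_{-\infty}^{\pi/2},
\end{align*}
so only the vanishing of the boundary terms remains to be verified.

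The boundary terms vanish by the assumed regularity of $R$. At the horizon, $f(r_+)=0$ by construction, while the condition~\eqref{eq:radialregstat} gives $R=\De^2 A(r)$ with $A$ smooth, hence $\widetilde R = \frac{r^2+a^2}{\De}R = O(\De)$ vanishes as $r\to r_+$; so $f|\widetilde R|^2\to 0$. At infinity, $f(r)= O(1/r)\to 0$, and the conformal anti-de Sitter boundary condition~\eqref{eq:radialbdyconditions} together with $\widetilde R = \frac{r^2+a^2}{\De}R \sim \frac{1}{k^2 r^2}R$ shows that $\widetilde R$ is bounded (indeed $O(1/r^2)$) at infinity, so again $f|\widetilde R|^2\to 0$. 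This yields~\eqref{est:Hardy}. No step is really an obstacle here: the only content is recognising the square completion, and the boundary regularity is given for free by~\eqref{eq:radialbdyconditions} and~\eqref{eq:radialregstat}.
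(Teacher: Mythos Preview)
Your proof is correct and essentially identical to the paper's: both complete the square $|\pr_{r^\ast}\widetilde R + f\widetilde R|^2\geq 0$ (the paper phrases it as $|g^{-1}\pr_{r^\ast}(g\widetilde R)|^2\geq 0$ with $\pr_{r^\ast}\log g$ equal to your $f$), integrate by parts, and check that the boundary contribution $f|\widetilde R|^2$ vanishes because $f\to 0$ at both endpoints while $\widetilde R$ stays bounded. One caveat: your inversion $\widetilde R=\frac{r^2+a^2}{\De}R$ does not match the lemma's stated $R=\frac{r^2+a^2}{\De}\widetilde R$; the statement itself appears to carry a typo relative to the intended $\widetilde R=\frac{(r^2+a^2)^2}{\De}R$ from the preceding energy lemma, and with that intended relation the boundary analysis goes through exactly as you argue.
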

\begin{proof}
  Let us define
  \begin{align*}
    y(r^\ast) & := \lat\frac{r-r_+}{r^2+a^2}, & f(r^\ast) & := \exp\le(\int_{-\infty}^{r^\ast}y(r^{\ast,'})\,\d r^{\ast,'}\ri).
  \end{align*}
  We have
  \begin{align*}
    0 & \leq \int_{-\infty}^{\pi/2}\le|f^{-1}\pr_{r^\ast}\le(f\widetilde{R}_{[+2]}\ri)\ri|^2 \,\d r^\ast \\
      & = \int_{-\infty}^{\pi/2}|\pr_{r^\ast}\widetilde{R}_{[+2]}|^2 \,\d r^\ast + \int_{-\infty}^{\pi/2}\big|\pr_{r^\ast}(\log f)\widetilde{R}_{[+2]}\big|^2 \,\d r^\ast \\
      & \quad + \int_{-\infty}^{\pi/2}\pr_{r^\ast}(\log f)\le(\widetilde{R}_{[+2]}^\ast\pr_{r^\ast}\widetilde{R}_{[+2]} + \widetilde{R}_{[+2]}\pr_{r^\ast}\widetilde{R}_{[+2]}^\ast\ri)\,\d r^\ast \\
      & = \int_{-\infty}^{\pi/2}|\pr_{r^\ast}\widetilde{R}_{[+2]}|^2 \,\d r^\ast + \int_{-\infty}^{\pi/2}\le(-\pr_{r^\ast}^2(\log f) + \le(\pr_{r^\ast}(\log f)\ri)^2\ri)|\widetilde{R}_{[+2]}|^2 \,\d r^\ast +\le[\pr_{r^\ast}(\log f) |\widetilde{R}_{[+2]}|^2\ri]_{-\infty}^{\pi/2}.
  \end{align*}
  Using that $y(r^\ast)\to 0$ when $r^\ast \to \pi/2$ and $r^\ast\to-\infty$, and using the limits given by the conditions~\eqref{eq:radialbdyconditions} and~\eqref{eq:radialregstat}, the boundary terms in the above computation vanish and the bound~\eqref{est:Hardy} follows.
\end{proof}

\subsection{Estimates for the angular eigenvalues}\label{sec:angularestimates}
Define the following two parameters
\begin{align*}
  X & := ak, & a\om_+ & =: \frac{\de X}{1+X}.
\end{align*}
We have the following bounds on the angular eigenvalues.
\begin{lemma}\label{lem:lambda}
  Let $m\in\ZZZ$. For all admissible Kerr-adS black hole parameters $(M,a,k)$ and for all $\ell\geq 2$, the following holds.
  \begin{enumerate}
  \item For $m=0$, we have
    \begin{align}\label{est:lam0}
      \lat_{m\ell}^{m\om_+} & \geq 0.
    \end{align}
  \item\label{item:lambdaHR} If the Hawking-Reall bound~\eqref{est:HRboundoriginal} holds, we have for $|m|=1$
    \begin{align}
      \label{est:lamgeneralm1}
      \lat_{m\ell}^{m\om_+} & \geq -(1+X)^2 -2X(1-X),
    \end{align}
    and, for $|m|\geq 2$,
    \begin{align}
      \label{est:lamgeneral}
      \lat_{m\ell}^{m\om_+} & \geq -4(1+X)^2 + 4\Xi(1-\de^2)(1-X)^2.
    \end{align}
  \item\label{item:lambdaHRplus} If the Hawking-Reall bound~\eqref{est:HRboundoriginal} and assumption~\eqref{est:assumboundX} hold, we have for $|m|=1$
    \begin{align}
      \label{est:lampos}
      \lat_{m\ell}^{m\om_+} & \geq 0,
    \end{align}
    and, for $|m| \geq 2$, 
    \begin{align}
      \label{est:lam234}
      \lat^{m\om_+}_{m\ell} & \geq -\frac{5\de}{2}.
    \end{align}
  \end{enumerate}
\end{lemma}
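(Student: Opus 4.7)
\medskip

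\noindent\textbf{Proof proposal.} The plan is to extract a Rayleigh quotient for $\la_{m\ell}^{m\omp}$ from the eigenvalue equation~\eqref{eq:defla} and then estimate it in each of the three regimes. A short computation using the definition of $\LL_n,\LL_n^\dg$ gives that, under the $L^2(\sin\varth\d\varth)$ pairing and modulo boundary terms which vanish by the asymptotics~\eqref{est:limitsStheta}, one has $(\sqrt{\De_\varth}\LL_{-1}^\dg)^\dg=-\sqrt{\De_\varth}\LL_2$. Multiplying~\eqref{eq:defla} by $(S^{m\omp}_{m\ell})^\ast\sin\varth$, integrating over $\varth\in(0,\pi)$ and using this adjointness yields
\begin{align*}
  \la_{m\ell}^{m\omp}\,\|S\|^2 &= \int_0^\pi \big|\sqrt{\De_\varth}\LL_2 S\big|^2\sin\varth\,\d\varth + \int_0^\pi \big(6\de X m(1-X)\cos\varth - 6X^2\cos^2\varth\big)\,|S|^2\sin\varth\,\d\varth ,
\end{align*}
where I have used $a\omp = \de X/(1+X)$ and $\Xi=(1-X)(1+X)$ to simplify the interaction term. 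A parallel computation shows that the Hawking-Reall bound~\eqref{est:HRboundoriginal} is equivalent to $\de\leq 1$, which will be the only way HR enters the argument.

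Next I would expand the kinetic integral. Writing
$\sqrt{\De_\varth}\LL_2 S = \sqrt{\De_\varth}\,\pr_\varth S + \big(2\sqrt{\De_\varth}\,\pr_\varth\log(\sqrt{\De_\varth}\sin\varth) - \Xi H/\sqrt{\De_\varth}\big) S$,
expanding $|\cdot|^2$ and integrating by parts the cross term $\pr_\varth|S|^2$ against the $\log$-weight would convert the kinetic integral into
\begin{align*}
  \int_0^\pi \De_\varth|\pr_\varth S|^2\sin\varth\,\d\varth + \int_0^\pi \frac{\Xi^2 H^2}{\De_\varth}|S|^2\sin\varth\,\d\varth + \int_0^\pi Q(\varth; m,\om,X)|S|^2\sin\varth\,\d\varth,
\end{align*}
plus boundary contributions at $\varth=0,\pi$ which are killed by~\eqref{est:limitsStheta} (in the borderline $|m|=1$ case, precisely because $S=O((1-\cos\varth)^{3/2})$). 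Since $H=a\om\sin\varth - m\csc\varth$, the term $\Xi^2 H^2/\De_\varth$ supplies both the $|m|^2\csc^2\varth$ singularity (absorbed by $|S|^2$ via Frobenius) and, after expansion, a clean contribution $\de^2m^2(1-X)^2$ which is exactly what is subtracted in the definition of $\lat_{m\ell}^{m\omp}$. Completing the square in the $m\cos\varth$ cross term against $H^2$ produces the factor $(1-\de^2)$ that survives in~\eqref{est:lamgeneral}, whereas the remainder $Q$ is pointwise bounded using $\De_\varth\in[1-X^2,1]$.

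I would then run the case analysis. For $m=0$, the interaction vanishes and only $-6X^2\cos^2\varth$ contributes negatively; the Schwarzschild-like gradient bound ($\la=\ell(\ell+1)-2\geq 4$ at $a=0$) combined with $\cos^2\varth\leq 1$ yields~\eqref{est:lam0} after perturbing in $X^2\leq 1$. For $|m|\geq 2$ under HR, one uses $6\de X|m|(1-X)|\cos\varth|\leq 3\de X m^2\cos^2\varth +3\de X(1-X)^2$ (AM--GM) to trade the linear-in-$m$ interaction for quadratic-in-$m$ absorption, and the resulting reorganisation gives~\eqref{est:lamgeneral} with the advertised $(1-\de^2)$-factor. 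The case $|m|=1$ is handled analogously but with the weaker Frobenius decay at the poles, producing~\eqref{est:lamgeneralm1}. Finally, under the smallness hypothesis~\eqref{est:assumboundX} one has $X\leq 1/20$, so all $O(X)$-residuals above can be absorbed into the leading positive terms by direct numerical optimisation, yielding the cleaner~\eqref{est:lampos} and~\eqref{est:lam234}.

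The main obstacle is the second step: one needs the kinetic lower bound to be simultaneously sharp in its $|m|^2$ dependence (so as to match $\de^2m^2(1-X)^2$ exactly) and robust in the $a$-dependent corrections coming from $\De_\varth$ and the $\log$-weight. The case $|m|=1$ is particularly delicate because the spin-$+2$ regularity at the poles exactly matches (rather than strictly beats) the $\csc\varth$ divergence of $H$; this forces a less favourable constant and explains why~\eqref{est:lamgeneralm1} is weaker than~\eqref{est:lamgeneral}. Keeping track of boundary terms at $\varth=0,\pi$ throughout the integrations by parts requires the explicit Frobenius expansions used in Lemma~\ref{lem:angTS}.
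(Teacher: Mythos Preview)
Your overall strategy --- extract a Rayleigh quotient from~\eqref{eq:defla} by pairing with $S$ and integrating by parts, then estimate the resulting potential case by case --- is exactly what the paper does. The adjointness relation and the identification $a\Xi\om_+ m = \de X(1-X)m$, as well as the observation that Hawking--Reall is $\de\leq 1$, are all correct and central. However, two of your case arguments have genuine gaps.

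\textbf{The $m=0$ case.} Your claim that~\eqref{est:lam0} follows by ``perturbing in $X^2\leq 1$'' from the Schwarzschild value $\la=\ell(\ell+1)-2$ does not work: $X$ ranges over all of $[0,1)$, so this is not a perturbation regime, and the negative contribution $-6X^2\cos^2\varth$ can be of order one. The paper instead expands $|\sqrt{\De_\varth}\LL_2 S|^2$ fully, isolates an explicit potential $G(\varth)$, and then uses the elementary inequality $\int_0^\pi \De_\varth|\pr_\varth S|^2\sin\varth\,\d\varth \geq \Xi\int_0^\pi|\pr_\varth S|^2\sin\varth\,\d\varth + X^2\int_0^\pi\cos(2\varth)|S|^2\sin\varth\,\d\varth$ (itself an integration by parts) to absorb the residual $-X^2\cos(2\varth)$. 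What remains for $m=0$ is $F(\varth)=\frac{4\Xi^2\cos^2\varth}{\De_\varth\sin^2\varth}\geq 0$, and~\eqref{est:lam0} follows with no appeal to $a=0$ values. Your expansion into $\De_\varth|\pr_\varth S|^2 + \Xi^2H^2/\De_\varth + Q$ is the right shape, but you need to actually compute $Q$ and observe the above cancellation; the positivity for $m=0$ is exact, not perturbative.

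\textbf{Item~\ref{item:lambdaHRplus}.} Here ``direct numerical optimisation'' of the $O(X)$ residuals is not enough. For $|m|\geq 2$ the paper uses a second, non-obvious integration by parts to show
\[
\Xi\int_0^\pi|\pr_\varth S|^2\sin\varth\,\d\varth + \int_0^\pi\frac{\Xi^2(4m\cos\varth+(4+m^2)\cos^2\varth)}{\De_\varth\sin^2\varth}|S|^2\sin\varth\,\d\varth \geq \Xi^2(m-4),
\]
obtained by writing the integrand as $\Xi^2(m-4)$ plus the square $\bigl(\sin\varth\,\pr_\varth(\sin^{-m}\varth\,S)-2\sin^{-1}\varth(\sin^{-m}\varth\,S)\bigr)^2\sin^{2m}\varth$ (and the analogous identity with $m\to 1$ for $|m|=1$, producing $\Xi^2$). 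It is this exact lower bound, not a soft estimate, that makes the polynomial in $m$ on the right of~\eqref{est:minGtilde} increasing for $m\geq 2$ and allows evaluation at $m=2$ to close~\eqref{est:lam234} under $X\leq 1/20$. Your AM--GM splitting of the $6\de X m(1-X)\cos\varth$ term is correct but does not by itself produce the sharp $m$-dependence needed here.
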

\begin{remark}
  As it will be clear in the proof of Item~\ref{item:lambdaHRplus}, the positive bound~\eqref{est:lampos} also holds for $|m|\geq 5$. However, in that case, the bound~\eqref{est:lam234} is sufficient for the proof of Proposition~\ref{prop:stat}. Also the bound~\eqref{est:lam234} is not sharp but sufficient for the proof of Proposition~\ref{prop:stat}.
\end{remark}
Before turning to the proof of Lemma~\ref{lem:lambda}, we have the following preliminary estimate.
\begin{lemma}\label{lem:prelimlambda}
  Let $m\in\ZZZ$. For all admissible Kerr-adS black hole parameters $(M,a,k)$ and for all $\ell\geq 2$, we have
  \begin{align}\label{est:minGtilde}
    \begin{aligned}
      \lat_{m\ell}^{m\om_+} & \geq \Xi \int_{0}^\pi|\pr_\varth S_{m\ell}^{m\om_+}|^2\sin\varth\d\varth + \int_0^\pi F(\varth)|S_{m\ell}^{m\om_+}|^2\sin\varth\d\varth,
    \end{aligned}
  \end{align}
  where
  \begin{align*}
    F(\varth) & := \frac{\Xi^2}{\Deth\sin^2\varth}\le(4m\cos\varth+(4+m^2)\cos^2\varth\ri) + \frac{\Xi}{\Deth}X(1+X)4m\cos\varth \\
              & \quad + \frac{\Xi}{\Deth}(1 -\de)(1-X)\le((1+X +\de (1 - X)) m^2-4mX\cos\varth\ri).
  \end{align*}
\end{lemma}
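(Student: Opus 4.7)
The plan is a Rayleigh--Ritz-type computation: test the angular eigenvalue equation \eqref{eq:defla} satisfied by $S = S_{m\ell}^{m\om_+}$ against $S$ itself (normalized so that $\int_0^\pi|S|^2\sin\varth\,\d\varth = 1$), and perform two integrations by parts to convert the left-hand side into a positive ``Dirichlet energy'' plus a lower-order potential, the residual of which will match $F(\varth)$.

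First, a direct computation of formal adjoints on $L^2((0,\pi),\sin\varth\,\d\varth)$ gives the identities $(\sqrt{\Deth}\LL_n^\dg)^\ast = -\sqrt{\Deth}\LL_{1-n}$ and $(\sqrt{\Deth}\LL_n)^\ast = -\sqrt{\Deth}\LL_{1-n}^\dg$; in particular the operator $\sqrt{\Deth}\LL_{-1}^\dg\sqrt{\Deth}\LL_2$ is self-adjoint. Pairing \eqref{eq:defla} with $S$ and integrating by parts once yields
\begin{align*}
\la_{m\ell}^{m\om_+} = \int_0^\pi\le(\sqrt{\Deth}\LL_2 S\ri)^2\sin\varth\,\d\varth + \int_0^\pi\le(6a\Xi m\om_+\cos\varth - 6k^2 a^2\cos^2\varth\ri)|S|^2\sin\varth\,\d\varth,
\end{align*}
provided the boundary term $[\Deth\LL_2 S \cdot S\sin\varth]_0^\pi$ vanishes. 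The latter is guaranteed by \eqref{est:limitsStheta}: either $S$ vanishes at the relevant pole, or, in the critical cases $m = \pm 2$, the singular pieces of $\LL_2$ cancel (as the combination $-\Xi H/\Deth + 2\pr_\varth\log(\sqrt{\Deth}\sin\varth)$ multiplied by $\Deth\sin\varth$ reduces to the factor $\Xi(m\pm 2)$ at the endpoints).

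Second, write $A := -\Xi H/\sqrt{\Deth} + 2\sqrt{\Deth}\,\pr_\varth\log(\sqrt{\Deth}\sin\varth)$, so that $(\sqrt{\Deth}\LL_2 S)^2 = \Deth(\pr_\varth S)^2 + 2\sqrt{\Deth}A\,\pr_\varth S\cdot S + A^2 S^2$, and integrate the cross term by parts via $2\sqrt{\Deth}A\,\pr_\varth S\cdot S = \sqrt{\Deth}A\,\pr_\varth(S^2)$. An explicit calculation yields the useful identity
\begin{align*}
\sqrt{\Deth}A\sin\varth = -\Xi a\om\sin^2\varth + \Xi m + 2\cos\varth\le(1 - X^2\cos 2\varth\ri),
\end{align*}
which tends to $\Xi(m\pm 2)$ at $\varth = 0,\pi$; combined with the asymptotics of $S$ the new boundary term $[\sqrt{\Deth}A\sin\varth\cdot S^2]_0^\pi$ again vanishes across all values of $m$. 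The outcome is
\begin{align*}
\int_0^\pi\le(\sqrt{\Deth}\LL_2 S\ri)^2\sin\varth\,\d\varth = \int_0^\pi\Deth(\pr_\varth S)^2\sin\varth\,\d\varth + \int_0^\pi\le[A^2 - \frac{1}{\sin\varth}\pr_\varth\le(\sqrt{\Deth}A\sin\varth\ri)\ri]|S|^2\sin\varth\,\d\varth.
\end{align*}

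Finally, use the nonnegative splitting $\Deth - \Xi = X^2\sin^2\varth \geq 0$ to discard $X^2\sin^2\varth(\pr_\varth S)^2$ from the kinetic energy (this is where the inequality enters); then subtract $2 + a^2k^2 + (\Xi\om_+ m/k)^2$ in accordance with the definition $\lat = \la - 2 - a^2k^2 - (\Xi\om_+ m/k)^2$. The main technical obstacle, and essentially the only step requiring actual work, is then the purely algebraic verification that the residual potential
\begin{align*}
\widetilde F(\varth) := A^2 - \frac{1}{\sin\varth}\pr_\varth\le(\sqrt{\Deth}A\sin\varth\ri) + 6a\Xi m\om_+\cos\varth - 6k^2 a^2\cos^2\varth - 2 - a^2k^2 - (\Xi\om_+ m/k)^2
\end{align*}
reorganizes into the three-term expression $F(\varth)$ of the statement after substituting $\Deth = 1 - X^2\cos^2\varth$, $\Xi = 1 - X^2$, and the stationary relation $\Xi a\om_+ = \de X(1-X)$ (equivalent to $a\om_+ = \de X/(1+X)$). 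This is a rational-function identity in $\cos\varth$, $\sin\varth$, $X$ and $\de$, easily verified by hand or with the Mathematica notebook referenced throughout the paper.
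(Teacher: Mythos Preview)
Your overall strategy matches the paper's: test \eqref{eq:defla} against $S$, integrate by parts to obtain $\la = \int_0^\pi(\Deth(\pr_\varth S)^2 + G(\varth)S^2)\sin\varth\,\d\varth$ with $G$ explicit, then split $\Deth = \Xi + X^2\sin^2\varth$. The gap lies in your treatment of the excess term $X^2\sin^2\varth(\pr_\varth S)^2$. You simply discard it and then assert that the residual $\widetilde F$ equals the stated $F$. That algebraic identity is false: comparing with the paper's expression for $G$ one finds $\widetilde F(\varth) = F(\varth) - X^2\cos(2\varth)$, so the inequality you actually prove is
\[
\lat \geq \Xi\int_0^\pi(\pr_\varth S)^2\sin\varth\,\d\varth + \int_0^\pi\big(F(\varth) - X^2\cos(2\varth)\big)S^2\sin\varth\,\d\varth,
\]
which is strictly weaker than \eqref{est:minGtilde} near the poles (where $\cos(2\varth)>0$) and does not imply the lemma as stated.

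The fix, which is precisely what the paper does, is to integrate the excess term by parts once more \emph{before} dropping anything: write $X^2\sin^2\varth(\pr_\varth S)^2 = X^2|\pr_\varth(\sin\varth\,S)|^2 - 2X^2\sin\varth\cos\varth\,S\,\pr_\varth S - X^2\cos^2\varth\,S^2$, integrate the cross term by parts, and only then discard the manifestly nonnegative $|\pr_\varth(\sin\varth\,S)|^2$. A short computation shows that the two surviving pieces contribute exactly $\int_0^\pi X^2\cos(2\varth)S^2\sin\varth\,\d\varth$, which supplies the missing $+X^2\cos(2\varth)$ and restores $\widetilde F + X^2\cos(2\varth) = F$ on the nose.
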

\begin{proof}
  We drop the indices for simplicity. Multiplying~\eqref{eq:defla} by $S\sin\varth$, integrating over $\varth \in (0,\pi)$ and using the regularity conditions~\eqref{est:limitsStheta} at $\varth=0,\pi$, we have
  \begin{align}\label{eq:lavar}
    \begin{aligned}
    \la\int_0^\pi|S|^2\sin\varth\d\varth & = \int_{0}^\pi -\sqrt{\De_\varth}\LL_{-1}^\dg\sqrt{\De_\varth}\LL_2S+\le(6a\Xi\om\cos\varth-6a^2k^2\cos^2\varth\ri)|S|^2 \sin\varth\d\varth \\
    & = \int_0^\pi \le(\De_\varth|\pr_\varth S|^2 + G(\varth) |S|^2 \ri)\sin\varth\d\varth,
    \end{aligned}
  \end{align}
  where
  \begin{align*}
    G(\varth) & := 2+a^2k^2 + \le(\frac{\Xi\om_+m}{k}\ri)^2 -a^2k^2\cos(2\theta) + \frac{\Xi^2}{\Deth\sin^2\varth} \le(4m\cos\varth + \le(4 + m^2\ri)\cos^2\varth\ri) \\
              & \quad + \frac{\Xi}{\Deth}X(1+X)4m\cos\varth + \frac{\Xi}{\Deth}(1 -\de)(1-X)\le((1+X +\de (1 - X)) m^2-4mX\cos\varth\ri).
  \end{align*}
  % with
  % \begin{align*}
  %   \widetilde{G}(\varth) & := \frac{\Xi}{\Deth}X(1+X)4m\cos\varth + \frac{\Xi}{\Deth}(1 -\de)(1-X)\le((1+X +\de (1 - X)) m^2-4mX\cos\varth\ri) .
  % \end{align*}
  
  We have
  \begin{align}\label{est:X1prvarthS}
    \begin{aligned}
      \int_{0}^\pi\De_\varth|\pr_\varth S|^2\sin\varth & = \int_{0}^\pi \le(\Xi|\pr_\varth S|^2+X^2\sin^2\varth|\pr_\varth S|^2\ri)\sin\varth\d\varth\\
      & =  \Xi \int_{0}^\pi|\pr_\varth S|^2\sin\varth\d\varth \\
      & \quad + X^2\int_0^\pi \le(\le|\pr_\varth \le(\sin\varth S\ri)\ri|^2 -2\sin\varth\cos\varth S\pr_\varth S -\cos^2\varth|S|^2\ri) \sin\varth\d\varth \\
      & \geq \Xi \int_{0}^\pi|\pr_\varth S|^2\sin\varth\d\varth +  X^2\int_{0}^\pi\cos(2\varth)|S|^2\sin\varth\d\varth.
    \end{aligned}
  \end{align}
  Combining~\eqref{eq:lavar} and~\eqref{est:X1prvarthS} we obtain~\eqref{est:minGtilde} and this finishes the proof of the lemma.
\end{proof}
\begin{proof}[Proof of Lemma~\ref{lem:lambda}]
  In the $m=0$ case, the proof of~\eqref{est:lam0} follows directly from~\eqref{est:minGtilde}. We turn to the proof of Item~\ref{item:lambdaHR} and we assume that the Hawking-Reall bound~\eqref{est:HRboundoriginal} holds. This bound together with the definition of the admissible Kerr-adS black hole parameter (see Definition~\ref{def:admissibleKadSparameters}), rewrites as
  \begin{align}\label{est:assumHRbound}
    0 \leq \de \leq 1.
  \end{align}
  Note also that that same definition imposes
  \begin{align}\label{est:Xnormal}
    0 \leq X < 1.
  \end{align}
  For $m=1$ (the $m=-1$ case is obtained along the same lines), we have
  \begin{align}\label{est:Fvarthm1}
    \begin{aligned}
      F(\varth) & = \frac{\Xi^2}{\Deth\sin^2\varth} \le(2\cos\varth + 1 + \frac{X\sin^2\varth}{1-X}\ri)^2 - (1+X)^2 \\
      & \quad + \frac{\Xi}{\Deth}(1 -\de)(1-X)\le((1+X +\de (1 - X))-4X\cos\varth\ri) \\
      & \geq - (1+X)^2 + \frac{\Xi}{\Deth}(1 -\de)(1-X)\le((1+X +\de (1 - X))-4X\cos\varth\ri)\\
      & \geq -(1+X)^2 -2X(1-X),
    \end{aligned}
  \end{align}
  where in the last line we used~\eqref{est:assumHRbound} and~\eqref{est:Xnormal}. For $m\geq 2$ (the $m\leq -2$ case is obtained along the same lines), we have
  \begin{align}\label{est:Fvarthmgeneral}
    \begin{aligned}
      F(\varth) & = \frac{\Xi^2}{\Deth\sin^2\varth} \le(m\cos\varth + 2 + \frac{2X\sin^2\varth}{1-X}\ri)^2 - 4(1+X)^2 \\
      & \quad + \frac{\Xi}{\Deth}(1 -\de)(1-X)\le((1+X +\de (1 - X))m^2-4mX\cos\varth\ri)\\
      & \geq -4(1+X)^2 + \frac{\Xi}{\Deth}(1 -\de)(1-X)\le(4(1+X +\de (1 - X))-8X\cos\varth\ri) \\
      & \geq -4(1+X)^2 +4\Xi(1 -\de^2)(1-X)^2,
    \end{aligned}
  \end{align}
  where in the two last lines we used~\eqref{est:assumHRbound} and~\eqref{est:Xnormal}. Combining~\eqref{est:Fvarthm1} and~\eqref{est:Fvarthmgeneral} with~\eqref{est:minGtilde} this yields~\eqref{est:lamgeneralm1} and~\eqref{est:lamgeneral} respectively and finishes the proof of Item~\ref{item:lambdaHR}.\\
  
  We now turn to the proof of Item~\ref{item:lambdaHRplus} and assume that~\eqref{est:assumboundX} holds. The definition of admissible Kerr-adS black hole parameters allows us to rewrites this as
  \begin{align}\label{est:assumboundXbis}
    0 & \leq X \leq \frac{1}{20}.
  \end{align}
  Define
  \begin{align*}
    \widetilde{F}(\varth) & := \frac{\Xi}{\Deth}X(1+X)4m\cos\varth + \frac{\Xi}{\Deth}(1 -\de)(1-X)\le((1+X +\de (1 - X)) m^2-4mX\cos\varth\ri).
  \end{align*}
  Let us consider $m=1$ (the $m=-1$ case is obtained along the same lines). Using~\eqref{est:assumHRbound} and~\eqref{est:Xnormal}, we have
  \begin{align}\label{est:Gtilde1}
    \widetilde{F}(\varth) & \geq -4X(1+X) - 2X(1-X).
  \end{align}
  We have
  \begin{align*}
    & \Xi \int_{0}^\pi|\pr_\varth S|^2\sin\varth\d\varth + \int_0^\pi \frac{\Xi^2}{\Deth\sin^2\varth}\le(4m\cos\varth+(4+m^2)\cos^2\varth\ri)|S|^2\sin\varth\d\varth \\
    & \geq \Xi^2\le(\int_{0}^\pi|\pr_\varth S|^2\sin\varth\d\varth + \int_0^\pi \frac{1}{\sin^2\varth}\le(4m\cos\varth+(4+m^2)\cos^2\varth\ri)|S|^2\sin\varth\d\varth\ri) \\
    & = \Xi^2\le(1 + \int_{0}^\pi\le(\sin\varth\pr_{\varth}(\sin^{-2}\varth S)-2\sin^{-1}\varth (\sin^{-2}\varth S)\ri)^2\sin^{3}\varth\,\d\varth\ri) \\
    & \geq \Xi^2,
  \end{align*}
  where the third line is obtained by integration by parts, using the limits at $\varth\to 0,\pi$ given by~\eqref{est:limitsStheta}. Plugging the above estimate and~\eqref{est:Gtilde1} into~\eqref{est:minGtilde}, we obtain that
  \begin{align*}
    \lat & \geq (1-X^2)^2-4X(1+X)-2X(1-X).
  \end{align*}
  This polynomial is positive under the bound~\eqref{est:assumboundXbis}, and estimate~\eqref{est:lampos} follows.\\
  
  Let us consider $m \geq 2$. Using~\eqref{est:assumHRbound} and~\eqref{est:Xnormal}, we have
  \begin{align}
    \label{est:Gtildem2}
    \begin{aligned}
      \widetilde{F}(\varth) & \geq -4X(1+X)m + (1-\de)(1-X)\le((1+X+\de(1-X))m^2-4mX\ri).
    \end{aligned}
  \end{align}
  We have
  \begin{align*}
    & \Xi \int_{0}^\pi|\pr_\varth S|^2\sin\varth\d\varth + \int_0^\pi \frac{\Xi^2}{\Deth\sin^2\varth}\le(4m\cos\varth+(4+m^2)\cos^2\varth\ri)|S|^2\sin\varth\d\varth \\
    & \geq \Xi^2\le(\int_{0}^\pi|\pr_\varth S|^2\sin\varth\d\varth + \int_0^\pi \frac{1}{\sin^2\varth}\le(4m\cos\varth+(4+m^2)\cos^2\varth\ri)|S|^2\sin\varth\d\varth\ri) \\
    & = \Xi^2\le(m-4 + \int_{0}^\pi\le(\sin\varth\pr_{\varth}(\sin^{-m}\varth S)-2\sin^{-1}\varth (\sin^{-m}\varth S)\ri)^2\sin^{2m}\varth\,\d\varth\ri) \\
    & \geq \Xi^2(m-4),
  \end{align*}
  where the third line is obtained by integration by parts, using the limits at $\varth\to 0,\pi$ given by~\eqref{est:limitsStheta}. Plugging the above estimate and~\eqref{est:Gtildem2} into~\eqref{est:minGtilde}, we obtain
  \begin{align}\label{eq:latmm}
    \begin{aligned}
      \lat & \geq (1-X^2)^2(m-4) -4X(1+X)m + (1-\de)(1-X)\le((1+X+\de(1-X))m^2-4mX\ri) \\
      & = \le((1-\de)(1-X)(1+X+\de(1-X))\ri)m^2 + \le((1-X^2)^2 -4X(1+X) - 4X(1-\de)(1-X)\ri)m \\
      & \quad -4(1-X^2)^2.
    \end{aligned}
  \end{align}
  Under the bounds~\eqref{est:assumHRbound} and~\eqref{est:assumboundXbis}, we have
  \begin{align}\label{est:coeffmangular}
    \begin{aligned}
      (1-\de)(1-X)(1+X+\de(1-X)) & \geq 0, & (1-X^2)^2 -4X(1+X) - 4X(1-\de)(1-X) & \geq 0.
    \end{aligned}
  \end{align}
  With the bounds~\eqref{est:coeffmangular}, the right-hand side of~\eqref{eq:latmm} is a polynomial of order $2$ (or $1$) increasing in $m\geq2$. We thus have
  \begin{align}\label{est:latmm2}
    \begin{aligned}
      \lat & \geq 4\le((1-\de)(1-X)(1+X+\de(1-X))\ri) + 2\le((1-X^2)^2 -4X(1+X) - 4X(1-\de)(1-X)\ri) \\
      & \quad -4(1-X^2)^2 \\
      & = 4(1-X)^2(1-\de^2)-2(1-X^2)^2 -8X(1+X). 
    \end{aligned}
  \end{align}
  Under the bound~\eqref{est:assumboundXbis}, we have
  \begin{align}\label{est:coeffmangular2}
    \begin{aligned}
      4(1-X)^2 & \geq 5/2, &  -2(1-X^2)^2 -8X(1+X) & \geq - 5/2.
    \end{aligned}
  \end{align}
  Using~\eqref{est:coeffmangular2} in~\eqref{est:latmm2}, we deduce
  \begin{align*}
    \lat & \geq \frac{5}{2} \le( (1-\de^2) - 1 \ri) \geq -\frac{5\de}{2},
  \end{align*}
  which concludes the proof of~\eqref{est:lam234} and of the lemma.
\end{proof}

\begin{remark}\label{rem:angularestimatessharpde2}
  Using the notations of the proof of Lemmas~\ref{lem:lambda} and~\ref{lem:prelimlambda}, we have by the min-max principle that
  \begin{align*}
    \lat_{mm}^{m\om_+} & = \min_{\norm{S}=1} \le(\int_0^\pi \le(\Deth (\pr_\varth S)^2 + G(\varth) S^2\ri)\sin\varth\,\d\varth \ri) \\
                       & \leq \min_{\norm{S}=1} \le(\int_0^\pi \le( (\pr_\varth S)^2 + \frac{(4m\cos\varth+(4+m^2)\cos^2\varth)}{\sin^2\varth} S^2\ri)\sin\varth\,\d\varth \ri) \\
                       & \quad + \sup_{\norm{S}=1}\le(\int_0^\pi (\widetilde{F}(\varth)-a^2k^2\cos(2\varth))S^2\,\sin\varth\d\varth\ri) \\
    & \leq (|m|-4) + \sup_{\varth=0\cdots\pi} (\widetilde{F}(\varth)-a^2k^2\cos(2\varth)),
  \end{align*}
  where in the last line we used a comparison with spin-weighted harmonics in the $a=0$ case. If the Hawking-Reall bound is violated, \emph{i.e.} $\de>1$, we have
  \begin{align*}
    & \; \sup_{\varth=0\cdots\pi} (\widetilde{F}(\varth)-a^2k^2\cos(2\varth)) \\
    = & \; \sup_{\varth=0\cdots\pi} \le(\frac{\Xi}{\Deth}X(1+X)4m\cos\varth + \frac{\Xi}{\Deth}(1 -\de)(1-X)\le((1+X +\de (1 - X)) m^2-4mX\cos\varth\ri)-X^2\cos(2\varth)\ri) \\
    \leq & \; 4|m| X(1+X) + 4|m|X(1-X)(\de-1) + (1-\de)\Xi(1-X)m^2 + X^2.
  \end{align*}
  Thus, using that $\de>1$, we infer
  \begin{align}\label{est:limitlambdainfinity}
    \begin{aligned}
    \lat_{mm}^{m\om_+} & \leq (|m|-4) + 4|m| X(1+X) + 4|m|X(1-X)(\de-1) + (1-\de)\Xi(1-X)m^2 + X^2 \\
    & \xrightarrow{|m|\to+\infty} - \infty.
    \end{aligned}
  \end{align}
  Since $\lat_{mm}^{m\om_+}$ is the limit of the stationary potential $V_{\mathrm{stat}}^m[\lat_{mm}^{m\om_+}]$ at infinity (see Remark~\ref{rem:angularestimatessharpde1}), we deduce that -- if the Hawking-Reall bound is violated -- the (limit of the) potential can be made arbitrarily negative provided that $|m|$ is sufficiently large. See also Remark~\ref{rem:sharpHR}. We emphasise that~\eqref{est:limitlambdainfinity} holds \emph{for all admissible parameters $0 \leq X <1$.} 
\end{remark}

\subsection{Proof of Proposition~\ref{prop:statm0}}\label{sec:proofpropstatm0}
From the energy identity of Lemma~\ref{lem:energy} and the estimates on the angular eigenvalues~\eqref{est:lam0}, the proof of Proposition~\ref{prop:statm0} follows provided that we can show that for all admissible Kerr-adS black hole parameters $(M,a,k)$ we have
\begin{align}\label{eq:positivityV0}
  V_{\mathrm{stat}}^{m=0}[\lat=0](r) > 0, && \text{for all $r > r_+$.}
\end{align}

Under the rescaling
\begin{align}\label{eq:rescaling}
  r & \to M r, & a & \to Ma, & k & \to k/M, 
\end{align}
we have
\begin{align*}
  V_{\mathrm{stat}}^{m=0}[\lat=0](r) & \to \frac{1}{M^2} \le(V_{\mathrm{stat}}^{m=0}[\lat=0](r)\ri)\bigg|_{M=1}
\end{align*}
and the proof of~\eqref{eq:positivityV0} reduces to the $M=1$ case.\footnote{More generally, using the change of coordinates $t\to M t$, $r\to Mr$, we have that $g_{\mathrm{KadS}}\big|_{(M,a,k)} = M^2 g_{\mathrm{KadS}}\big|_{(1,a/M,Mk)}$. Thus, we can always restrict to study Kerr-adS spacetimes with $M=1$.} Let us consider the $a=0$ case. For all $r\geq r_+$ we have
\begin{align*}
  (r^2+a^2)^4 V_{\mathrm{stat}}^{m=0}[\lat=0](r) & = r^5\le(-6 + 4r+18k^2r^2\ri) \\
                                                 & \geq r_+^5\le(-6 + 4r_++18k^2r_+^2\ri) \\
                                                 & \geq r_+^5\le(-6 + 4r_+ + 18\le(\frac{2}{r_+}-1\ri)\ri)\\
                                                 & > 0,
\end{align*}
where in the last inequality we used that $0 < r_+ < 2$.\\

Let us now consider the case $a>0$. Let us define
\begin{align*}
  x & := r/a, & x_+ & := r_+/a, & X & := ak.
\end{align*}
From $\De(r_+)=0$ we deduce
\begin{align}\label{eq:defKK}
  k & = \KK(x_+,X) := \frac{X(1+x_+^2)(1+x_+^2X^2)}{2x_+}, & a & = X/k. 
\end{align}
We can therefore swap the two black hole parameters $(a,k)$ for the parameters $(x_+,X)$. To determine the domain of the parameters $(x_+,X)$, let define 
\begin{align*}
  \KK'(x_+,X) & := -1+3x_+^4X^2+x_+^2(1+X^2).
\end{align*}
We have the following lemma.
\begin{lemma}\label{lem:globdiffeoparameters}
  Denote by $\mathcal{A}$ the set of admissible subextremal Kerr-adS black hole parameters $(M,a,k)$ of Definition~\ref{def:admissibleKadSparameters}. The map
  \begin{align*}
    \theta:(x_+,X) \mapsto \le(a=X/\KK(x_+,X),k=\KK(x_+,X)\ri)
  \end{align*}
  is a bijection from $\BB:=\{\KK'\geq 0\}\cap\{0 < X < 1\}$ onto the set of admissible Kerr-adS black hole parameters $\AA_1:= \{(a,k):(1,a,k) \in\AA,~a\neq 0\}$. We call $\BB$ the set of \emph{admissible parameters $(x_+,X)$}. See Figure~\ref{fig:arplus} for a graphical representation of $\BB$. 
\end{lemma}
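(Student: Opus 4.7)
The plan is to exhibit the inverse of $\theta$ explicitly and then reduce the bijectivity to showing that the condition $\KK'(x_+,X) \geq 0$ matches the subextremality condition defining $\AA_1$. The natural candidate inverse sends $(a,k) \in \AA_1$ to $(x_+,X) := (r_+/a,\,ak)$, and injectivity of $\theta$ then becomes tautological, so the real content is the well-definedness of both $\theta$ and its inverse.

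The key step is a direct algebraic computation. Plugging $r_+ = a x_+$ into $\De(r_+) = (r_+^2+a^2)(1+k^2 r_+^2) - 2 r_+$ (with $M=1$) and using $k^2 r_+^2 = X^2 x_+^2$ together with $r_+^2+a^2 = a^2(1+x_+^2)$, the equation $\De(r_+) = 0$ becomes
\begin{align*}
  a(1+x_+^2)(1+X^2 x_+^2) = 2 x_+,
\end{align*}
which is precisely the statement $k = X/a = \KK(x_+, X)$. An analogous substitution in $\De'(r_+) = 2 r_+(1+k^2 r_+^2) + 2 k^2 r_+(r_+^2+a^2) - 2$, combined with the above expression for $a$, yields the clean identity
\begin{align*}
  \De'(r_+) = \frac{2\,\KK'(x_+,X)}{(1+x_+^2)(1+X^2 x_+^2)},
\end{align*}
so that $\De'(r_+) > 0$ exactly when $\KK'(x_+,X) > 0$.

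With these two identities in hand the bijection follows. Given $(x_+,X) \in \BB$ with $\KK' > 0$, $r_+ := a x_+$ is a simple positive root of $\De$; since $\De$ is a quartic in $r$ with positive leading coefficient $k^2$, with $\De(0) = a^2 > 0$ and $\De'(0) = -2 < 0$, a sign-change count forces exactly two distinct positive real roots $r_- < r_+$ with $\De > 0$ on $(r_+, +\infty)$, so $r_+$ is indeed the black hole radius and $(a,k) = \theta(x_+,X)$ lies in $\AA_1$. Surjectivity is the reverse implication: given $(a,k) \in \AA_1$, set $X := ak \in (0,1)$ and $x_+ := r_+/a$; then $\De(r_+) = 0$ reads as $k = \KK(x_+, X)$ and $\De'(r_+) > 0$ reads as $\KK'(x_+, X) > 0$. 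Injectivity is automatic from the formula $(x_+, X) = (r_+/a, ak)$. The main obstacle will be the purely algebraic verification of the identity for $\De'(r_+)$ in terms of $\KK'$, which is elementary but requires careful bookkeeping; a minor subtlety is that the boundary $\KK'=0$ corresponds to the extremal limit, which is not strictly in $\AA_1$ and should either be excluded from $\BB$ or understood as a well-defined limiting case.
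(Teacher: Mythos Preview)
Your proof is correct and takes a somewhat different route from the paper's. The paper does not compute $\De'(r_+)$ directly; instead it observes the identity
\[
  \pr_{x_+}\KK(x_+,X) = \frac{X}{2x_+^2}\,\KK'(x_+,X),
\]
so that for each fixed $X$ the map $x_+\mapsto\KK(x_+,X)$ is strictly decreasing on $(0,x_c(X))$ and strictly increasing on $(x_c(X),+\infty)$ (where $x_c$ is the unique zero of $\KK'$), and hence restricts to a bijection from $[x_c(X),+\infty)=\{\KK'\ge 0\}$ onto the full range $\KK\big((0,+\infty),X\big)$. Since the positive roots of $\De$ are exactly the solutions of $\KK(\,\cdot\,,X)=k$, the larger root is automatically the one in the increasing branch.

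Your approach, via the explicit identity $\De'(r_+)=\dfrac{2\,\KK'(x_+,X)}{(1+x_+^2)(1+X^2x_+^2)}$, is equivalent (it follows from the paper's identity by differentiating the relation $\De(ax)=\tfrac{2Xx}{k^2}\big(\KK(x,X)-k\big)$), but it buys a more transparent geometric interpretation: $\KK'>0$ literally \emph{is} nondegeneracy of the outer horizon. The Descartes/sign-change argument you use to verify that $r_+$ is the largest of exactly two positive roots is a clean replacement for the paper's monotonicity analysis. Your remark about the boundary $\KK'=0$ being the extremal limit is apt and worth keeping; the paper's proof also glosses over this point.
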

\begin{proof}
  By the definition of admissible parameters and the above definitions of $x_+$ and $X$, we easily have
  \begin{align*}
    \AA_1 = \theta\le(\{0 < X <1\}\cap \{x_+>0\} \ri).
  \end{align*}
  Let $0<X<1$ be fixed. A direct computation gives
  \begin{align}\label{eq:derivKK}
    \pr_{x_+}\KK(x_+,X) & = \frac{X}{2x_+^2}\KK'(x_+,X).
  \end{align}
  The map $x_+\mapsto \KK'(x_+,X)$ is strictly increasing and $\KK'(0,X)=-1<0$ thus there exists a unique $x_c(X)>0$ such that $\KK'(x_+,X) > 0$ for all $x_+>x_c(X)$ and $\KK'(x_+,X)< 0$ for all $x_+< x_c(X)$. Therefore, the map $\phi:x_+\mapsto \KK(x_+,X)$ strictly decreases on $(0,x_c(X))$ and strictly increases on $(x_c(X),+\infty)$. Since $\phi(x_+) \to +\infty$, we deduce that $\phi$ is a bijection from $[x_c(X),+\infty)$ onto $\phi\le((0,+\infty)\ri)$ and this finishes the proof of the lemma. 

  % First note that
  % \begin{align*}
  %   \{\KK\geq 0\}\cup\{(0,0)\} & = \le\{(a,r_+): |a| \leq 1,~1-\sqrt{1-a^2}\leq r_+ \leq 1+\sqrt{1-a^2}\ri\}. 
  % \end{align*}
  
  \begin{figure}[h!]
    \centering
    \includegraphics[width=0.4\textwidth,height=0.4\textwidth]{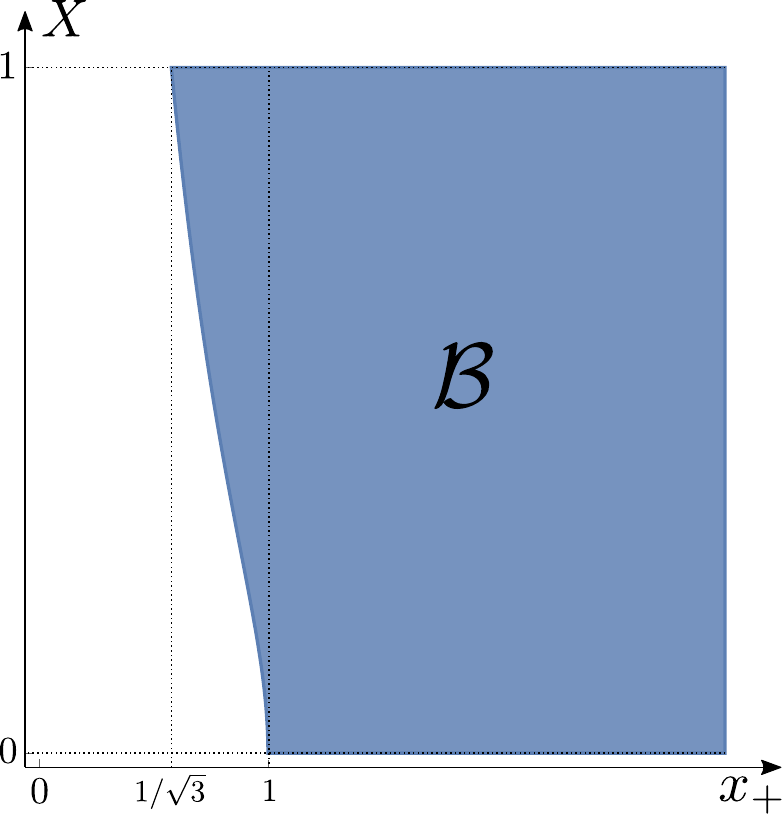}
    \caption{The set of admissible parameters $(x_+,X)$.}
    \label{fig:arplus}
  \end{figure}
\end{proof}

Using the new variables $X=ak,x=r/a$, we rewrite
\begin{align}\label{eq:Vtilde0}
  \begin{aligned}
    a^{-4}k^{2}(r^2+a^2)^4V_{\mathrm{stat}}^{m=0}[\lat=0](r) & = 4 k^2 (1 + 4 x^2) + (1 + x^2)^2 X^2 (1 - X^2) \le(1+x^2 (4-3X^2)\ri) \\
    & \quad + 2 k x (1 + x^2) X \le(9 x^4 X^2 - 7-3X^2 + x^2 (-3 + 2 X^2)\ri) \\
    & =: \widetilde{V}_0,
  \end{aligned}
\end{align}
which is a polynomial of degree $7$ in $x$. The strict positivity of $\widetilde{V}_0(x)$ for $x>x_+$ will follow if
\begin{align}
  \label{eq:derivV0}
  \pr_x^{j}\widetilde{V}_{0}(k=\KK(x_+,X),x=x_+) \geq 0, && \text{for all admissible parameters $(x_+,X)$,}
\end{align}
for all $0\leq j \leq 7$ and if at least one of these quantities is strictly positive. It is clear from the expression~\eqref{eq:Vtilde0} of $\widetilde{V}_0$ that $\pr^{6}_x\widetilde{V}_0,\pr^7_x\widetilde{V}_0$ are strictly positive if $0<X<1$. Using~\eqref{eq:defKK}, a computation for the lower derivatives gives
\begin{align*}
  \widetilde{V}_0(x=x_+) & = x_+^{-2}X^2(1+x_+^2)^2\le(-1 + 3 x_+^4 X^2 + x_+^2 (1 + X^2)\ri)^2, \\ \\
  \pr_x\widetilde{V}_0(x=x_+) & = x_+^{-1}(1+x_+^2)X^2\big(-1 + 3 x_+^4 X^2 + x_+^2 (1 + X^2)\big) \big(-1 + 9 x_+^2 + 3 X^2 + 14 x_+^2 X^2 + 21 x_+^4 X^2\big),\\ \\
  \pr_x^2\widetilde{V}_0(x=x_+) & = 2x_+^{-2}X^2\big(4 - 16 x_+^2 - 2 x_+^4 + 30 x_+^6 - 4 x_+^2 X^2 + 3 x_+^4 X^2 + 142 x_+^6 X^2 \\
                         & \quad +  159 x_+^8 X^2 + 3 x_+^2 X^4 + 37 x_+^4 X^4 + 160 x_+^6 X^4 + 303 x_+^8 X^4 +   189 x_+^{10} X^4\big),\\ \\
  \pr_x^3\widetilde{V}_0(x=x_+) & = 6x_+^{-1}X^2\big(-10 - 4 x_+^2 + 50 x_+^4 - X^2 + 39 x_+^2 X^2 + 245 x_+^4 X^2 +  285 x_+^6 X^2 \\
                         & \quad + 23 x_+^2 X^4 + 169 x_+^4 X^4 + 425 x_+^6 X^4 + 315 x_+^8 X^4\big), \\ \\
  \pr_x^4\widetilde{V}_0(x=x_+) & = 24 X^2  \big(-6 + 45 x_+^2 + 40 X^2 + 250 x_+^2 X^2 + 300 x_+^4 X^2 + 6 X^4 + 100 x_+^2 X^4 + 370 x_+^4 X^4 + 315 x_+^6 X^4\big),\\ \\
    \pr_x^5\widetilde{V}_0(x=x_+) & = 120 x_+^{-1} X^2 \big(-3 + 21 x_+^2 + 11 X^2 + 155 x_+^2 X^2 + 186 x_+^4 X^2 + 29 x_+^2 X^4 + 200 x_+^4 X^4 + 189 x_+^6 X^4\big).
\end{align*}

It is immediate that $\widetilde{V}_0(x=x_+)$ is non-negative. Using that $x_+\geq 1/\sqrt{3}$, we easily obtain that $\pr_x\widetilde{V}_0(x=x_+),\pr_x^4\widetilde{V}_0(x=x_+),\pr_x^5\widetilde{V}_0(x=x_+)$ are non-negative for all admissible parameters $(x_+,X)$. Let us define
\begin{align*}
  P_2(x_+,X) & := 4 - 16 x_+^2 - 2 x_+^4 + 30 x_+^6 - 4 x_+^2 X^2 + 3 x_+^4 X^2 + 142 x_+^6 X^2 \\
             & \quad +  159 x_+^8 X^2 + 3 x_+^2 X^4 + 37 x_+^4 X^4 + 160 x_+^6 X^4 + 303 x_+^8 X^4 +   189 x_+^{10} X^4.
\end{align*}
For all admissible parameters $(x_+,X)$ we have the following sequence of inequalities
\begin{align*}
  P_2(x_+,X) & \geq 4 - 16 x_+^2 - 2 x_+^4 + 30 x_+^6 + \big(- 4 x_+^2 + 3 x_+^4 + 142 x_+^6 \big)X^2 \\
             & \geq 4 - 16 x_+^2 - 2 x_+^4 + 30 x_+^6 + \frac{115}{27}X^2 \\
             & \geq 4 - 16 x_+^2 - 2 x_+^4 + 30 x_+^6 + \frac{115}{27} \le(\frac{1-x_+^2}{x_+^2+3x_+^4}\ri) \\
             & \geq 0.
\end{align*}
Thus, $\pr_x^2\widetilde{V}_0(x=x_+) = 2x_+^{-2}X^2P_2(x_+,X)$ is non-negative. Arguing along the same lines we obtain that $\pr_x^3\widetilde{V}_0(x=x_+)$ is non-negative. This finishes the proof of~\eqref{eq:positivityV0} and of Proposition~\ref{prop:statm0}.

\subsection{Proof of Proposition~\ref{prop:stat}}\label{sec:proofpropstat}
This section is dedicated to the proof of Proposition~\ref{prop:stat}. We first treat in Sections~\ref{sec:proofpropstat1} and~\ref{sec:proofpropstat2} two special cases for which we have positivity of the angular eigenvalue (see Section~\ref{sec:angularestimates}). For these cases it suffices to use the positivity of the potential obtained in Section~\ref{sec:proofpropstatm0}. In Section~\ref{sec:proofpropstat3} we treat the remaining cases where no positive bounds for the angular eigenvalue is available. To conclude, we use the Hardy estimate and Hardy potential of Section~\ref{sec:hardy} and follow similar lines as in the proof of~\eqref{eq:positivityV0}.  

\subsubsection{The~\eqref{est:assumboundX}, $|m|=1$ case}\label{sec:proofpropstat1}
Let us first treat the~\eqref{est:assumboundX}, $|m|=1$ case. Using the positivity of the potential~\eqref{eq:positivityV0} obtained in the previous section, we have
\begin{align*}
  V^{m=\pm1}_{\mathrm{stat}}[\lat=0](r) & = \le(\frac{\Xi\om_+}{k}\ri)^2\frac{\De-k^2(r-r_+)^2(r+r_+)^2}{(r^2+a^2)^2} + V^{m=0}_{\mathrm{stat}}[\lat=0](r) \geq V^{m=0}_{\mathrm{stat}}[\lat=0](r) > 0.
\end{align*}
Plugging this estimate and the positivity of the angular eigenvalue~\eqref{est:lampos} into the energy identity~\eqref{est:energy}, we deduce Proposition~\ref{prop:stat} in the~\eqref{est:assumboundX} case in for $|m|=1$.

\subsubsection{The $a=0$ case}\label{sec:proofpropstat2}
From Lemma~\ref{lem:angulardecompo}, and the fact that $\om_+=a/(r_+^2+a^2) = 0$, we have
\begin{align*}
  \lat_{m\ell}^0 & = \la^{0}_{m\ell}-2 = \ell(\ell+1) -4 \geq 0. 
\end{align*}
since $\ell\geq 2$. Moreover, using that $\om_+=0$ and the positivity~\eqref{eq:positivityV0}, we have
\begin{align*}
  V^{m}_{\mathrm{stat}}[\lat=0](r) & = V^{m=0}_{\mathrm{stat}}[\lat=0](r) \geq V^{m=0}_{\mathrm{stat}}[\lat=0](r) > 0.
\end{align*}
As in the previous section, this finishes the proof of Proposition~\ref{prop:stat} if $a=0$. 

\subsubsection{The remaining cases}\label{sec:proofpropstat3}
From the energy identity of Lemma~\ref{lem:energy} -- using that for $|m|\geq 2$, $V^{m}_{\mathrm{stat}} \geq V^{m=2}_{\mathrm{stat}}$ --, the Hardy estimate of Lemma~\ref{lem:energyandHardy} and the estimates on the angular eigenvalues of Lemma~\ref{lem:lambda}, the proof of Proposition~\ref{prop:stat} in the remaining cases follows from the following lemma.
\begin{lemma}
  Let $a>0$. If the Hawking-Reall bound~\eqref{est:HRboundoriginal} is satisfied, we have 
  \begin{align}\label{est:Vtilde2pos}
    \begin{aligned}
      V_{-5\de/2} & := V^{m=2}_{\mathrm{stat}}[\lat=-5\de/2](r) + V_{\mathrm{Hardy}}[\lat=-5\de/2](r) >0,
    \end{aligned}
  \end{align}
  for all $r>r_+$. Moreover, if the bound~\eqref{est:farHR} is satisfied, we have 
  \begin{align}\label{est:Vposfarm1}
    \begin{aligned}
      V_{\mathrm{far},1} & := V^{m=1}_{\mathrm{stat}}[\lat=-(1+X)^2 -2X(1-X)](r) \\
      & \quad + V_{\mathrm{Hardy}}[\lat=-(1+X)^2 -2X(1-X)](r)\\
      & >0,
    \end{aligned}
  \end{align}
  and
  \begin{align}\label{est:Vposfarm2}
    \begin{aligned}
      V_{\mathrm{far},2} & := V^{m=2}_{\mathrm{stat}}[\lat=-4(1+X)^2 + 4\Xi(1-\de^2)(1-X)^2](r) \\
      & \quad + V_{\mathrm{Hardy}}[\lat=-4(1+X)^2 + 4\Xi(1-\de^2)(1-X)^2](r)\\
      & >0,
    \end{aligned}
  \end{align}
  for all $r>r_+$. 
\end{lemma}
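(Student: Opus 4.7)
My plan is to imitate the strategy used for the positivity~\eqref{eq:positivityV0} in the $m=0$ case in Section~\ref{sec:proofpropstatm0}, now applied separately to each of the three potentials $V_{-5\de/2}$, $V_{\mathrm{far},1}$, $V_{\mathrm{far},2}$. First, I use the rescaling~\eqref{eq:rescaling} to reduce to $M=1$. Next, exploiting the bijective change of parameters $(a,k)\mapsto(x_+,X)$ given by Lemma~\ref{lem:globdiffeoparameters} and the relation~\eqref{eq:defKK}, I substitute $k=\KK(x_+,X)$ and rewrite everything in terms of the new radial variable $x:=r/a$. In these variables, multiplying each potential by the positive factor $a^{-4}k^2(r^2+a^2)^4$ turns it into a polynomial $\widetilde{V}_\star(x;x_+,X,\de)$ in $x$ whose coefficients are polynomials in $(x_+,X,\de)$ (the $\de$-dependence enters only through $\om_+=\de X/(a(1+X))$, which appears only via $(\Xi\om_+ m/k)^2$ and through $\lat$).

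The key observation is that $r^\ast=r$-derivatives of the Teukolsky potential at the horizon $x=x_+$ form a finite Taylor expansion, so strict positivity of $\widetilde{V}_\star$ for $x>x_+$ follows if I can show that $\pr_x^j\widetilde{V}_\star(x=x_+;x_+,X,\de)\geq 0$ for $0\leq j\leq \deg\widetilde{V}_\star$, with strict positivity for at least one $j$. I carry this out as follows: (i) Using~\eqref{eq:defKK} to eliminate $k$, each quantity $\pr_x^j\widetilde{V}_\star(x=x_+)$ becomes a polynomial in $(x_+,X,\de)$. (ii) The Hardy contribution $V_{\mathrm{Hardy}}[\lat]=\lat\frac{\De}{r^2+a^2}\pr_r\bigl(\frac{r-r_+}{r^2+a^2}\bigr)-\lat^2\bigl(\frac{r-r_+}{r^2+a^2}\bigr)^2$ vanishes to order~$1$ in $(r-r_+)$ at the horizon, so the very lowest-order derivatives at $r=r_+$ are dominated by the $V^m_{\mathrm{stat}}$ contribution, which can be handled by the same factorisation arguments as for $\widetilde{V}_0(x=x_+)$. (iii) The parameter constraints entering the three cases — namely $0\leq\de\leq 1$ together with either $X\leq 1/20$ (for $V_{-5\de/2}$) or $\de\leq 1/4$ (for $V_{\mathrm{far},1},V_{\mathrm{far},2}$), and always $\KK'(x_+,X)\geq 0$, $0<X<1$ — are then used to sign each coefficient-polynomial.

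For each of the three potentials I would first check the top-degree coefficient: in the large-$x$ limit $\widetilde{V}_\star/x^{\deg\widetilde{V}_\star}$ tends (after dividing off the factor $a^{-4}k^2(r^2+a^2)^4$) to $k^4\lat_\infty$ where $\lat_\infty$ is the angular-eigenvalue bound plugged in — cf.~\eqref{eq:limitVstat}. The choices $\lat=-5\de/2$, $\lat=-(1+X)^2-2X(1-X)$ and $\lat=-4(1+X)^2+4\Xi(1-\de^2)(1-X)^2$ are precisely the lower bounds from Lemma~\ref{lem:lambda}, and under the additional parameter restrictions I have the numerical improvements $-5\de/2\geq -5/40$ (which combined with the other positive Teukolsky contributions easily controls the leading coefficient), and for the $\de\leq 1/4$ cases, $-(1+X)^2-2X(1-X)>-2$ and $-4(1+X)^2+4\Xi(1-\de^2)(1-X)^2$ being close to zero (thanks to $\de^2\leq 1/16$). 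The intermediate derivatives at $x=x_+$ then admit clean factorisations involving the admissibility polynomial $\KK'(x_+,X)=-1+x_+^2(1+X^2)+3x_+^4X^2$ (which is the quantity whose sign governs subextremality), in complete analogy with $\widetilde{V}_0(x=x_+)=x_+^{-2}X^2(1+x_+^2)^2(\KK')^2$.

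The main obstacle is purely algebraic: the polynomials have high degree (up to $7$ in $x$, with coefficients of degree up to $\sim 10$ in $(x_+,X)$ and quadratic in $\de$), and to conclude positivity under the given parameter restrictions one must organise the coefficients carefully, grouping positive monomials against negative ones and using the admissibility constraint $\KK'(x_+,X)\geq 0$ (equivalently $x_+^2\geq \frac{1-X^2}{X^2+3x_+^2X^2}$ after rearrangement, just as in the $\pr_x^2\widetilde V_0$ estimate of Section~\ref{sec:proofpropstatm0}). I expect this to be feasible but lengthy and best verified symbolically, consistent with the Mathematica ancillary file mentioned in footnote~\ref{foo:Mathematica}. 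The Hawking-Reall case $X\leq 1/20$ is essentially an expansion argument (small~$X$), while the ``far-from-HR'' case $\de\leq 1/4$ gains its positivity from $1-\de^2\geq 15/16$ appearing in the angular eigenvalue lower bound.
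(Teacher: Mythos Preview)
Your overall strategy --- rescale to $M=1$, pass to the variables $(x,x_+,X)$, clear denominators to obtain a polynomial in $x$, and verify non-negativity of all Taylor coefficients at $x=x_+$ --- is exactly the paper's. Two points, however, need correction.

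First, $\de$ is \emph{not} a free parameter: from $a\om_+ = a^2/(r_+^2+a^2) = 1/(1+x_+^2)$ together with the definition $a\om_+ = \de X/(1+X)$ one has
\[
\de \;=\; \frac{1+X}{X(1+x_+^2)}.
\]
The paper substitutes this relation at the outset, reducing everything to two-variable polynomial positivity in $(x_+,X)$. Keeping $\de$ independent subject only to inequality constraints means attempting a strictly stronger (and possibly false) statement. In particular, note that the lemma asserts~\eqref{est:Vtilde2pos} under the Hawking--Reall bound alone, without the additional constraint $X\leq 1/20$ you invoke; this works precisely because $\de$ is determined by $(x_+,X)$.

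Second, your translation of~\eqref{est:farHR} is incorrect: $a\leq\tfrac14 kr_+^2$ reads $Xx_+^2\geq 4$, not $\de\leq 1/4$. This is exactly how the paper exploits the hypothesis: after eliminating $\de$, each coefficient $\pr_x^j\widetilde V_{\mathrm{far},2}(x=x_+)$ is reorganised as $\sum_i P_i(X)\,(Xx_+^2-4)^i$ with $P_i(X)>0$ for $0<X<1$, so non-negativity follows immediately from $Xx_+^2-4\geq 0$. This expansion in powers of $(Xx_+^2-4)$ is the key algebraic device for the far-from-Hawking--Reall cases, and replaces the use of $\KK'\geq 0$ that you anticipate. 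Finally, because of the Hardy contribution and the rational dependence of $\de$ on $(x_+,X)$, the normalising factor is not $a^{-4}k^2(r^2+a^2)^4$ but rather $\tfrac{X^2(1+x_+^2)^8(1+X^2x_+^2)^6}{16x_+^4}(r^2+a^2)^4$, yielding a degree-$8$ (not $7$) polynomial in $x$.
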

\begin{proof}
  The proofs of~\eqref{est:Vtilde2pos} and~\eqref{est:Vposfarm1} go along the same lines (and are easier) as the proof of~\eqref{est:Vposfarm2}, and we refer the reader to the Mathematica notebook for the specific computations. Using the rescaling~\eqref{eq:rescaling}, the proof of~\eqref{est:Vposfarm2} reduces to the $M=1$ case. We refer to Section~\ref{sec:proofpropstatm0} for the definitions of the parameters and variables used in this proof. From an inspection of the expression of $V_{\mathrm{far},2}$, the quantity
  \begin{align*}
    \widetilde{V}_{\mathrm{far},2} & := \le(\frac{X^2(1+x_+^2)^8(1+X^2x_+^2)^6}{16x_+^4}\ri)(r^2+a^2)^4V_{\mathrm{far},2}
  \end{align*}
  is a polynomial of order $8$ in $x=r/a$, and~\eqref{est:Vposfarm2} follows provided that for $0\leq i \leq 8$ we have $\pr_x^i\widetilde{V}_{\mathrm{far},2}(x=x_+) \geq 0$ (and that at least one is strictly positive). Using formula~\eqref{eq:defKK} for $k$ and that, by definition, we have $\de = \frac{1+X}{X(1+x_+^2)}$, a direct computation gives for the first coefficient 
  \begin{align*}
    \widetilde{V}_{\mathrm{far},2}(x=x_+) & = 4 X^2 (1+x_+^2)^4 \big(-1+(1+X^2) x_+^2+3 X^2 x_+^4\big)^2 \geq 0.
  \end{align*}
  A computation of the next coefficient gives
  \begin{align*}
    \pr_x\widetilde{V}_{\mathrm{far},2}(x=x_+) & = 4 x_+ (1 + x_+^2)^2X^{-1}\le(\sum_{i=0}^5 P_i(X)\le(X x_+^2-4\ri)^i\ri),
  \end{align*}
  where
  \begin{align*}
    P_0(X) & := 560 + 10036 X + 41932 X^2 + 13345 X^3 + 11380 X^4 - 455 X^5 - 2356 X^6 - 192 X^7,\\
    P_1(X) & := 428 + 10648 X + 57683 X^2 + 16376 X^3 + 12181 X^4 - 1472 X^5 - 2057 X^6 - 112 X^7,\\
    P_2(X) & := 108 + 4211 X + 31540 X^2 + 7323 X^3 + 4728 X^4 - 1042 X^5 - 592 X^6 - 16 X^7,\\
    P_3(X) & := 9 + 736 X + 8583 X^2 + 1424 X^3 + 790 X^4 - 272 X^5 - 56 X^6,\\
    P_4(X) & := 48 X + 1164 X^2 + 102 X^3 + 48 X^4 - 24 X^5,\\
    P_5(X) & := 63X^2.
  \end{align*}
  We easily check that $P_{i}(X) > 0$ for all $0<X<1$ and all $0\leq i \leq 5$. Thus, using that the bound~\eqref{est:farHR} rewrites $Xx_+^2-4\geq 0$, we have that $\pr_x\widetilde{V}_{\mathrm{far},2}(x=x_+)>0$. Along the same lines, we check that all the coefficients $\pr_x^i\widetilde{V}_{\mathrm{far},2}(x=x_+)$ are positive for $2\leq i \leq 8$ (the computations are displayed in the companion Mathematica file). This finishes the proof of the lemma.
\end{proof}

\appendix
\section{An alternative proof of the $\om-m\om_+\neq0$ case via Robin boundary conditions}\label{sec:Robin}
The following proposition shows that if a mode solution is regular at the horizon in the sense of~\eqref{eq:radialregconditions}, then the conformal anti-de Sitter boundary conditions~\eqref{eq:defbdycond} are equivalent to Robin boundary conditions together with the requirement that the coupled radial Teukolsky quantity is the (renormalised, complex conjugate of the) Teukolsky-Starobinsky transformation. These Robin boundary conditions were first derived in~\cite{Dia.San13}. 
\begin{proposition}[Robin boundary conditions]\label{prop:RobinBC}
  Let $m\in\ZZZ$, $\om\in\RRR$ and $\ell \geq |m|$ and $R_{[\pm2],\om}^{m\ell}:(r_+,+\infty)_r\to\CCC$ be two smooth functions. The following three items are equivalent.
  \begin{enumerate}
  \item\label{item:BCadS} $R_{[\pm2],\om}^{m\ell}$ satisfy the Teukolsky equations~\eqref{eq:radialTeuk}, the regularity conditions~\eqref{eq:radialreggeneral} at the horizon and the conformal anti-de Sitter boundary conditions~\eqref{eq:radialbdyconditions}.
  \item\label{item:Robin+2} $R_{[+2],\om}^{m\ell}$ satisfies the Teukolsky equation~\eqref{eq:radialTeuk}, the regularity conditions~\eqref{eq:radialreggeneral} at the horizon and the following two (linearly dependent) \emph{Robin boundary conditions}
    \begin{align}\label{eq:RobinBC}
      \begin{aligned}
        (\wp_0 + Z +12iM\Xi\om) R_{[+2],\om}^{m\ell} -  i \wp_1 \pr_{r^\ast}R_{[+2],\om}^{m\ell} & \xrightarrow{r\to+\infty} 0,\\ \\
         i\wp_2R_{[+2],\om}^{m\ell}  + (-\wp_0 + Z +12iM\Xi\om) \pr_{r^\ast}R_{[+2],\om}^{m\ell} & \xrightarrow{r\to+\infty} 0,
      \end{aligned}
    \end{align}
    where the coefficients $\wp_0,\wp_1,\wp_2$ are the transition coefficients introduced in Lemma~\ref{lem:TStransmissioninfinity}, and we have
    \begin{align}\label{eq:TeukidZ}
      R_{[-2],\om}^{m\ell} & = \le(Z^{-1}R_{[-2],\om,c}^{m\ell}\ri)^\ast,
    \end{align}
    where
    \begin{align}\label{eq:defZ}
      Z = Z_\pm(m,\om,\la_{m\ell}^\om) := \pm \sqrt{\aleph(m,\om,\la^\om_{m\ell})} - 12 i M \Xi\om.
    \end{align}
  \item\label{item:Robin-2} Item~\ref{item:Robin+2} holds with $R_{[\pm2]}$ replaced by $R_{[\mp2]}$.
  \end{enumerate}
\end{proposition}
\begin{proof}
  Assume that Item~\ref{item:BCadS} holds. The asymptotics~\eqref{eq:radialregconditionsTSgeneral} of Lemma~\ref{lem:TShorasympt} for the Teukolsky-Starobinsky transformations together with the regularity conditions~\eqref{eq:radialreggeneral} imply that there exists $Z\in\CCC^\ast$ such that
  \begin{align}\label{eq:pfRobin1}
    R_{[-2]} & = \le(Z^{-1}R_{[-2],c}\ri)^\ast.
  \end{align}
  Using the conformal anti-de Sitter boundary conditions~\eqref{eq:radialbdyconditions} and relation~\eqref{eq:pfRobin1} we infer
  \begin{align}\label{eq:RobinBCus}
    \begin{aligned}
      & R_{[+2]} - R_{[-2]}^\ast  = R_{[+2]} - Z^{-1} R_{[-2],c} \to 0,\\
      & \pr_{r^\ast}R_{[+2]} + \pr_{r^\ast}R_{[-2]}^\ast = \pr_{r^\ast}R_{[+2]} + Z^{-1}\pr_{r^\ast}R_{[-2],c} \to 0,
    \end{aligned}
  \end{align}
  when $r\to+\infty$. The Robin boundary conditions~\eqref{eq:RobinBC} follow from~\eqref{eq:RobinBCus} and the limits~\eqref{eq:matrixlimits} for the Teukolsky-Starobinsky transformations. The Robin boundary conditions lead to non-trivial solutions only if they are linearly dependent. A direct computation shows that this only occurs provided that $Z$ takes one of two values~\eqref{eq:defZ}. This finishes the proof of Item~\ref{item:Robin+2}. That Item~\ref{item:Robin+2} implies Item~\ref{item:BCadS} is obtained by rewinding the above argument and using the results of Lemmas~\ref{lem:radTS} and~\ref{lem:TShorasympt} for Teukolsky-Starobinsky transformations. The equivalence between Item~\ref{item:BCadS} and Item~\ref{item:Robin-2} is obtained along the exact same lines replacing $R_{[\pm2]}$ by $R_{[\mp2]}$.  
\end{proof}

From the Robin conditions of Proposition~\ref{prop:RobinBC} we infer an alternative proof of the main theorem in the non-stationary case.
\begin{proof}[Alternative proof of Theorem~\ref{thm:main} in the $\om-m\om_+\neq0$ case]
  Define the Wronskian $W = W(R_{[+2]},R_{[-2]})$. The functions $R_{[+2]}$ and $R_{[-2]}$ satisfy the same second order ODE (see Equations~\eqref{eq:radialTeuk} and Lemma~\ref{lem:Teukeqrel})
  \begin{align*}
    0 & = \pr_{r^\ast}^2R +\pr_{r^\ast}\le(\log\le(\frac{(r^2+a^2)^4}{\De^2}\ri)\ri)\pr_{r^\ast}R + V^{m,\om}[\la^\om_{m\ell}]R,
  \end{align*}
  where $R=R_{[\pm2]}$. The Wronskian $W$ thus satisfies the first order ODE
  \begin{align*}
    0 & = \pr_{r^\ast}W + \pr_{r^\ast}\le(\log\le(\frac{(r^2+a^2)^4}{\De^2}\ri)\ri)W,
  \end{align*}
  and we infer that
  \begin{align}\label{eq:WsolODE}
    W(r) & = \frac{\De^2}{k^4(r^2+a^2)^4}\le(\lim_{r\to+\infty} W(r)\ri).
  \end{align}
  Using the conformally anti-de Sitter conditions~\eqref{eq:radialbdyconditions}, we have
  \begin{align}\label{eq:WsolODE2}
    \begin{aligned}
      \lim_{r\to+\infty} W(r) & = \lim_{r\to+\infty} \le(R_{[+2]}\pr_{r^\ast}R_{[-2]} - R_{[-2]}\pr_{r^\ast}R_{[+2]}\ri) \\
      & = -\lim_{r\to+\infty} \le(R_{[+2]}\pr_{r^\ast}R_{[+2]}^\ast + R_{[+2]}^\ast\pr_{r^\ast}R_{[+2]}\ri).
      \end{aligned}
  \end{align}
  Using that $\aleph>0$ by Lemma~\ref{lem:angTS}, we have
  \begin{align}\label{eq:conditionfinale}
    (\wp_0,Z_{\pm}+12iM\Xi\om,\wp_1,\wp_2) \neq (0,0,0,0).
  \end{align}
  Thus, at least one of the Robin boundary conditions~\eqref{eq:RobinBC} for $R_{[+2]}$ is non-trivial. This implies that either
  \begin{align*}
    R_{[+2]} & \to 0, && \text{or} & \pr_{r^\ast}R_{[+2]} + i \sigma R_{[+2]} & \to 0,
  \end{align*}
  when $r\to+\infty$ and with $\sigma\in\RRR$. Plugging these in~\eqref{eq:WsolODE},~\eqref{eq:WsolODE2}, we obtain that $W=0$ and that $R_{[+2]}$ and $R_{[-2]}$ are proportional. If $\om-m\om_+\neq0$, from the asymptotics conditions at the horizon~\eqref{eq:radialreggeneral}, this implies that $R_{[+2]}=R_{[-2]}=0$ which proves Theorem~\ref{thm:main} in the non-stationary case.
\end{proof}

\begin{remark}\label{rem:TSpositivitycriticalbis}
  As in the first proof of Section~\ref{sec:TSconslaws}, we crucially use the non-vanishing of the angular Teukolsky-Starobinsky constant $\aleph$. Indeed, we emphasise that the conditions~\eqref{eq:conditionfinale} appearing in the above proof are actually almost equivalent to the non-vanishing of $\aleph$. First, given the definitions of $\wp_0,\wp_1,\wp_2$ and $\aleph$ (or also using the radial Teukolsky-Starobinsky inversion formulas and~\eqref{eq:matrixlimits}), it is easy to show that $\wp_0^2 + \wp_1\wp_2 = \aleph$. It is also easy to check from their definitions that $\wp_1=0$ implies $\wp_2=0$. Thus, the conditions $(\wp_0,Z_{\pm}+12iM\Xi\om,\wp_1,\wp_2)=(0,0,0,0)$ actually reduce to $\wp_1=0$ and $\aleph=0$. Moreover, the vanishing of $\wp_1$ itself is not independent of the vanishing of $\aleph$: one can check that $\wp_1=0$ when the triplets $(m,\la,\Xi\om)$ take the seven special values~\eqref{eq:m0vanishingTS},~\eqref{eq:m1vanishingTS} of the proof of Lemma~\ref{lem:angTS}.
\end{remark}

\bibliographystyle{graf_GR_alpha}
\bibliography{graf_GR}

\newcommand{\etalchar}[1]{$^{#1}$}
\begin{thebibliography}{AMPW17}

\bibitem[ABBM19]{And.Bac.Blu.Ma19}
L.~Andersson, T.~B{\"a}ckdahl, P.~Blue, S.~Ma, \emph{Stability for linearized
  gravity on the {{Kerr}} spacetime}, arXiv:1903.03859  (2019), 99 pp.

\bibitem[AMPW17]{And.Ma.Pag.Whi17}
L.~Andersson, S.~Ma, C.~Paganini, B.~F. Whiting, \emph{Mode stability on the
  real axis}, J. Math. Phys. 58 (2017), no.~7, 072501.

\bibitem[BR11]{Biz.Ros11}
P.~Bizo{\'n}, A.~Rostworowski, \emph{Weakly {{Turbulent Instability}} of
  {{Anti}}--de {{Sitter Spacetime}}}, Phys. Rev. Lett. 107 (2011), no.~3,
  031102.

\bibitem[CDH{\etalchar{+}}14]{Car.Dia.Har.Leh.San14}
V.~Cardoso, {\'O}.~J.~C. Dias, G.~S. Hartnett, L.~Lehner, J.~E. Santos,
  \emph{Holographic thermalization, quasinormal modes and superradiance in
  {{Kerr-AdS}}}, J. High Energ. Phys. 2014 (2014), no.~4, 183.

\bibitem[CL01]{Car.Lem01}
V.~Cardoso, J.~P.~S. Lemos, \emph{Quasinormal modes of
  {{Schwarzschild}}--anti-de {{Sitter}} black holes: {{Electromagnetic}} and
  gravitational perturbations}, Phys. Rev. D 64 (2001), no.~8, 084017.

\bibitem[CM94]{Cha.Mos94}
C.~M. Chambers, I.~G. Moss, \emph{Stability of the {{Cauchy}} horizon in
  {{Kerr--de Sitter}} spacetimes}, Class. Quantum Grav. 11 (1994), no.~4,
  1035--1054.

\bibitem[CS24]{Cha.Smu24}
A.~Chatzikaleas, J.~Smulevici, \emph{Nonlinear periodic waves on the
  {{Einstein}} cylinder}, Analysis \& PDE 17 (2024), no.~7, 2311--2378.

\bibitem[CT21]{Cas.Tei21}
M.~Casals, R.~{Teixeira da Costa}, \emph{The {{Teukolsky}}--{{Starobinsky}}
  constants: Facts and fictions}, Class. Quantum Grav. 38 (2021), no.~16,
  165016.

\bibitem[CT22]{Cas.Tei22}
M.~Casals, R.~{Teixeira da Costa}, \emph{Hidden {{Spectral Symmetries}} and
  {{Mode Stability}} of {{Subextremal Kerr}}(-de {{Sitter}}) {{Black Holes}}},
  Comm. Math. Phys. 394 (2022), no.~2, 797--832.

\bibitem[DHR19a]{Daf.Hol.Rod19a}
M.~Dafermos, G.~Holzegel, I.~Rodnianski, \emph{Boundedness and decay for the
  {{Teukolsky}} equation on {{Kerr}} spacetimes {{I}}: The case a {$<<$}
  {{M}}}, Ann. PDE 5 (2019), no.~1, 118 pp.

\bibitem[DHR19b]{Daf.Hol.Rod19}
M.~Dafermos, G.~Holzegel, I.~Rodnianski, \emph{The linear stability of the
  {{Schwarzschild}} solution to gravitational perturbations}, Acta Math. 222
  (2019), no.~1, 1--214.

\bibitem[DHRT21]{Daf.Hol.Rod.Tay21}
M.~Dafermos, G.~Holzegel, I.~Rodnianski, M.~Taylor, \emph{The non-linear
  stability of the {{Schwarzschild}} family of black holes}, arXiv:2104.08222
  (2021), 519 pp.

\bibitem[Dol17]{Dol17}
D.~Dold, \emph{Unstable mode solutions to the {{Klein}}--{{Gordon}} equation in
  {{Kerr-anti-de Sitter}} spacetimes}, Comm. Math. Phys. 350 (2017), no.~2,
  639--697.

\bibitem[DRS09]{Dia.Rea.San09}
{\'O}.~J. Dias, H.~S. Reall, J.~E. Santos, \emph{Kerr-{{CFT}} and gravitational
  perturbations}, J. High Energy Phys. 2009 (2009), no.~08, 101--101.

\bibitem[DRS16]{Daf.Rod.Shl16}
M.~Dafermos, I.~Rodnianski, Y.~{Shlapentokh-Rothman}, \emph{Decay for solutions
  of the wave equation on {{Kerr}} exterior spacetimes {{III}}: {{The}} full
  subextremal case a {$<$} {{M}}}, Ann. Math. 183 (2016), no.~3, 787--913.

\bibitem[DS13]{Dia.San13}
{\'O}.~J.~C. Dias, J.~E. Santos, \emph{Boundary conditions for {{Kerr-AdS}}
  perturbations}, J. High Energ. Phys. 2013 (2013), no.~10, 156.

\bibitem[DSW15]{Dia.San.Way15}
{\'O}.~J.~C. Dias, J.~E. Santos, B.~Way, \emph{Black holes with a single
  {{Killing}} vector field: Black resonators}, J. High Energ. Phys. 2015
  (2015), no.~12, 1--10.

\bibitem[Dya16]{Dya16}
S.~Dyatlov, \emph{Spectral gaps for normally hyperbolic trapping}, Ann. inst.
  Fourier 66 (2016), no.~1, 55--82.

\bibitem[DZ19]{Dya.Zwo19}
S.~Dyatlov, M.~Zworski, \emph{Mathematical {{Theory}} of {{Scattering
  Resonances}}}, volume 200 of \emph{Graduate {{Studies}} in {{Mathematics}}},
  American Mathematical Society, Providence, Rhode Island (2019).

\bibitem[EK19]{Enc.Kam19}
A.~Enciso, N.~Kamran, \emph{Lorentzian {{Einstein}} metrics with prescribed
  conformal infinity}, J. Differential Geom. 112 (2019), no.~3.

\bibitem[Fan21]{Fan21}
A.~J. Fang, \emph{Nonlinear stability of the slowly-rotating {{Kerr-de Sitter}}
  family}, arXiv:2112.07183  (2021), 49 pp.

\bibitem[Fan22]{Fan22}
A.~J. Fang, \emph{Linear stability of the slowly-rotating {{Kerr-de Sitter}}
  family}, arXiv:2207.07902  (2022), 160 pp.

\bibitem[Fri95]{Fri95}
H.~Friedrich, \emph{Einstein equations and conformal structure: {{Existence}}
  of anti-de {{Sitter-type}} space-times}, Journal of Geometry and Physics 17
  (1995), no.~2, 125--184.

\bibitem[Gan14]{Gan14}
O.~Gannot, \emph{Quasinormal {{Modes}} for {{Schwarzschild}}--{{AdS Black
  Holes}}: {{Exponential Convergence}} to the {{Real Axis}}}, Comm. Math. Phys.
  330 (2014), no.~2, 771--799.

\bibitem[Gan18]{Gan18}
O.~Gannot, \emph{Elliptic boundary value problems for {{Bessel}} operators,
  with applications to anti-de {{Sitter}} spacetimes}, Comptes Rendus
  Mathematique 356 (2018), no.~10, 988--1029.

\bibitem[GHIW16]{Gre.Hol.Ish.Wal16}
S.~R. Green, S.~Hollands, A.~Ishibashi, R.~M. Wald, \emph{Superradiant
  instabilities of asymptotically anti-de {{Sitter}} black holes}, Class.
  Quantum Grav. 33 (2016), no.~12, 125022.

\bibitem[HHV21]{Haf.Hin.Vas21}
D.~H{\"a}fner, P.~Hintz, A.~Vasy, \emph{Linear stability of slowly rotating
  {{Kerr}} black holes}, Invent. math. 223 (2021), no.~3, 1227--1406.

\bibitem[Hin21]{Hin21}
P.~Hintz, \emph{Mode stability and shallow quasinormal modes of {{Kerr-de
  Sitter}} black holes away from extremality}, arXiv:2112.14431  (2021), 91 pp.

\bibitem[HLSW20]{Hol.Luk.Smu.War20}
G.~Holzegel, J.~Luk, J.~Smulevici, C.~Warnick, \emph{Asymptotic properties of
  linear field equations in anti-de {{Sitter}} space}, Comm. Math. Phys. 374
  (2020), no.~2, 1125--1178.

\bibitem[Hol09]{Hol09}
G.~Holzegel, \emph{On the massive wave equation on slowly rotating {{Kerr-AdS}}
  spacetimes}, Comm. Math. Phys. 294 (2009), no.~1, 169--197.

\bibitem[HR99]{Haw.Rea99}
S.~W. Hawking, H.~S. Reall, \emph{Charged and rotating {{AdS}} black holes and
  their {{CFT}} duals}, Phys. Rev. D 61 (1999), no.~2, 024014.

\bibitem[HS13]{Hol.Smu13}
G.~Holzegel, J.~Smulevici, \emph{Decay properties of {{Klein-Gordon}} fields on
  {{Kerr-adS}} spacetimes}, Comm. Pure Appl. Math. 66 (2013), no.~11,
  1751--1802.

\bibitem[HS14]{Hol.Smu14}
G.~Holzegel, J.~Smulevici, \emph{Quasimodes and a lower bound on the uniform
  energy decay rate for {{Kerr}}--{{AdS}} spacetimes}, Anal. PDE 7 (2014),
  no.~5, 1057--1090.

\bibitem[HV18]{Hin.Vas18}
P.~Hintz, A.~Vasy, \emph{The global non-linear stability of the {{Kerr}}--de
  {{Sitter}} family of black holes}, Acta Mathematica 220 (2018), no.~1,
  1--206.

\bibitem[Kha83]{Kha83}
U.~Khanal, \emph{Rotating black hole in asymptotic de {{Sitter}} space:
  {{Perturbation}} of the space-time with spin fields}, Phys. Rev. D 28 (1983),
  no.~6, 1291--1297.

\bibitem[KS20]{Kla.Sze20}
S.~Klainerman, J.~Szeftel, \emph{Global nonlinear stability of
  {{Schwarzschild}} spacetime under polarized perturbations}, Annals of Math.
  Studies 210 (2020), xviii+856 pp.

\bibitem[KS21]{Kla.Sze21}
S.~Klainerman, J.~Szeftel, \emph{Kerr stability for small angular momentum},
  arXiv:2104.11857  (2021), 799 pp.

\bibitem[Mav21a]{Mav21}
G.~Mavrogiannis, \emph{Morawetz estimates without relative degeneration and
  exponential decay on {{Schwarzschild-de Sitter}} spacetimes},
  arXiv:2111.09494  (2021), 29 pp.

\bibitem[Mav21b]{Mav21a}
G.~Mavrogiannis, \emph{Quasilinear wave equations on {{Schwarzschild-de
  Sitter}}}, arXiv:2111.09495  (2021), 40 pp.

\bibitem[Mil24]{Mil24}
P.~Millet, \emph{Geometric background for the {{Teukolsky}} equation
  revisited}, Rev. Math. Phys. 36 (2024), no.~03, p. 2430003.

\bibitem[MN02]{Mos.Nor02}
I.~G. Moss, J.~P. Norman, \emph{Gravitational quasinormal modes for anti-de
  {{Sitter}} black holes}, Class. Quantum Grav. 19 (2002), no.~8, 2323--2332.

\bibitem[Mos20]{Mos20}
G.~Moschidis, \emph{A proof of the instability of {{AdS}} for the
  {{Einstein-null}} dust system with an inner mirror}, Analysis \& PDE 13
  (2020), no.~6, 1671--1754.

\bibitem[Mos23]{Mos23}
G.~Moschidis, \emph{A proof of the instability of {{AdS}} for the
  {{Einstein-massless Vlasov}} system}, Invent. math. 231 (2023), no.~2,
  467--672.

\bibitem[NP62]{New.Pen62}
E.~Newman, R.~Penrose, \emph{An approach to gravitational radiation by a method
  of spin coefficients}, Journal of Mathematical Physics 3 (1962), no.~3,
  566--578.

\bibitem[SC74]{Sta.Chu74}
A.~A. Starobinsky, S.~M. Churilov, \emph{Amplification of electromagnetic and
  gravitational waves scattered by a rotating "black hole"}, J. Exp. Theor.
  Phys. 38 (1) (1974), 3--11.

\bibitem[Sch16]{Sch16}
V.~Schlue, \emph{Decay of the {{Weyl}} curvature in expanding black hole
  cosmologies}, arXiv:1610.04172  (2016), 124 pp.

\bibitem[{Shl}14]{Shl14}
Y.~{Shlapentokh-Rothman}, \emph{Exponentially {{Growing Finite Energy
  Solutions}} for the {{Klein}}--{{Gordon Equation}} on {{Sub-Extremal Kerr
  Spacetimes}}}, Comm. Math. Phys. 329 (2014), no.~3, 859--891.

\bibitem[{Shl}15]{Shl15}
Y.~{Shlapentokh-Rothman}, \emph{Quantitative {{Mode Stability}} for the {{Wave
  Equation}} on the {{Kerr Spacetime}}}, Ann. Henri Poincar{\'e} 16 (2015),
  no.~1, 289--345.

\bibitem[ST20]{Shl.Tei20}
Y.~{Shlapentokh-Rothman}, R.~{Teixeira da Costa}, \emph{Boundedness and decay
  for the {{Teukolsky}} equation on {{Kerr}} in the full subextremal range
  {\textbar}a{\textbar} {$<$} {{M}}: Frequency space analysis},
  arXiv:2007.07211  (2020), 125 pp.

\bibitem[{Tei}20]{Tei20}
R.~{Teixeira da Costa}, \emph{Mode stability for the {{Teukolsky}} equation on
  extremal and subextremal {{Kerr}} spacetimes}, Comm. Math. Phys. 378 (2020),
  no.~1, 705--781.

\bibitem[Teu72]{Teu72}
S.~A. Teukolsky, \emph{Rotating {{Black Holes}}: {{Separable Wave Equations}}
  for {{Gravitational}} and {{Electromagnetic Perturbations}}}, Phys. Rev.
  Lett. 29 (1972), no.~16, 1114--1118.

\bibitem[TP74]{Teu.Pre74}
S.~A. Teukolsky, W.~H. Press, \emph{Perturbations of a rotating black hole.
  {{III}} - {{Interaction}} of the hole with gravitational and electromagnetic
  radiation}, ApJ 193 (1974), 443.

\bibitem[Vas13]{Vas13}
A.~Vasy, \emph{Microlocal analysis of asymptotically hyperbolic and {{Kerr-de
  Sitter}} spaces (with an appendix by {{Semyon Dyatlov}})}, Invent. math. 194
  (2013), no.~2, 381--513.

\bibitem[Wal73]{Wal73}
R.~M. Wald, \emph{On perturbations of a {{Kerr}} black hole}, J. Math. Phys. 14
  (1973), no.~10, 10 pp.

\bibitem[Wal78]{Wal78}
R.~M. Wald, \emph{Construction of {{Solutions}} of {{Gravitational}},
  {{Electromagnetic}}, or {{Other Perturbation Equations}} from {{Solutions}}
  of {{Decoupled Equations}}}, Phys. Rev. Lett. 41 (1978), no.~4, 203--206.

\bibitem[War15]{War15}
C.~M. Warnick, \emph{On quasinormal modes of asymptotically anti-de {{Sitter}}
  black holes}, Comm. Math. Phys. 333 (2015), no.~2, 959--1035.

\bibitem[Whi89]{Whi89}
B.~F. Whiting, \emph{Mode stability of the {{Kerr}} black hole}, J. Math. Phys.
  30 (1989), 6.

\end{thebibliography}
\end{document}